\def \VersionAuthor {}
	\def\DraftVersion{}
	\newcommand{\AuthorVersion}[1]{#1}
	\newcommand{\FinalVersion}[1]{}
	\newcommand{\AuthorVersion}[1]{}
	\newcommand{\FinalVersion}[1]{#1}
	\let\VersionWithComments\undefined
	\let\WithReply\undefined
	\newcommand{\LongVersion}[1]{\ifdefined\VersionWithComments{\color{red!40!black}#1}\else#1\fi}
	\newcommand{\LongVersion}[1]{\ifdefined\VersionWithComments{\color{black!40}#1}\fi}
\tikzstyle{intermediate}=[dotted]
\definecolor{darkblue}{rgb}{0, 0, 0.7}
\footnotesize\printfield{doi}}
\newtheorem{assumption}{Assumption}
\crefname{line}{\text{line}}{\text{lines}} %
\crefname{assumption}{\text{Assumption}}{\text{Assumptions}} %
\newcommand*{\Shift}{0.6ex}
 	\definecolor{colorok}{RGB}{80,80,150}
	\definecolor{colorok}{RGB}{0,0,0}
\definecolor{colorNOK}{RGB}{1,0,0}
	\newcommand{\marginX}{\marginnote{\huge{\quad\quad\textbf{!}\quad\quad}}}
	\newcommand{\ea}[1]{\mbox{}{\color{blue}\marginX{}\textbf{[\'Etienne}: #1]}}
	\newcommand{\tth}[1]{\mbox{}{\color{red}\marginX{}\textbf{[Tian Huat}: #1]}}
	\newcommand{\ls}[1]{\mbox{}{\color{orange}\marginX{}\textbf{[\'Liu Shuang}: #1]}}
	\newcommand{\instructions}[1]{{\color{red}\marginX{}\textbf{[Instructions: ``#1'']}}}
	\newcommand{\reviewer}[2]{\mbox{}{\color{red}\marginX{}\textbf{[Reviewer #1}: ``#2'']}}
	\newcommand{\todo}[1]{\mbox{}{\color{red}{\marginX{}\textbf{TODO}\ifx#1\\\else:\ \fi #1}}} %
	\newcommand{\instructions}[1]{}
	\newcommand{\ea}[1]{}
	\newcommand{\ls}[1]{}
	\newcommand{\tth}[1]{}
	\newcommand{\reviewer}[2]{}
	\newcommand{\todo}[1]{}
\newcommand{\dLTC}{{\color{colorok}sLTC}}
\newcommand{\rLTC}{{\color{colorok}rLTC}}
\newcommand{\algoDLTC}{{\color{colorok}\ensuremath{\mathsf{synthSLTC}}}}
\newcommand{\algoRLTC}{{\color{colorok}\ensuremath{\mathsf{synthRLTC}}}}
\newcommand{\algoCheckSat}{{\color{colorok}\ensuremath{\mathsf{checkSat}}}}
\newcommand{\SynthesizeConstraint}{{\color{colorok}\ensuremath{\mathsf{synthRec}}}}
\newcommand{\assign}{\leftarrow}
\newcommand{\globalDelay}{T_{G}}
\newcommand{\localDelay}{C_{L}}
\newcommand{\issat}{Is\_Sat}
\newcommand{\sequential}{\,{\textbf{;}}\,}
\newcommand{\conditionArrow}{\Rightarrow}
\newcommand{\Process}{{\color{colorok}\ensuremath{P}}}
\newcommand{\NProcessInit}{N_0}
\newcommand{\ProcessInit}{\Process_0}
\newcommand{\ConstraintInit}{{\color{colorok}\ensuremath{\Constraint_0}}}
\newcommand{\Constraint}{{\color{colorok}\ensuremath{C}}}
\newcommand{\Delay}{{\color{colorok}\ensuremath{D}}}
\newcommand{\mystate}{{\color{colorok}\ensuremath{s}}}
\newcommand{\Rules}{{\color{colorok}\ensuremath{\mathsf{Rules}}}}
\newcommand{\Sequences}[1]{{\color{colorok}\ensuremath{\mathsf{Sequences}(#1)}}}
\newcommand{\ruleseq}{{\color{colorok}\ensuremath{\mathit{seq}}}}
\newcommand{\Kresult}{{\color{colorok}\ensuremath{K}}}
\newcommand{\adaptMech}{AM}
\newcommand{\prune}[2]{\mathit{prune}_{#1}(#2)}
\newcommand{\minterleave}[2]{#1 ||| #2}
\newcommand{\msequence}[2]{#1 \sequential #2}
\newcommand{\mconditional}[3]{#1 \dres #2 \rres #3}
\newcommand{\lreceive}{rec}
\newcommand{\lsInvoke}{sInv}
\newcommand{\laInvoke}{aInv}
\newcommand{\lreply}{reply}
\newcommand{\lstop}{Stop}
\newcommand{\rSInv}{rSInv}
\newcommand{\rPickOne}{rPickM}
\newcommand{\rPickTwo}{rPickA}
\newcommand{\rReply}{rReply}
\newcommand{\rRec}{rRec}
\newcommand{\rAInv}{rAInv}
\newcommand{\rCondOne}{rCond1}
\newcommand{\rCondTwo}{rCond2}
\newcommand{\rCondThree}{rCond3}
\newcommand{\rCondFour}{rCond4}
\newcommand{\rSeqOne}{rSeq1}
\newcommand{\rSeqTwo}{rSeq2}
\newcommand{\rFlowOne}{rFlow1}
\newcommand{\rFlowTwo}{rFlow2}
\newcommand{\receive}[1]{\lreceive(#1)}
\newcommand{\sInvoke}[1]{\lsInvoke(#1)}
\newcommand{\aInvoke}[1]{\laInvoke(#1)}
\newcommand{\reply}[1]{\lreply(#1)}
\newcommand{\mpick}[4]{pick(#1 \conditionArrow #2, alrm(#3) \conditionArrow #4)}
\newcommand{\mpicknew}[4]{pick(\myuplus\limits_{i=1}^{n}#1 \conditionArrow #2,\myuplus\limits_{j=1}^{k} alrm(#3) \conditionArrow #4)}
\newcommand{\newsInfo}{Stock Market Indices Service}
\newcommand{\newsInfoShortI}{SMIS}
\newcommand{\newsInfoShort}{SMIS}
\newcommand{\free}{\textit{FS}}
\newcommand{\paid}{\textit{PS}}
\newcommand{\dataS}{\textit{DS}}
\newcommand{\paratitle}[1]{\noindent\textbf{#1}. }
\newcommand{\Steps}[0]{ {\delta} }
\newcommand{\timelaps}[1]{#1^\uparrow}
\newcommand{\sequence}[1]{\langle #1 \rangle}
\newcommand{\activation}{\textit{Act}} %
\newcommand{\LTS}[1]{LTS_{#1}}
\newcommand{\LTSPlumpI}[1]{{\color{colorNOK}LTS'}}
\newcommand{\action}{{\color{colorok}\ensuremath{a}}}
\newcommand{\Clock}{X} %
\newcommand{\ClockSeq}{\ensuremath{\mathit{ClkSeq}}} %
\newcommand{\clock}{x} %
\newcommand{\clockcard}{h}
\newcommand{\clockval}{w} %
\newcommand{\compOp}{\bowtie}
\newcommand{\EventsSet}{{\color{colorok}\Sigma}} %
\newcommand{\Param}{{\color{colorok}\Lambda}} %
\newcommand{\param}{{\color{colorok}\lambda}} %
\newcommand{\paramCard}{{\color{colorok}m}} %
\newcommand{\npProcesses}{\mathcal{P}_{np}}
\newcommand{\Processes}{{\color{colorok}\ensuremath{\mathcal{P}}}}
\newcommand{\CS}{{\color{colorok}\ensuremath{\mathsf{CS}}}}
\newcommand{\States}{{\color{colorok}\ensuremath{S}}}
\newcommand{\pval}{{\color{colorok}\pi}} %
\newcommand{\Service}{{\color{colorok}\ensuremath{\mathsf{S}}}}
\newcommand{\Servicei}[1]{\Service_{#1}}
\newcommand{\sinit}{{\color{colorok}\ensuremath{s_0}}} %
\newcommand{\Succ}{\textit{succ}}
\newcommand{\ActiveClocks}{\mathit{aclk}}
\newcommand{\funIdle}{\textit{idle}}
\newcommand{\varrun}{\ensuremath{\rho}} %
\newcommand{\Val}{\textit{Valuations}}
\newcommand{\Valuation}{{\color{colorok}\ensuremath{v}}}
\newcommand{\Var}{{\color{colorok}\ensuremath{\mathcal{V}}}}
\newcommand{\variable}{{\color{colorok}\ensuremath{y}}} %
\newcommand{\varUnitialized}{{\color{colorok}\ensuremath{\bot}}}
\newcommand{\Vinit}{\Valuation_0} %
\newcommand{\projectP}[1]{\ensuremath{{#1}{\downarrow_{\Param}}}}
\newcommand{\Csharp}{C$\sharp$}
\newcommand{\ToolBPEL}{\textsc{Selamat}}
\newcommand{\grandn}{{\mathbb N}}
\newcommand{\grandq}{{\mathbb Q}}
\newcommand{\grandqplus}{{\mathbb Q}_{\geq 0}}
\newcommand{\grandqplusNotZero}{\grandq_{> 0}}
\newcommand{\grandr}{{\mathbb R}}
\newcommand{\grandrplus}{\grandr_{\geq 0}}
\newcommand{\setX}{\mathcal{C}_{\Clock}}
\newcommand{\setP}{\mathcal{C}_{\Param}}
\newcommand{\setXP}{\mathcal{C}_{\Clock \cup \Param}}
\newcommand{\onmsg}[2]{#1 \conditionArrow #2}
\newcommand{\onalrm}[2]{alrm(#1) \conditionArrow #2}
\newcommand{\setLX}{\mathcal{L}_{\Clock}}
\newcommand{\setLP}{\mathcal{L}_{\Param}}
\newcommand{\setLXP}{\mathcal{L}_{\Clock \cup \Param}}
\newcommand{\setNNCCP}{\mathcal{NC}_{\Param}}
\newcommand{\xml}[1]{\texttt{<#1>}}
\newcommand{\code}[1]{$\mathtt{#1}$}
\newcommand{\compSS}{\textit{SS}}
\newcommand{\compLS}{\textit{LS}}
\newcommand{\compIS}{\textit{IS}}
\newcommand{\compMS}{\textit{MS}}
\newcommand{\compBS}{\textit{BS}}
\newcommand{\compFS}{\textit{\free}}
\newcommand{\compHS}{\textit{HS}}
\newcommand{\modSM}{\textit{SM}}
\newcommand{\modRE}{\textit{RE}}
\newcommand{\compTS}{\textit{TS}}
\newcommand{\compWS}{\textit{WS}}
\newcommand{\compDS}{\textit{\dataS}}
\newcommand{\compCom}{$DS_{com}$}
\newcommand{\compSea}{$DS_{sea}$}
\newcommand{\compSet}{\textit{E}}
\newcommand{\mopick}{\begin{tikzpicture}[
oplus/.style={draw,circle, text width=0.004em,
  postaction={path picture={%
    \draw[black]
      (path picture bounding box.south west) -- (path picture bounding box.north east)
      (path picture bounding box.north west) -- (path picture bounding box.south east);}}}
]
\node [oplus] (d) {};
\end{tikzpicture}}
\newcommand{\mdecision}{\begin{tikzpicture}[
decision/.style = {diamond, draw,
    text width=0.004em,inner sep=2pt}
]
\node [decision] (d) {};
\end{tikzpicture}}
\newcommand{\eg}{\textcolor{colorok}{\textit{e.\,g.},}\xspace}
\newcommand{\ie}{\textcolor{colorok}{\textit{i.\,e.},}\xspace}
\newcommand{\viz}{\textcolor{colorok}{\textit{viz.},}\xspace}
\DeclareMathOperator*{\myuplus}{\uplus}
\begin{document}

\title{Automated synthesis of local time requirement for service composition\thanks{%
\AuthorVersion{%
	This is a pre-print of an article published in the \href{http://www.sosym.org/}{International Journal on Software and Systems Modeling (SoSyM)}.
	The final authenticated version is available online at:
			\href{https://www.doi.org/10.1007/s10270-020-00787-5}{10.1007/s10270-020-00787-5}.
}
	Étienne André, Jin Song Dong and Yang Liu are partially supported by CNRS STIC-Asie project CATS (``Compositional Analysis of Timed Systems'').
	Étienne André is partially supported by the ANR national research program ANR-14-CE28-0002 PACS (``Parametric Analyses of Concurrent Systems'').
	Étienne André and Jun Sun are partially supported by the ANR-NRF French-Singaporean research program ProMiS (ANR-19-CE25-0015).
}
}

\author{%
	Étienne André
	\and
	Tian Huat Tan
	\and
	Manman Chen
	\and
	Shuang Liu
	\and
	Jun Sun
	\and
	Yang Liu
	\and
	Jin Song Dong
}

\authorrunning{É.\ André, T.H.\ Tan, M.\ Chen, S.\ Liu, J.\ Sun, Y.\ Liu, and J.S.\ Dong} %

\institute{%
	É. André \at
	Université de Lorraine, CNRS, Inria, LORIA, Nancy, France
	\and
	T. H. Tan \at
	IBM, Singapore %
	\and
	M. Chen \at
	Autodesk, Singapore %
	\and
	S. Liu \at
	School of Software, Tianjin University, China
	\and
	J. Sun \at
	Singapore Management University, Singapore
	\and
	Y. Liu \at
	Nanyang Technological University, Singapore
	\and
	J. S. Dong \at
	National University of Singapore\\
	Griffith University, Australia
}

\ifdefined\VersionAuthor
	\date{}
\else
	\date{Received: date / Accepted: date}
\fi

\maketitle

\ifdefined\DraftVersion
	\textcolor{red}{\textbf{Draft version (date: \today{})}}
\fi

\ifdefined\VersionWithComments
	\tableofcontents{}
\fi

\begin{abstract}
	Service composition aims at achieving a business goal by composing existing service-based applications or components.
	The response time of a service is crucial especially in time critical business environments, which is often stated as a clause in service level agreements %
	between service providers and service users.
	To meet the guaranteed response time requirement of a composite service, it is important to select a feasible set of component services such that their response time will collectively satisfy the response time requirement of the composite service.
	In this work, we use the BPEL modeling language, that aims at specifying Web services.
	We extend it with timing parameters, and equip it with a formal semantics.
	Then, we propose a fully automated approach to synthesize the response time requirement of component services modeled using BPEL, in the form of a constraint on the local response times.
	The synthesized requirement will guarantee the satisfaction of the global response time requirement, statically or dynamically.
	We implemented our work into a tool, \ToolBPEL{}, and performed several experiments to evaluate the validity of our approach.
	\keywords{Web service composition, Parameter synthesis, Modeling Web services, Formal semantics, BPEL, Parametric model checking}
\end{abstract}

\ifdefined\VersionWithComments
	\textcolor{red}{\textbf{This is the version with comments. To disable comments, comment out line~3 in the \LaTeX{} source.}}
\fi

\section{Introduction and motivation}\label{sec:intro}
Service-oriented architecture is a paradigm where building blocks are used as services for software applications. Services encapsulate their functionalities, information, and make them available through a set of operations accessible over a network infrastructure using standards like SOAP~\cite{soap12} and WSDL~\cite{wsdl20}.
To make use of a set of services to achieve a business goal, service composition languages such as BPEL (Business Process Execution Language)~\cite{WSBPEL20} have been proposed.
A service that is composed by other services is called a \emph{composite} service, and services that the composite service makes use of are called \emph{component} services.

The requirement on the service response time is often an important clause in service-level agreements (SLAs) especially in business where timing is critical.
An SLA is a contract between service consumers and service providers specifying the expected quality of service (QoS) level.
Henceforth, we refer to the response time requirement of composite services as \emph{global time requirement}, and to the set of constraints on the response times of the component services as \emph{local time requirement}.
The response time of a composite service is highly dependent on that of each component service.
It is therefore crucial to derive local time requirements (\ie{} requirements for the component services) from the global time requirement, so that it will help in the selection of component services when building a composite service while satisfying the response time requirement.

An additional motivation for our work is that of micro-services.
As pointed out by~\cite{TCSLADX16}, many big players in the market (\eg{} Netflix, Amazon, and Microsoft Azure) have adopted microservice architecture \cite{Microservices} by decomposing their existing monolithic applications into smaller, and highly decoupled services (also known as microservices).
These services are then composed for fulfilling their business requirements.
For example, Netflix decomposed their monolithic DVD rental application into services that work together, and that stream digital entertainment to millions of Netflix customers every day.
Services of Netflix are hosted in a cloud provided by Amazon EC2 \cite{AmazonEC2}, which offers about 40 instance types\LongVersion{ (\eg{} t2.micro, c4.4xlarge) for selection.
All instance types provide the same functionality while having different QoS}.
The problem of composition of Web services with a large set of microservices is more and more relevant now, as the micro-services are getting more popular than ever (see \eg{} \cite{middlewareblog17,ST19}).
This justifies the use of techniques for which different services can be compared to and eventually selected.
Service-oriented architecture and micro-service architectures are conceptually similar: service-oriented architecture is a term that is used earlier and also widely used in literature. Micro-service architecture is more of a newer term that is used and practise widely in current industry, for the purpose of agile development.
    (For detailed comparison, see \eg{} \cite{CDP17}.)
The methods developed here are applicable to both service-oriented architecture and micro-service architecture.
\label{newtext:microservice}

\tth{
We could use this as an argument too to say that indeed this is a practical/relevant problem/solution, and it is even more relevant now where the micro-services are getting more popular than ever  (possibly some references like: could be used)
}

Consider an example of a stock indices service, which has an SLA with the subscribed users requiring that the stock indices shall be returned within three seconds upon request. The stock indices service makes use of several component services, including a paid service, for requesting stock indices.
The stock indices service provider would be interested in knowing the local time requirement of the component services, while satisfying the global response time requirement.
To avoid discarding any service candidates that might be part of a feasible composition, the synthesized local time requirement needs to be as \emph{weak} as possible, \ie{} to maintain as many combinations of local time requirements as possible.
This is crucial as having a faster service might incur a higher cost.

\begin{figure}[tb]
	\centering
	\includegraphics[width=2.5in]{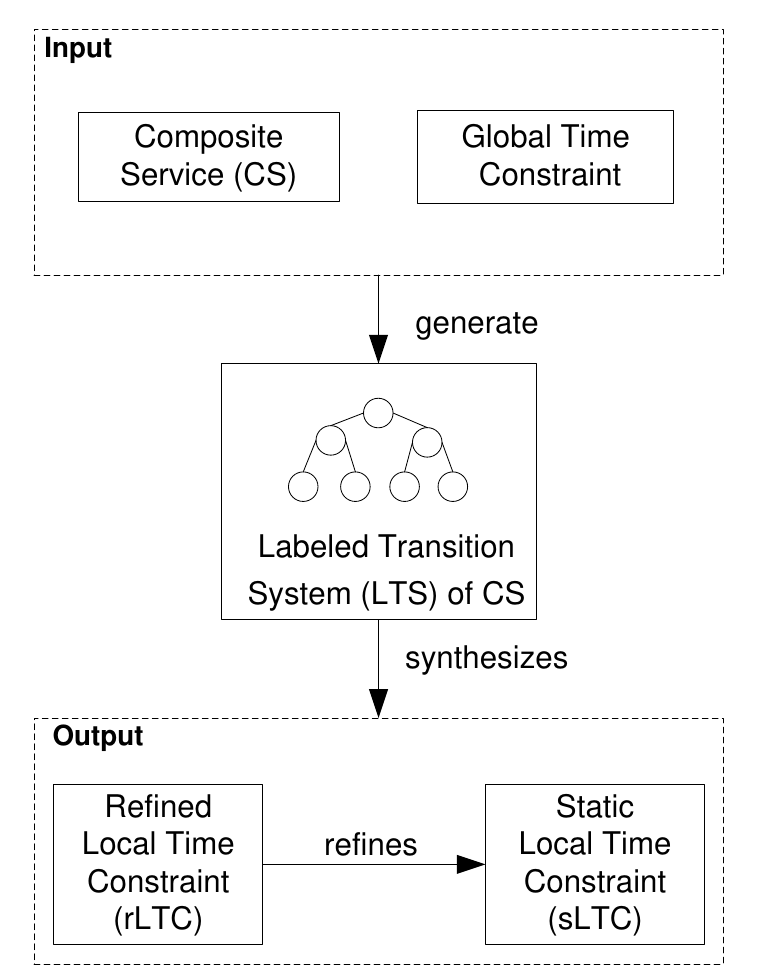}

	\caption{General approach}
	\label{fig:workflow}
\end{figure}

\subsection{Contribution}
In this paper, we present a fully automated technique to perform a rigorous model-based analysis of Web services, in order to synthesize the local time requirement in composite services.
Our approach performs an analysis of the composite service model behavior, using techniques inspired by parameter synthesis for timed systems.
Our synthesis approach does not only avoid bad scenarios in the service composition, but also guarantees the fulfillment of the global time requirement.

We use as a formalism %
BPEL,
which is a \emph{de-facto} standard language for specifying service composition.
BPEL supports control flow structures that involve complex timing constructs (\eg{} \xml{pick} control structure) and concurrent execution of activities (\eg{} \xml{flow} control structure).
Due to the non-determinism in both time and control flow, it is unknown which execution path will be executed at runtime.
Such a combination of timing constructs, concurrent calls to external services, and complex control structures, makes it a challenge to synthesize the local time requirement.
More precisely, response times of component services can be dependant; therefore, constraint between services may be of the form, \eg{} $t_\mathit{fs} > t_\mathit{hs}$ (for two parametric component service times), rendering the problem quite delicate.
In addition, there may be multiple possibilities of component services' response times that are satisfiable.
This can be particularly delicate to find out with only manual human inspection.\label{newtext:minorrevision}

\cref{fig:workflow} illustrates the main steps of our approach for synthesizing local time requirements.
The required inputs are the specification of the composite service, and its global time requirement.
The output will be local time requirements (at design time, and then at runtime) given in the form of a linear constraint.

We first propose a formal semantics for BPEL composite services augmented with timing parameters, \ie{} constants, the value of which is not known at design time;
this symbolic semantics is given the form of a labeled transition system (LTS).

Based on the LTS resulting from the input composite service, we then propose an approach to synthesize local time requirements of component services, represented as a (linear) constraint, which we refer to as the \emph{local time constraint}.
During the design phase of a composite service, the local time constraint is synthesized based on \emph{all} possible execution paths of the model, since it is unknown which execution path will be executed at runtime (this will depend on the dynamic behavior of the system%
).
The local time constraint of a composite service that is synthesized during the design time is called the \emph{static local time constraint} (hereafter \dLTC{}).

The synthesized \dLTC{} has several advantages.
Firstly, when creating a new composite service, it allows the selection of feasible services from a large pool of services with similar functionalities but different local response times.
Secondly, service designers can use the synthesized result to avoid over-approximations on the local response times, which may lead the service provider to purchase a service at a higher cost, while a service at a lower cost with a slower response time might have been sufficient to guarantee the global time requirement.
Thirdly, the local time requirements serve as a safe guideline when component services need to be substituted or new services need to be introduced.

\label{citetext:highly-evolving}Due to the highly evolving and dynamic environment which the composite service is running in, the design time assumptions for Web service composition, even if they are initially accurate, may later change at runtime.
For example, the execution time of a component service could violate the \dLTC{} due to reasons such as network congestion.
Nevertheless, this does not necessarily imply that the composite service will not satisfy the global time requirement.
	Indeed, the \dLTC{} is synthesized based on all possible execution paths at design time, whereas only one path will be executed at runtime.
	At runtime, some of the execution paths can be eliminated.
Therefore, we can use the runtime information to refine the \dLTC{} to make it weaker---which results in a more relaxed constraint\LongVersion{ on the response times of the component services}.
We refer to the \dLTC{} refined at runtime as the \emph{refined local time constraint} (hereafter \rLTC{}).
The \rLTC{} is used to decide whether the current composite service can still satisfy the global time requirement, despite some unplanned issues such as network congestion.

Our contributions are summarized as follows.
\begin{enumerate}
	\item We augment the BPEL modeling language with timing parameters, and we equip it with a formal semantics in the form of a labeled transition system.
	\item Given a composite service modeled using BPEL, we develop a sound method for synthesizing the local time requirement in the form of a set of constraints, which can be applied at the design stage of service composition.

    \item We introduce a refinement procedure on the \dLTC{} of a composite service based on the runtime information, which results in a more relaxed \rLTC{}. The \rLTC{} can be used to verify whether the composite service could still eventually satisfy the global time requirement at runtime.

	\item  We implement our algorithms into a tool \ToolBPEL{}.
	We then conduct experiments on several examples.
	The results show that the \rLTC{} can indeed help to improve the accuracy of the \dLTC{}.
	In addition, we show that the runtime adaptation does not incur much overhead in practice.
\end{enumerate}

\subsection{About this manuscript}

This manuscript is an extended version of~\cite{TanICSE13}.
We in fact rewrote most of the manuscript for a better readability.
The most notable differences between this manuscript and~\cite{TanICSE13} are:
\begin{enumerate}
	\item we replaced the formerly defined ``AOLTS'' with what we believe to be a simpler and more elegant presentation of labeled transition systems (LTS);
	\item we added details on our implementation and used more service composition examples; and,
	\item most importantly, we added a refinement procedure that attempts to meet the global time requirement at runtime even when the constraint computed statically is violated (\cref{sec:rrefine}).
\end{enumerate}

\subsection{Outline}
The rest of this paper is structured as follows.
\cref{sec:timeBpelExample} introduces a timed BPEL running example. %
\cref{sec:prelim} provides the necessary definitions and terminologies.
\cref{sec:dynamicAnalysis} introduces our formal semantics for BPEL extended with timing parameters.
\cref{sec:syncConstraint} presents the synthesis algorithms for \dLTC{}.
\cref{sec:rrefine} introduces \rLTC{}, and its usage for runtime adaptation of a service composition.
\cref{sec:evaluation} \LongVersion{presents our implementation and }evaluates our approach with four service composition examples.
\cref{sec:relatedWork} reviews related works.
Finally, \cref{sec:conclusion} concludes the paper, and outlines future work.

\section{A BPEL example with timed requirements}\label{sec:timeBpelExample}

\begin{figure}[t]
{\centering
\scalebox{.9}{
\begin{tikzpicture}[
oplus/.style={draw,circle, text width=0.5em,
  postaction={path picture={%
    \draw[black]
      (path picture bounding box.south west) -- (path picture bounding box.north east)
      (path picture bounding box.north west) -- (path picture bounding box.south east);}}},
block/.style = {rectangle, draw,
    text width=8em,align=center, minimum height=1.8em},
decision/.style = {diamond, draw,
    text width=1.5em,inner sep=0pt},
line/.style = {draw,thick, -latex'},
node distance=1.1cm and 0.4cm
]

\node (blank){};
\node [block,below of=blank, yshift=-0.25] (a) {Receive User};
\node [block, below of=a] (b) {Sync. Invoke DS};
\node [decision, below of=b] (c) {};
\node [block, below left of=c,xshift=-1.7cm, yshift=-0.3cm] (cl) {$\tick$ Reply indices};
\node [block, below right of=c,xshift=1.7cm, yshift=-0.3cm] (c2) {ASync.\ Invoke $\free$};
\node [oplus, below of=c2] (d) {};
\node [block, below left of=d,xshift=-1.5cm, yshift=-0.3cm](d1){OnMessage $\free$};
\node [block, text width=10em, below right of=d,xshift=1.5cm, yshift=-0.3cm](d2){OnAlarm 1 second};
\node [block, below of=d1] (e1) {$\tick$ Reply indices};
\node [block, text width=10em, below of=d2] (e2) {ASync. Invoke $\paid$};
\node [oplus, below of=e2] (f) {};
\node [block, below left of=f,xshift=-1.5cm, yshift=-0.3cm](f1){OnMessage $\paid$};
\node [block, text width=10em, below right of=f,xshift=1.5cm, yshift=-0.3cm](f2){OnAlarm 1 second};
\node [block, below of=f1] (g1) {$\tick$ Reply indices};
\node [block, below of=f2] (g2) {$\cross$ Reply `Failure'};
\path [line] (a) -- (b);
\path [line] (b) -- (c);
\path [line] (c) -| (cl) node [near start,anchor=south] {Indices exist};
\path [line] (c) -| (c2) node [near start,anchor=south] {~~~~~~~~~~Indices do not exist};
\path [line] (c2) -- (d);
\path [line] (d) -| (d1) node [near start,anchor=south]{};
\path [line] (d) -| (d2) node [near start,anchor=south]{};
\path [line] (d1) -- (e1);
\path [line] (d2) -- (e2);
\path [line] (e2) -- (f);
\path [line] (f) -| (f1) node [near start,anchor=south]{};
\path [line] (f) -| (f2) node [near start,anchor=south]{};
\path [line] (f1) -- (g1);
\path [line] (f2) -- (g2);

\end{tikzpicture}
}

}

\caption{\newsInfo{}}\label{fig:flowltsa}
\end{figure}

Let us introduce a \emph{\newsInfo{}} (\newsInfoShort) that will be used as a running example.
\newsInfoShort{} is a paid service and its goal is to provide updated stock indices to the subscribed users.
 It provides a service level agreement (SLA) to the subscribed users stating that it always responds within three seconds upon request.

\newsInfoShort{} has three component Web services, \ie{} a database service ($\dataS$), a free news feed service ($\free$) and a paid news feed service ($\paid$).
The strategy of the \newsInfoShort{} is calling the free service $\free$ before calling the paid service $\paid$ in order to minimize the cost.
Upon returning the result to the user,  the \newsInfoShort{} also stores the latest results in an external database service provided by $\dataS$ (storage of the results is omitted here).
The workflow of the \newsInfoShort{} is sketched in \cref{fig:flowltsa} in the form of a tree.
When a request is received from a subscribed customer (\code{Receive\ User}), it synchronously invokes (\ie{} invoke and wait for reply) the database service (\code{Sync.\ Invoke\ DS}) to request stock indices stored in the past minute.
Upon receiving the response from $\dataS$, the process reaches an \xml{if} branch (denoted by \mdecision).
If the indices are available (\code{Indices\ exist}), then they are returned to the user (\code{Reply\ indices}). Otherwise, $\free$ is invoked asynchronously (\ie{} the system moves on after the invocation without waiting for the reply).
A \xml{pick} construct (denoted by \mopick)\ls{Not sure if this bracket is necessary here. Since it represents a type to me (impressed by the notation <pick>).} is used here to await an incoming response (\xml{onMessage}) from previous asynchronous invocation or timeout (\xml{onAlarm}), whichever occurs.
If the response from $\free$ (\code{OnMessage\ \free}) is received within one second, then the result is returned to the user (\code{Reply\ indices}).
Otherwise, the timeout occurs (\code{OnAlarm\ 1\ second}), then \newsInfoShort{} stops waiting for the result from $\free$ and calls $\paid$  instead (\code{ASync.\ Invoke\ \paid}).
Similarly to $\free$, the result from $\paid$ is returned to the user, if the response from $\paid$ is received within one second.
Otherwise, it notifies the user regarding the failure of getting stock indices (\code{Reply\ }`\code{Failure}').
The states marked with a $\tick$ (resp.\ $\cross$) represent desired (resp.\ undesired) end states.

The global time requirement for \newsInfoShort{} is that \newsInfoShort{} should respond within three seconds upon request.
It is of particular interest to know the local time requirements for services $\paid$, $\free$, and $\dataS$, so as to fulfill the global time requirement.
This information can also help to choose a paid service $\paid$ which is both cheap and responds quickly enough.

In this example, an activity to avoid (which will be referred to as a ``bad activity'' in the following) is the reply activity that is triggered after the component service $\paid$ fails to respond within one second, which is marked with~$\cross$ in \cref{fig:flowltsa}.

\section{A formal model for parametric composite services}\label{sec:prelim}

\subsection{Variables, clocks, parameters, and constraints}\label{ssec:ActiveClocksocks}

Given a finite set $\Var$ of finite-domain \emph{variables}, a \emph{variable valuation} for $\Var$ is a function assigning to each variable a value in its domain.
We denote by~$\Val(\Var)$ the set of all variable valuations of~$\Var$.
Given a variable~$\variable \in \Var$ and a variable valuation~$\Valuation \in \Val(\Var)$, 
we denote by $\Valuation(\variable) = \varUnitialized$ the fact that variable~$\variable$ is uninitialized in valuation~$\Valuation$.

The clocks, parameters and constraints that we use in this work are similar to the ones used in the formalisms of (parametric) timed automata~\cite{AD94,AHV93} and (parametric) stateful timed CSP~\cite{SLDLSA13,ALSD14}.
Let  $\Clock= \{\clock_1, \dots, \clock_{\clockcard} \}$ (for some integer~$\clockcard$) be a finite set of clocks, \ie{} real-valued variables evolving at the same rate.
A \emph{clock valuation} is a function $\clockval : \Clock \rightarrow \grandrplus$, that assigns a non-negative real value to each clock.

Let $\Param = \{\param_1, \dots, \param_{\paramCard} \} $ (for some integer $\paramCard$) be a finite set of \emph{parameters}, \ie{} rational-valued constants that will be used here to represent the unknown response time of a component service.
A \emph{parameter valuation} is a function
$\pval : \Param \rightarrow \grandqplus$ assigning a non-negative rational value to each parameter.

Henceforth, we use $\clockval$ (resp.~$\pval$) to denote a clock (resp.\ parameter) valuation.

A \emph{linear term} over $\Clock \cup \Param$ is an expression of the form $\sum_{1 \leq i \leq N} \alpha_i z_i + d$ for some $N \in \grandn$, with $z_{i}\in \Clock \cup \Param$, $\alpha_{i} \in \grandqplus$ for $1 \leq i \leq N$, and $d \in \grandqplus$.
	We denote by $\setLXP$ the set of all linear terms over $\Clock$ and $\Param$.
	Similarly, we denote by \LongVersion{$\setLX$ (resp.\ }$\setLP$\LongVersion{)} the set of all linear terms over~\LongVersion{$\Clock$ (resp.\ }$\Param$\LongVersion{)}.
An \emph{inequality} over $\Clock$ and~$\Param$ is of the form $e \compOp e'$ where ${\compOp} \in \{<, \leq\}$, and $e$, $e' \in \setLXP$.

A \emph{convex constraint} (or \emph{constraint}) is a conjunction of inequalities.
We denote by $\setXP$ the set of all convex constraints over $\Clock$ and $\Param$.
	Similarly, we denote by \LongVersion{$\setX$ (resp.\ }$\setP$\LongVersion{)} the set of all convex constraints over~\LongVersion{$\Clock$ (resp.\ }$\Param$\LongVersion{)}.

Let $\Constraint \in \setXP$, $\Constraint[\pval]$ denotes the valuation of~$\Constraint$ with $\pval$, \ie{} the constraint over~$\Clock$ obtained by replacing each $\param \in \Param$ with~$\pval(\param)$ in~$\Constraint$.
Note that $\Constraint[\pval]$ can be written as
	$\Constraint \land \bigwedge_{\param_i \in \Param} \param_i = \pval(\param_i)$.
We say that $\clockval$ satisfies $\Constraint[\pval]$\LongVersion{, denoted by $\clockval \models \Constraint[\pval]$,} if the expression obtained by replacing each $\clock \in \Clock$ in~$\Constraint[\pval]$ with~$\clockval(\clock)$ evaluates to true.
	
Given $\Constraint \in \setXP$, we define $\timelaps{\Constraint}$ as the \emph{time elapsing} of~$\Constraint$, \ie{} the constraint over $\Clock$ and $\Param$ obtained from~$\Constraint$ by delaying all clocks by an arbitrary amount of time.
That is:
\[\timelaps{\Constraint} = \{ (\clockval', \pval) \mid \clockval \text{ satisfies } \Constraint[\pval] \land \forall \clock \in \Clock : \clockval'(\clock) = \clockval(\clock) + d, d \in \grandrplus \}\text{.}\]
Given $\Constraint \in \setXP$ and $\Clock' \subseteq \Clock$, we denote by $\prune{\Clock'}{\Constraint}$ the constraint in $\setXP$ that is obtained from $\Constraint$ by pruning the clocks in $\Clock'$; this can be achieved using variable elimination techniques such as Fourier-Motzkin (see, \eg{} \cite{s86}).
More generally, given $\Constraint \in \setXP$, we denote by $\projectP{\Constraint}$ the \emph{projection} of constraint~$\Constraint$
onto~$\Param$, \ie{} the constraint obtained from~$\Constraint$ by pruning all clock variables.
	Again, such a projection can be computed using Fourier-Motzkin elimination.

A \emph{non-necessarily convex constraint} (or NNCC) is a conjunction of disjunction of inequalities\footnote{%
	Without loss of generality, we assume here that all  NNCCs are in conjunctive normal form (CNF).};
	NNCCs are used to represent the synthesized local time constraint obtained via the methods proposed in this paper.
	Note that the negation of an inequality remains an inequality;
	however, the negation of a convex constraint becomes (in the general case) an NNCC.
We denote by $\setNNCCP$ the set of all NNCCs over~$\Param$.

Given $\Constraint \in \setNNCCP$, we say that $\pval$ \emph{satisfies} $\Constraint$, denoted by \mbox{$\pval \models \Constraint$}, if $\Constraint[\pval]$ evaluates to true.
$\Constraint$ is \emph{empty} if there does not exist a parameter valuation $\pval$ such that $\pval \models \Constraint$; otherwise $\Constraint$ is \emph{non-empty}.
Given two constraints $\Constraint_1, \Constraint_2\in \setNNCCP$, we say that $\Constraint_2$ is \emph{weaker} (or \emph{more relaxed}) than $\Constraint_1$, denoted by $\Constraint_1 \subs \Constraint_2$, if $\forall \pval:$ \mbox{$\pval \models \Constraint_1$} $\Rightarrow$ \mbox{$\pval \models \Constraint_2$}.

\subsection{Syntax of composite service processes}\label{ssec:syntax}

BPEL~\cite{WSBPEL20} is an industrial standard for implementing composition of existing Web services by specifying an executable workflow using predefined activities.
In this work, we assume the composite service is specified using the BPEL language.
Basic BPEL activities that communicate with component Web services are \xml{receive}, \xml{invoke}, and \xml{reply}, which are used to receive messages, invoke an operation of component Web services and return values respectively.
These activities are \emph{communication activities}.
The control flow of the service is defined using structural activities such as \xml{flow}, \xml{sequence}, \xml{pick} and \xml{if}.

A composite service $\CS$ makes use of a finite number of component services to accomplish a task.
Let $\compSet=\{\Service_1,\ldots,\Service_n\}$ be the set of all component services that are used by $\CS$.
In this work, we assume that the response time of a composite service is based on the time spent on individual communication activities, and the time incurred by internal operations of the composite service is negligible.\footnote{%
	We discuss the time incurred for internal operations in \cref{sec:discussion}.\ls{Do we need to justify why internal operation time can be neglected?}
}

Composite services are expressed using \emph{processes}.
We define a formal syntax definition in the following.
\begin{definition}\label{definition:process}
\emph{Processes} are defined using the following grammar:

\begin{tabular}{ c @{} c l l}
	$P \sdef$ & & $\receive{\Service}$ & receive activity \\
	& $\mid$ & $\reply{\Service}$ & reply  activity\\
	& $\mid$ & $\sInvoke{\Service}$ & synchronous invocation\\
	& $\mid$ & $\aInvoke{\Service}$ & asynchronous invocation\\
	& $\mid$ & $\minterleave{P}{Q}$ & concurrent activity \\
	& $\mid$ & $P\sequential Q$ & sequential activity \\
	& $\mid$ & $\mconditional{P}{b}{Q}$ & conditional activity \\
	& $\mid$ & $\mpicknew{\Service_i}{P_i}{a_j}{Q_j}$ & pick activity \\
\end{tabular}
\\
\noindent where $\Service$ is a component service, $P$ and $Q$ are composite service processes, $b$ is a Boolean expression, and $a_j \in \grandqplusNotZero$ are positive rational numbers, for $1 \leq j \leq k$.

\end{definition}

Let us describe below the BPEL syntax notations introduced in \cref{definition:process}:
\begin{itemize}
	\item $\receive{\Service}$ and $\reply{\Service}$ are used to denote ``receive from'' and ``reply to'' a service $\Service$, respectively;
	\item $\sInvoke{\Service}$ (resp{.}~$\aInvoke{\Service}$) denotes the synchronous (resp.\ asynchronous) invocation of  a component service $\Service$;
	\item  $\minterleave{P}{Q}$\ls{Just a concern, since the syntax definition is: $P = P|||Q$, it gives an impression of recursion, which is not intended in our case I believe. Similar for $P;Q$.  Maybe change it to $P = R|||Q$, where R and Q are BPEL activities. }\ea{agree! please do the modification throughout the manuscript} denotes the concurrent composition of BPEL activities $P$ and~$Q$;
	\item $P\sequential Q$ denotes the sequential composition of BPEL activities $P$ and $Q$;
	\item $\mconditional{P}{b}{Q}$ denotes the conditional composition, where $b$ is a guard condition on the process variables. If $b$ evaluates to true, BPEL activity $P$ is executed, otherwise activity $Q$ is executed;
	\item $\mpicknew{\Service_i}{P_i}{a_j}{Q_j}$ denotes the BPEL $pick$ composition, which contains two types of activities, \ie{} $onMessage$ activity and $onAlarm$ activity.
	An $onMessage$ activity $\onmsg{\Service_i}{P_i}$ is activated when the message from service $\Service_i$ arrives and BPEL activity $P_i$ is subsequently executed; an $onAlarm$ activity $\onalrm{a_j}{Q_j}$ is activated at $a_j$~time units, and BPEL activity $Q_j$ is subsequently executed.
	The $pick$ activity contains $n$ $onMessage$ activities and $k$ $onAlarm$ activities.
	Exactly one activity from these $n+k$ activities will be executed.
	If multiple activities are activated at the same time, one of the activities will be chosen non-deterministically for execution.
	Given a $pick$ activity $P$, we use $P.onMessage$ and $P.onAlarm$ to denote the $onMessage$ and $onAlarm$ branches of $P$ respectively.
\end{itemize}

A \emph{structural activity} is an activity that contains other activities. Concurrent, sequential, conditional, and pick activities are examples of structural activities.
An activity that does not contain other activities is called an \emph{atomic activity}, which includes receive, reply, synchronous invocation and asynchronous invocation activities.

Note that the communication activities can implicitly make use of variables for passing information.
For example, let $\Service$ be a component service that calculates the stock indices for a particular date.
For synchronous invocation $\sInvoke{\Service}$, it requires an input variable $v_i$ that specifies the date information, and an output variable $v_o$ to hold the return value from $\sInvoke{\Service}$.
To keep the notations concise, we abstract the usage and assignment of variables for communication activities.

We make the following assumption throughout this manuscript:

\begin{assumption}\label{assumption:bound}
	All loops have a bound on the number of iterations and on the execution time.
\end{assumption}

This assumption is necessary to ensure termination of our approach.
We believe it is reasonable in practice (see \cref{sec:discussion} for a discussion).

\subsection{Parametric composite service models}\label{ssec:LTS}

Let us now formally define composite service models and parametric composite service models.
Let $\npProcesses$ denote the set of all possible (non-parametric) composite service processes.

\begin{definition}[Composite service model]\label{def:CSM}
	A \emph{composite service model} $\CS$ is a tuple $(\Var, \Vinit,\NProcessInit)$, where $\Var$ is a finite set of variables, $\Vinit \in \Val(\Var)$ is an initial valuation that maps each variable to its initial value, and $\NProcessInit \in \npProcesses$ is a composite service process (defined according to the grammar of \cref{definition:process}) making use of the variables in~$\Var$.%
 \end{definition}

Each service comes with a \emph{response time}, which is a rational-valued constant, and can be seen as an upper bound on the time that a service needs to successfully return its answer.\ea{strange that it's not part of the definition!!! neither the set of services… neither the global times requirement (less critical though)}

\LongVersion{%
	Assume a component service $\Service$.
	Assume that the only communication activity that communicates with $\Service$ is the synchronous invocation activity $\lsInvoke(\Service)$.
	Upon invoking of service $\Service$, the activity $\lsInvoke(\Service)$ waits for the reply.
	The response time of $\Service$ is equivalent to the waiting time in $\lsInvoke(\Service)$.
	Therefore, by analyzing the time spent in $\lsInvoke(\Service)$, we can get the response time of  component service $\Service$.
}%
Given a composite service $\CS$, let  $t_i\in \grandrplus$ be the response time of component service $\Servicei{i}$ for $i\in\{1,\ldots,n\}$, and let $\compSet_t=\{t_1,\ldots,t_n\}$ be a set of component service response times that fulfill the global time requirement of service $\CS$.
Because $t_i$, for $i\in\{1,\ldots,n\}$, is a rational number, there are infinitely many possible values, even in a bounded interval (and even if one restricts these values to rational numbers).
A method to tackle this problem is to reason \emph{parametrically}, by considering these response times as unknown constants, or \emph{parameters}.

We now extend the definitions of services, composite service processes and composite service model to the parametric case.
First, a parametric service is a service~$\Service_i$, the response time of which is now a parameter $\param_i \in \Param$, instead of a rational-valued constant.
Then, a parametric composite service process is a service process defined according to the grammar of \cref{definition:process}, where services (``$\Service$'' in \cref{definition:process}) are now parametric services.
We denote by $\Processes$ the set of all possible parametric composite service processes.\ls{I notice this is a different font of P, which is used to define the composite service processes in Fig. 3. Just a suggestion, whether we need to use some notation which makes the difference more obvious?}
Finally, parametric composite service models are defined similarly to composite service models, except that the composite service processes are now parametric composite service processes.

 \begin{definition}[Parametric composite service model]\label{def:PCSM}
	A \emph{parametric composite service model} $\CS$\ls{It is the same notation as a composite service. } is a tuple $(\Var, \Vinit,\Param,\ProcessInit, \ConstraintInit)$, where $\Var$ is a finite set of variables; $\Vinit \in \Val(\Var)$ is an initial valuation that maps each variable to its initial value; $\Param$ is a finite set of parameters;
	$\ProcessInit \in \Processes$ is a parametric composite service process making use of the variables in~$\Var$ and $\ConstraintInit \in \setP$ is the initial %
	parametric constraint.
 \end{definition}

\begin{example}\label{example:PCSM}
	Let $\Var = \{ \variable_1\}$.
	Let $\Vinit$ be such that $\Vinit(\variable_1) = 0$.
	Let $\Param = \{ \param_1, \param_2 \}$.
	Let $\ProcessInit = \mconditional{ \mpick{\Service}{\sInvoke{\Service_1}}{1}{\sInvoke{\Service_2}} } {\variable_1 > 0} {\lstop}$.
	Let $\ConstraintInit = \param_1 < \param_2$.
	Let $\param_i$ denote the response time of $\sInvoke{\Service_i}$, $i \in \{ 1, 2 \}$.
	
	Then $\CS = (\Var, \Vinit,\Param,\ProcessInit, \ConstraintInit)$ is a parametric composite service model.
	
	\end{example}

\paragraph{Process and model valuation}
Given a parametric composite service process~$\Process$  with a parameter set $\Param=\{\param_1,\ldots,\param_\paramCard \}$ and given a parameter valuation  $(\pval(\param_1), \dots, \pval(\param_\paramCard))$, $\Process[\pval]$ denotes the \emph{valuation} of $\Process$ with~$\pval$, \ie{} the process where each occurrence of a parameter~$\param_i$ is replaced with its valuation~$\pval(\param_i)$.

Given a parametric composite service model $\CS$ with a parameter set $\Param=\{\param_1,\ldots,\param_\paramCard \}$, and given a parameter valuation  $(\pval(\param_1), \dots, \pval(\param_\paramCard))$, $\CS[\pval]$ denotes the \emph{valuation} of $\CS$ with~$\pval$, \ie{} the model $(\Var,\Vinit, \Param,\ProcessInit, C)$, where $C$ is $\ConstraintInit \land \bigwedge_{i = 1}^\paramCard (\param_i = \pval(\param_i))$.
Note that $\CS[\pval]$ can be seen as a non-parametric service model $(\Var,\Vinit, \ProcessInit[\pval])$.

\begin{example}
	Consider the parametric composite service model $\CS$ defined in \cref{example:PCSM}.
	Assume $\pval$ such that $\pval(\param_1) = 1$ and $\pval(\param_2) = 2$.
	Then $\ProcessInit[\pval] = \mconditional{ \mpick{\Service}{\sInvoke{\Service_1}}{1}{\sInvoke{\Service_2}} } {\variable_1 > 0} {\lstop}$, where the response time of $\sInvoke{\Service_1}$ is~1, and the response time of $\sInvoke{\Service_2}$ is~2.
	\end{example}

\subsection{Bad activities}\label{ssec:badAct}
Given a BPEL service $\CS$, we define a \emph{bad activity} as an atomic activity such that its execution leads the composite service $\CS$ to violate the global time requirement.
To distinguish bad activities, we allow the user to annotate a BPEL activity $A$ as a bad activity. %
The annotation can be achieved, for example, by using extension attributes of BPEL activities.
This work can be performed manually or using semi-automated procedures.

\begin{example}
	Consider again the example in \cref{sec:timeBpelExample}.
	Then ``Reply `Failure'{}'' is a bad activity, denoted in \cref{fig:flowltsa} by~$\cross$.
\end{example}
\section{A formal semantics for parametric composite services}\label{sec:dynamicAnalysis}

In this section, we provide our parametric composite service model with a formal semantics, defined in the form of a labeled transition system (LTS).
The semantics we use is inspired by the one proposed for (parametric) stateful timed Communicating Sequential Processes (CSP)~\cite{SLDLSA13,ALSD14}, that makes use of implicit clocks.

We first recall LTSs (\cref{ss:LTS}) and define symbolic states (\cref{ss:state}).
Following that, we define implicit clocks and the associated functions, \ie{} activation and idling (\cref{ss:clock}).
We then introduce our formal semantics (\cref{ssec:stateSpaceExploration}), and apply it to an example (\cref{subsec:example}).
We finally prove a technical result relating the non-parametric and the parametric service models (\cref{ss:theorems}).

\subsection{Labeled transition systems}\label{ss:LTS}

\begin{definition}[Labeled transition system] A \emph{labeled transition system (LTS)} %
	is a tuple $\LTS{} = (\States{}, \sinit, \EventsSet, \Steps)$, where
 	\begin{itemize}
		\item $\States{}$ is a set of states;
		\item  $\sinit \in \States{}$ is the initial state;
		\item  $\EventsSet$ is a set of actions; and
		\item $\Steps \subseteq \States{} \times \EventsSet \times \States{}$ is a transition relation.

\end{itemize}
\end{definition}

Given $\LTS{} = (\States{}, \sinit, \EventsSet, \Steps)$, a state $\mystate \in \States{}$ is  a \emph{terminal state} if there does not exist a state $\mystate' \in \States{}$ and an action $\action \in \EventsSet$ such that $(\mystate,\action,\mystate')\in \Steps$; otherwise, $\mystate$ is said to be a \emph{non-terminal state}.
There is a \emph{run} from a state $\mystate$ to state $\mystate'$, where $\mystate$, $\mystate'\in \States{}$, if there exists an alternating sequence of states and actions $\sequence{ \mystate_1, \action_1, \mystate_2, \ldots, \action_{n-1}, \mystate_n }$, where
		$\mystate_i \in \States{}$ for $1 \leq i \leq n$,
		$\action_i \in \EventsSet$ for $1 \leq i \leq n-1$,
		$\mystate_1=s$,
		$\mystate_n=\mystate'$, and
		$\forall i\in \{1,\ldots,n-1\}, (\mystate_i,\action_i,\mystate_{i+1})\in \Steps$.
A \emph{complete run} is a run that starts in the initial state $\sinit$ and ends in a terminal state.
Given a state $s \in \States{}$, we use $\Succ(\mystate)$ to denote the set of states reachable in one step from $\mystate$; formally,
$\Succ(\mystate)=\{\mystate' \mid \exists \action \in \EventsSet, \exists \mystate' \in \States{} : (\mystate,\action,\mystate') \in \delta\}$.

In the following, we introduce the notion of LTS starting from a state~$\mystate$ which is defined as the LTS containing $\mystate$ and all its successor states and transitions.

\begin{definition}[sub-LTS]\label{def:subLTS}
		Let $\LTS{} = (\States{}, \sinit, \EventsSet, \Steps)$ be an LTS, %
		and let $\mystate$ be a state of $\States{}$.
	The \emph{sub-LTS} of $\LTS{}$ starting from $\mystate$ is %
	$({\States{}}', s, \EventsSet', \Steps')$, where
	\begin{enumerate} %
		\item $\States{}' \subseteq \States{}$ is the set of states reachable from $\mystate \in \States{}$ in $\LTS{}$; %
		\item $\Steps' \subseteq \Steps$ is the transition relation satisfying the following condition:  $ (\mystate_1 , \action , \mystate_2) \in \Steps'$ if $\mystate_1, \mystate_2 \in \States{}'$ and $ (\mystate_1 ,\action , \mystate_2) \in \Steps$; and
		\item $\EventsSet' \subseteq \EventsSet$ is the set of all actions used in $\Steps'$, \ie{} $\{ \action \mid \exists \mystate_1, \mystate_2 \in \States' :  (\mystate_1 , \action , \mystate_2) \in \Steps'\}$.
	\end{enumerate}
\end{definition}

\subsection{Symbolic states}\label{ss:state}

In the following, we equip our parametric composite service models with a symbolic semantics, \ie{} a semantics, a run of which will capture a (possibly infinite) set of runs, for a (possibly infinite) set of parameter valuations.

Let us first define the notion of (symbolic) state of a parametric composite service model.

 \begin{definition}[State]\label{definition:state}
	Given a parametric composite service model $\CS = (\Var, \Vinit,\Param,\ProcessInit, \ConstraintInit)$, a (symbolic) \emph{state} of~$\CS$ %
	is a tuple $\mystate = (\Valuation, \Process,\Constraint,\Delay)$, where
		$\Valuation \in \Val(\Var)$ is a valuation of the variables,
		$\Process$ is a composite service process,
		$\Constraint$ is a constraint over $\setXP$,
		and
		$\Delay \in \setLP$ is the
	(parametric) elapsed time from the initial state~$\sinit$ to state~$\mystate$, excluding the idling time in state~$\mystate$. %
\end{definition}

Given a state $\mystate =(\Valuation,\Process,\Constraint,\Delay)$, we use the notation $\mystate.\Valuation$ to denote the field $\Valuation$ of~$\mystate$, and similarly for $\mystate.\Process$, $\mystate.\Constraint$ and $\mystate.\Delay$.
When a parametric composite service model $\CS$ has no variable, %
we denote each state $\mystate  \in \States{}$ by $(\Process,\Constraint,\Delay)$ for the sake of brevity.

\subsection{Implicit clocks}\label{ss:clock}

In order to provide parametric composite service models with a symbolic semantics, we use \emph{clocks} to record the elapsing of time.
Recall from \cref{ssec:ActiveClocksocks} that clocks are real-valued variables initially equal to~0, and evolving all at the same rate; some clocks may be reset to~0.
Clocks are used to record the time elapsing in several formalisms, in particular in timed automata (TAs)~\cite{AD94}.
In TAs, the clocks are defined as part of the models and state space.
It is known that the state space of the system may grow exponentially with the number of clocks and that the fewer clocks, the more efficient real-time model checking is~\cite{BY03}.
In (P)TAs, it is possible to dynamically reduce the number of clocks~\cite{DY96,Andre13FSFMA}.
An alternative approach is to define a semantics that create clocks on the fly when necessary, and prune them when they are no longer needed.
This approach was initially proposed for stateful timed CSP~\cite{SLDLSA13}, and shares similarities with \emph{firing times} in time Petri nets~\cite{Merlin74}.
This allows a smaller state space compared to the explicit clock approach.
We refer to this second approach~\cite{SLDLSA13} as the \emph{implicit clock approach}, and adopt this implicit clock approach in our work.

\subsubsection{Clock activation}
Clocks are implicitly associated with processes.
For instance, given a communication activity $\sInvoke{\Service}$, a clock starts measuring time once the activity becomes activated.
To introduce clocks on the fly, we define an activation function $\activation$ in the following definition, in the spirit of the one defined in~\cite{SLDLSA13,ALSD14}.

In short, this definition explains how to associate a new clock to a process: this clock will only be associated to the new processes with timing constraints, while it will not be associated to untimed processes nor to processes to which another implicit clock is already associated.

\begin{definition}\label{definition:activation}
	Given a process, we define the activation function $\activation$ using the following set of recursive rules:
	
	\begin{tabular}{llll}
     $\activation(A(\Service),\clock)$ & $=$ & $A(\Service)_\clock$ & A1 \\
    $\activation(mpick,x)$ & $=$ & $mpick_x $   & A2 \\
    $ \activation(A(\Service)_{x'},\clock)$ & $=$   & $A(\Service)_{x'}$   & A3 \\
    $ \activation(mpick_{x'},x)$   & $=$   & $mpick_{x'} $   & A4 \\
    $ \activation(P\fovr Q,\clock)$  & $=$   & $\activation(P,x)\fovr\activation(Q,x)$   & A5 \\
    $\activation(P\sequential Q, \clock)$   & $=$   & $\activation(P,\clock)\sequential Q$   & A6 \\
    \end{tabular}%
	\\
	\noindent
	where $A \in \{\lreceive,\lsInvoke,\laInvoke,\lreply\}$, $\fovr \in \{|||,\dres b\rres\}$,
	and
	$mpick = \mpicknew{\Service_i}{P_i}{a_j}{Q_j}$
\end{definition}

Let us explain \cref{definition:activation}.
Given a process~$P$, we denote by $P_\clock$ the corresponding process that has been associated with clock~$\clock$.
When a new state $\mystate$ is reached, the activation function is called to assign a new clock for each newly activated communication activity.
\begin{itemize}
	\item Rules A1 and A2 state that a new clock is associated with a BPEL communication activity~$A$ if~$A$ is newly activated.
	\item Rules A3 and A4 state that if a BPEL communication activity has already been assigned a clock, it will not be reassigned one.
	\item Rules A5 and A6 state that function $\activation$ is applied recursively to activate the child activities for BPEL structural activities.
	\item For rule A6, function $\activation$ is applied only to activity~$P$, but not to activity~$Q$, since activity~$P$ is the immediate subsequent activity (activity~$Q$ will be executed only after the completion of activity $P$).
\end{itemize}

\begin{example}\label{example:activation}
	Let $\Process = \minterleave{\sInvoke{\Service_1}}{\aInvoke{\Service_2}}$.
	Then, applying rules A5 and A1, $\activation(\Process, \clock) = \minterleave{\sInvoke{\Service_1}_{\clock}}{\aInvoke{\Service_2}_{\clock}}$.
	Note that $\clock$ is associated with both processes, as they are both simultaneously activated.
\end{example}
\begin{example}\label{example:activation2}
	Let $\Process = {\sInvoke{\Service_1}_{\clock'}} \sequential {\aInvoke{\Service_2}}$.
	Then, applying rules A6 and A3, $\activation(\Process, \clock) = {\sInvoke{\Service_1}_{\clock'}} \sequential {\aInvoke{\Service_2}}$.
	Indeed, the first invocation $\sInvoke{\Service_1}_{\clock'}$ is already associated to another clock~$\clock'$ (rule~A3) while the right-hand part of the sequence is not yet activated (rule~A6).
\end{example}

Given a process~$\Process$, we denote by $\ActiveClocks(\Process)$ the set of \emph{active clocks} associated with~$\Process$.

\begin{example}
	Assume process $\Process= \minterleave{\sInvoke{\Service_1}_{\clock_0}}{\sInvoke{\Service_2}_{\clock_1}}$.
	The set of active clocks associated with~$\Process$ is $ \ActiveClocks(\Process) = \{ \clock_0 , \clock_1 \}$.
\end{example}

\subsubsection{Idling function}\label{ssec:idlingfunc}
We define in \cref{definition:idling} below the function $\funIdle$ that, given a state $\mystate$, returns a constraint that specifies how long an activity can idle at state~$\mystate$.
The result is a constraint over $\Clock\cup\Param$.
This idling function is similar in essence to the \emph{time elapsing} on symbolic states (zones or parametric zones) defined for TAs or PTAs~\cite{BY03,HRSV02}.

\begin{definition}\label{definition:idling}
	Given a process, we define the idling function $\funIdle$ using the following set of recursive rules:

		\begin{tabular}{llll}
			$\funIdle(A(\Service)_x)$ & $=$ & $\clock \leq \param_{\Service}$ & I1\\
			$\funIdle(B(\Service)_x)$ & $=$ & $\clock = 0$ & I2\\
			$\funIdle(P\fovr Q)$ & $=$ & $\funIdle(P) \land \funIdle(Q)$ & I3\\
			$\funIdle(P\sequential Q)$ & $=$ & $\funIdle(P)$ & I4  \\	
			$\funIdle(mpick_x)$ & $=$ & $\clock \leq \param_{\Service} \land \bigwedge_{j = 0}^{k} \clock \leq a_j$ & I5  \\
		\end{tabular}
	\\
	where $ A \in \{\lreceive,\lsInvoke\}$,
	$B \in \{\laInvoke,\lreply\}$,
	$\fovr \in \{|||,\dres b\rres\}$,
	$mpick = \mpicknew{\Service_i}{P_i}{a_j}{Q_j}$, and
	$\param_{\Service}$ is the parametric response time of service $mpick_x$. %
\end{definition}

Let us explain \cref{definition:idling}.
\begin{itemize}
	\item Rule I1 considers the situation when the communication requires waiting for the response of a component service $\Service$, and the value of clock $x$ must not be larger than the response time parameter $\param_{\Service}$ of the service: that is, one can only remain in this state while $\clock \leq \param_{\Service}$ remains valid.
	\item Rule I2 considers the situation when no waiting is required: therefore, the clock constraint $\clock = 0$ implies that this state should be left within 0-time, as these actions are instantaneous.
	\item Rules I3 and I4 state that the function $\funIdle$ is applied recursively to activate the child activities of a BPEL structural activity.
	\item Similar to rule A6, for rule~I4, function $\activation$ is applied only to activity $P$, but not to activity $Q$, since only activity $P$ is executed next.
			Therefore, given a state $\mystate$ and activity $P\sequential Q$, we only need to consider how long the activity $P$ can idle at state~$\mystate$.
	\item Rule I5 states that the activity can idle only until $\param_{\Service}$ or any of the alarms~$a_j$ is reached.
			The conjunction comes from the fact that, as soon as any alarm reaches its time-out, then it will be triggered, therefore leading the system to leave this symbolic state.
\end{itemize}

\begin{example}
	Let $\Process = \minterleave{\sInvoke{\Service_1}}{\aInvoke{\Service_2}}$.
	Assume the response time of $\Service_i$ is $\param_i$ for $i \in \{ 1, 2 \}$.
	Recall from \cref{example:activation} that $\activation(\Process, \clock) = \minterleave{\sInvoke{\Service_1}_{\clock}}{\aInvoke{\Service_2}_{\clock}}$.
	Let us apply $\funIdle$ to $\activation(\Process, \clock)$.
	Applying rules I3, I1 and I2, we get $\clock \leq \param_1 \land \clock = 0$.
\end{example}

\subsection{Operational semantics}\label{ssec:stateSpaceExploration}

The operational semantics will be defined in the form of an LTS.
The actions labeling the LTS will be sequences of rules; these rules will be a set of rules (similar to those of parametric stateful timed CSP~\cite{ALSD14}) defining the transitions of the semantics, and will be explained below.
Let \[\Rules = \{ \rSInv,
\rRec,
\rReply,
\rAInv,
\rCondOne,
\rCondTwo,
\rCondThree,
\rCondFour,
\rSeqOne,
\rSeqTwo,\\
\rFlowOne,
\rFlowTwo,
\}
\cup (\rPickOne \times \grandn)
\cup (\rPickTwo \times \grandn)
\]%
\ls{I am wondering if we should explain the meaning of these notations first before using it here. I was a bit confused and was trying to look back to find the definitions of these notations when I first read the draft. }\ea{what do you want to explain? this may be a good idea indeed}
be the set of rules that will be used by the LTS.
Two rules ($\rPickOne$ and $\rPickTwo$) are associated with a positive integer, so as to remember which subprocess is derived (this will be explained later on).
Let $\Sequences{\Rules}$ denote the set of \emph{sequences} of rules, \ie{} non-empty ordered elements of~$\Rules$ (possibly used several times).
An example of a sequence of rule is $\sequence{\rRec, \rReply, \rRec, (\rPickOne, 2)}$.
Sequence concatenation is denoted by operator~$+$.

We can now define the semantics of a parametric composite service model in the form of an LTS.
Let $\ClockSeq = \sequence{\clock_0, \clock_1, \cdots}$ be a sequence of clocks.
We will need $\ClockSeq$ to pick a fresh clock when applying the clock activation function $\activation$ defined previously.
\begin{definition}[semantics of composite services]\label{def:semantics}
	Let $\CS = (\Var, \Vinit,\Param,\ProcessInit, \ConstraintInit)$ be a parametric composite service model.
	The \emph{semantics of $\CS$} (hereafter denoted by $\LTS{\CS{}}$) is the LTS
	$(\States{}, \sinit, \Sequences{\Rules}, \Steps)$ where
		$$\begin{array}{r @{\ = \ } l}
			\States{} & \{(\Valuation,\Process,\Constraint,\Delay) \in \Val(\Var) \times \Processes \times \setXP \times \setLP \}, \\
			\sinit & (\Vinit, \ProcessInit, \ConstraintInit, 0)\\
		\end{array}$$
	and the transition relation $\Steps$ is the smallest transition relation satisfying the following.
	For all $(\Valuation,\Process,\Constraint,\Delay) \in \States{}$, if $\clock$ is the first clock in the sequence~$\ClockSeq$ which is not in $\ActiveClocks(\Process)$, and
	$ (\Valuation, \activation(\Process, \clock), \Constraint \land \clock = 0, \Delay) \stackrel{\ruleseq}{\hookrightarrow}  (\Valuation',\Process',\Constraint',\Delay')$
	where $\Constraint'$ is satisfiable,
	then we have: $\big((\Valuation,\Process,\Constraint,\Delay), \ruleseq, (\Valuation', \Process', \prune{\Clock \setminus \ActiveClocks(\Process')}{\Constraint'},\Delay') \big) \in \Steps$.
\end{definition}

The transition relation $\hookrightarrow$ is specified by a set of rules, given in \cref{appendix:semantics}. %
Let us first explain these rules, after which we will go back to the explanation of \cref{def:semantics}.
The transition relation is labeled by a sequence of rules, that allows one to remember by using which sequence of rules a process evolves into another one.

\paratitle{Synchronous invocation}
	Rule $\rSInv{}$ states that a state $s=(\Valuation, \lsInvoke{(\Service)}_x, \Constraint, \Delay)$ may evolve into the state $\mystate'=(\Valuation',\lstop, (\clock=\param_{\Service})\wedge \timelaps{\Constraint}, D+\param_{\Service})$, %
	where $\lstop$ is the activity that does nothing, and $\param_{\Service}$ is the parametric response time of component service $\Service$.
	Note that, from\ls{in}\ea{no, ``from''; the definition states that the resulting must be satisfiable} \cref{def:semantics}, the condition $(\clock=\param_{\Service})\wedge \timelaps{\Constraint}$ is necessarily satisfied (otherwise this evolution is not possible).
	\LongVersion{%
		The resulting constraint is the intersection of constraints $\timelaps{\Constraint}$ and $\clock=\param_\Service$ (recall that the constraint $\timelaps{\Constraint}$ denotes the time elapsing of~$\Constraint$).
	}%
	Furthermore, the parametric duration from the initial state ($\Delay$) is incremented by~$\param_{\Service}$.
	Rules $\rRec$, $\rReply$ and $\rAInv$ are similar.

\paratitle{Pick activity} Rule $\rPickOne$ encodes the transition that takes place due to an $onMessage$ activity, where $\param_i$ denotes the parametric response time of $\Process_i$.
	Let us explain the constraint $(\clock=\param_i) \wedge \funIdle(mpick_x) \wedge \timelaps{\Constraint}$.
	First, after the transition, the current clock $\clock$ needs to be equal to the parametric response time of service~$\Service_i$, \ie{} $\clock = \param_i$.
	Second, the constraint $\funIdle(mpick_x)$ is added to ensure that $\clock$ remains smaller or equal to the maximum duration of the $mpick_\clock$ activity.
	Third, the constraint $\timelaps{\Constraint}$ denotes the time elapsing of~$\Constraint$.
	Observe that the transition in~$\hookrightarrow$ is labeled using the pair $(\rPickOne, i)$ so as to remember that the $i$th process (\ie{} $P_i$) has been selected.
	
	Rule $\rPickTwo$ (for an $onAlarm$ activity) is similar; observe that, instead of using the parametric response time, we use the time stipulated by the alarm (\ie{} $a_j$) of process~$Q_j$.

	\paratitle{Conditional activity} Given a conditional composition $\mconditional{A}{b}{B}$,
	the guard condition $b$ is a Boolean, hence its values are in $\{true, false\}$.
	As a consequence, given a valuation~$\Valuation$ of the variables, then $\Valuation(b) \in \{true, false,\varUnitialized\}$\LongVersion{ (recall that $\varUnitialized$ denotes an uninitialized variable)}.
	We have that $\Valuation(b)=\varUnitialized$ when the evaluation of $b$ is unknown, due to the fact that there may be uninitialized variables in~$b$.
	Since $b$ might be evaluated to either true or false at certain stages at runtime, we explore both activities $A$ and $B$ when $\Valuation(b)=\varUnitialized$ so as to reason about all possible scenarios.
	The case of $\Valuation(b)=\varUnitialized$ is captured by rules $\rCondOne$ and $\rCondTwo$, and the cases where $\Valuation(b)\in \{true, false\}$ are captured by rules $\rCondThree$ and $\rCondFour$.

	\paratitle{Sequential activity} $\rSeqOne$ states that if activity $A'$ is not a $\lstop$ activity (\ie{} activity $A'$ has not finished its execution), then a state containing activity $\msequence{A}{B}$ may evolve into a state containing activity $\msequence{A'}{B}$. Otherwise, if $A$ is a $\lstop$ activity (\ie{} activity $A$ has finished its execution), then the state\LongVersion{ containing activity $\msequence{A}{B}$} may \LongVersion{discharge activity $A$ and }evolve into\LongVersion{ a state containing}~$B$.
	This is captured by $\rSeqTwo$.

	\paratitle{Concurrent activity} For concurrent activity $\minterleave{A}{B}$, both activities $A$ and activity $B$ are executed.
	This is captured by $\rFlowOne$ and $\rFlowTwo$ respectively.
	$\rFlowOne$ states that if state $(\Valuation,A,C,D)$ can evolve into $(\Valuation',A',\Constraint',\Delay')$, then a state containing $\minterleave{A}{B}$ can evolve into a state containing $\minterleave{A'}{B}$, if $\Constraint' \land \funIdle(B)$ holds.
	That is, the clock constraints in $\Constraint'$ cannot exceed the duration activity $B$ can last for.
	Rule $\rFlowTwo$ is dual.

\medskip

Let us now explain \cref{def:semantics}.\label{explanation:def:semantics}
Starting from the initial state $\sinit = (\Vinit, \ProcessInit, \ConstraintInit, 0)$, we iteratively construct successor states as follows.
Given a state $(\Valuation, \Process, \Constraint, \Delay)$, a fresh clock~$\clock$ which is not currently associated with~$P$ is picked from~$\ClockSeq$.
The state $(\Valuation, \Process, \Constraint, \Delay)$ is transformed into $(\Valuation, \activation(\Process, \clock), \Constraint \land \clock = 0, \Delay)$, \ie{} timed processes which just become activated are associated with~$\clock$ and~$\Constraint$ is conjuncted with $\clock = 0$.
Then, a firing rule  is applied to get a target state $(\Valuation', \Process', \Constraint', \Delay')$. %
Lastly, clocks which do not appear within $\Process'$ are pruned from~$\Constraint'$.
More in details, the expression 
$\prune{\Clock \setminus \ActiveClocks(\Process')}{\Constraint'}$
denotes that we remove all clocks from the obtained constraint~$\Constraint'$ by existential quantification, except those which are still active in the successor~$\Process'$ of~$\Process$
(recall that $\prune{\Clock}{\Constraint}$ was defined in \cref{ssec:ActiveClocksocks}).

Observe that one clock is introduced and zero or more clocks may be pruned during a transition.
In practice, a clock is introduced only when necessary; if the activation function does not activate any subprocess, no new clocks are created.

\paragraph{Good and bad states}

Let us define good and bad states in the LTS obtained from \cref{def:semantics}.
The execution of a bad activity will make %
the execution of $\CS$ end in an undesired terminal state, which we refer to as a \emph{bad state}.
A terminal state which is not a bad state is called a \emph{good state}.
\LongVersion{%
	The synthesized local time requirement needs to guarantee the avoidance of all bad states and the termination of each run in a good state.
	The fact that each run must end in a good state is explained as follows:
	The non-determinism can be resolved at runtime depending on the variable values%
	, or the response time of a component service. %
	Therefore, we must guarantee that, regardless of the branch chosen by the composite service at runtime, it will end in a good state.
}

\subsection{Application to an example} \label{subsec:example}

\newcommand{\ltsko}{k_0}
\newcommand{\ltspo}{mpick}
\newcommand{\ltsso}[1]{\mystate_{#1}}
\newcommand{\ltsuser}{User}
\newcommand{\ltsreply}{reply}
\newcommand{\ltstime}{\param_{\paid}}
\newcommand{\ltsstop}{stop}
\newcommand{\ltsrgood}{r_{good}}
\newcommand{\ltsrbad}{r_{bad}}

Consider a composite service $\CS$ starting from $pick(\paid\conditionArrow \reply{\ltsuser}, alrm(1) \conditionArrow [\reply{\ltsuser}]_{bad})$.
Assume $\ltstime$ is the parametric response time of service~$\paid$.
(Note that $\CS$ is a part of the \newsInfoShort{} example from \cref{sec:timeBpelExample}.)
The states of $\CS$ computed according to \cref{def:semantics} are given in \cref{fig:PickActivity}, including intermediate states (detailed in the following).
Since $\CS$ has no variable, then $\Valuation=\emptyset$ in all states; therefore, we omit the component $\Valuation$ from all states for \LongVersion{the }sake of brevity.

\begin{figure}[t]
	\tikzset{bag/.style={text centered,yshift=-0.2cm},
 service/.style={align=left, text width=11cm}}
\tikzstyle{every edge}=[draw,->,font=\scriptsize]

{\centering
\begin{tikzpicture}[->,auto,node distance=2mm]
	\node[bag](s0){$s_0:(mpick, true, 0)$};
	\node[bag, below= of s0](s1){$s_0^x:(mpick_x,x=0,0)$};
	\node[bag, below = of s1,xshift=-8em](s2){$s_1':(r_{good},x = \ltstime \wedge x \leq 1, \ltstime)$};
	\node[bag, below = of s1,xshift=7em](s3){$s_2':(r_{bad}, x=1 \wedge x \leq \ltstime, 1)$};
	\node[bag, below = of s2](s4){$s_1:(r_{good},\ltstime\leq 1,\ltstime)$};
	\node[bag, below = of s3](s5){$s_2:(r_{bad},\ltstime\geq 1,1)$};
	\node[bag, below = of s4](s6){$s_1^x:((r_{good})_x,\ltstime\leq 1 \wedge x=0,\ltstime)$};
	\node[bag, below = of s5](s7){$s_2^x:((r_{bad})_x,\ltstime\geq 1 \wedge x=0, 1)$};
	\node[bag, below = of s6](s8){$s_3':(Stop,\ltstime\leq 1\wedge x=0 ,\ltstime)$};
	\node[bag, below = of s7](s9){$s_4':(Stop,\ltstime\geq 1\wedge x=0, 1)$};
	\node[bag, below = of s8](s10){$s_3:(Stop, \ltstime\leq 1,\ltstime)\tick$};
	\node[bag, below = of s9](s11){$s_4:(Stop, \ltstime\geq 1, 1)\cross$};
	\path (s0) edge[intermediate] node{(act)} (s1);
	\path (s1) edge[] node[left]{$\sequence{(\rPickOne,1)}$} (s2);
	\path (s1) edge[] node{$\sequence{(\rPickTwo,1)}$} (s3);
	\path (s2) edge[intermediate] node{(pruning)} (s4);
	\path (s3) edge[intermediate] node{(pruning)} (s5);
	\path (s4) edge[intermediate] node{(act)} (s6);
	\path (s5) edge[intermediate] node{(act)} (s7);
	\path (s6) edge[] node{$\sequence{\rReply}$} (s8);
	\path (s7) edge[] node{$\sequence{\rReply}$} (s9);
	\path (s8) edge[intermediate] node{(pruning)} (s10);
	\path (s9) edge[intermediate] node{(pruning)} (s11);
\end{tikzpicture}

} 

\noindent{}where $mpick =pick(\paid\conditionArrow\ltsrgood, alrm(1) \conditionArrow \ltsrbad)$, $\ltsrgood = \reply{\ltsuser}$, $ \ltsrbad=[\reply{\ltsuser}]_{bad}$, and $\ltstime$ is the parametric response time of service $\paid$.
	\caption{Computing states of service $\CS$ (including intermediate states)}
	\label{fig:PickActivity}
\end{figure}

\begin{itemize}
\item At state $\ltsso{0}$, the activation function assigns clock $x$ to record time elapsing of pick activity $\ltspo$, with $x$ initialized to zero.
	The tuple becomes the intermediate state $\ltsso{0}^x=(\ltspo_x, x=0,0)$.

\item From intermediate state $\ltsso{0}^x$, the process may evolve into the intermediate state
$\ltsso{1}'$ by applying the rule $\rPickOne{}$, if the constraint $\Constraint_1 =((x=\ltstime) \land idle(\ltspo_x) \land \timelaps{(x=0)})$, where $idle(\ltspo_x)=(x\leq \ltstime \land x\leq 1)$ and $\timelaps{(x=0)}$ (\ie{} $x\geq0$), is satisfiable.
	Intuitively, $\Constraint_1$ denotes the constraint where $\ltstime$ time units elapsed since clock~$\clock$ has started.
	In fact, $\Constraint_1$ is satisfiable (for example with $\ltstime=0.5$ and $x=0.5$).
	Therefore, it may evolve into the intermediate state $\ltsso{1}'=(\ltsrgood,(x = \ltstime) \wedge \funIdle(mpick_x) \wedge \timelaps{(x=0)}, \ltstime)=(\ltsrgood, (x=\ltstime) \land x \leq 1, \ltstime)$.
	Since clock $x$ is not used anymore in $\ltsso{1}'.P$ which is $\ltsrgood$, it is pruned.
	After pruning of clock variable $x$ and simplification of the expression, the intermediate state $\ltsso{1}'$ becomes the state $\ltsso{1}=(\ltsrgood, \ltstime \leq 1, \ltstime)$.

\item From intermediate state $\ltsso{0}^x$, the process may also evolve into the intermediate state $\ltsso{2}'$, by applying the rule $\rPickTwo{}$, if the constraint $\Constraint_2 =((x=1) \land idle(\ltspo_x) \land \timelaps{(x=0)})$, where $idle(\ltspo_x)=(x\leq \ltstime \land x\leq 1)$ and $\timelaps{(x=0)}$ (\ie{} $x\geq0$), is satisfiable.
It is easy to see that $\Constraint_2$ is satisfiable; therefore, the process may evolve into the intermediate state $\mystate_2'=(\ltsrbad, (x=1) \land x\leq \ltstime, 1)$. After clock pruning from intermediate state $\mystate_2'$, it becomes state $\mystate_2=(\ltsrbad,  \ltstime \geq 1, 1)$.

\item From state $\ltsso{1}$, activation function assigns clock $x$ to the reply activity $\ltsrgood$, and the process evolves into intermediate state $\ltsso{1}^x$. From\LongVersion{ intermediate state}~$\ltsso{1}^x$, the process may evolve into intermediate state $\ltsso{3}'$ by applying rule $\rReply{}$, if the constraint $\Constraint_3= ((x=0) \land \timelaps{(\ltstime \leq 1)})$ is satisfiable, where $\timelaps{(\ltstime \leq 1)}= \ltstime \leq 1$. In fact it is, and therefore it evolves into state $\ltsso{3}'=(Stop, \ltstime \leq 1 \land (x=0), \ltstime)$. After pruning of the non-active clock, it evolves into the terminal state $\ltsso{3}=(Stop, \ltstime \leq 1, \ltstime)$.
Since the terminal state is not caused by a bad activity, $\ltsso{3}$ is considered as a good state, denoted by $\tick$ in \cref{fig:PickActivity}.

  \item From state $\ltsso{2}$, the process may also evolve into the terminal state $\ltsso{4}=(Stop, \ltstime \geq 1, 1)$. %
  Since the terminal state is caused by a bad activity, it is considered as a bad state, denoted by~$\cross$ in \cref{fig:PickActivity}.
 \end{itemize}

Note that all states $\mystate_i^x$ and $\mystate_j'$, where $i, j\in \grandn$ and $0 \leq i \leq 4$, are intermediate states. State $\mystate_i^x$ is the state $\mystate_i$ after clock assignment operations are applied. State $\mystate_j'$ is the state $\mystate_j$ before clock pruning operations are applied.
These intermediate states are given in \cref{fig:PickActivity} to illustrate in details the\ls{of}\ea{I think ``the'' is correct only; unless ``in details'' is not correct?} application of the semantics.
The LTS of $\CS$ (without the intermediate states) is given in \cref{fig:PickActivitywithoutIn}.

\begin{figure}[t]
	\tikzset{bag/.style={text centered,yshift=-0.2cm},
 service/.style={align=left, text width=10cm}}
\tikzstyle{every edge}=[draw,->,font=\scriptsize]

{\centering
\begin{tikzpicture}[auto, node distance=2mm]
	\node[bag](s0){$s_0:(mpick, true, 0)$};
	\node[bag, below left= of s0](s4){$s_1:(r_{good},\ltstime\leq 1,\ltstime)$};
	\node[bag, below right= of s0](s5){$s_2:(r_{bad},\ltstime\geq 1,1)$};
	\node[bag, below = of s4](s10){$s_3:(Stop, \ltstime\leq 1,\ltstime)\tick$};
	\node[bag, below = of s5](s11){$s_4:(Stop, \ltstime\geq 1, 1)\cross$};
	
	\path (s0) edge[] node[left]{$\sequence{(\rPickOne,1)}$} (s4);
	\path (s0) edge[] node{$\sequence{(\rPickTwo,1)}$} (s5);
	\path (s4) edge[] node{$\sequence{\rReply}$} (s10);
	\path (s5) edge[] node{$\sequence{\rReply}$} (s11);
\end{tikzpicture}

}

where $mpick = pick(\paid \conditionArrow\ltsrgood, alrm(1) \conditionArrow \ltsrbad)$, $\ltsrgood = \reply{\ltsuser}$, $ \ltsrbad=[\reply{\ltsuser}]_{bad}$, and $\ltstime$ is the parametric response time of service $\paid$.
	\caption{LTS of service $\CS$}
	\label{fig:PickActivitywithoutIn}
\end{figure}

\subsection{A technical result: the reachability condition}\label{ss:theorems}

We defined the operational semantics of parametric composite service models as an LTS, the states of which contain information on clocks and parameters in the form of a constraint~$\Constraint$.
We now show that, for any reachable state of this LTS along a run, a parameter valuation $\pval$ satisfies~$\Constraint$ iff the model valuated with~$\pval$ has an equivalent run.
This is called the \emph{reachability condition}.
Similar results have been proved for parametric timed automata~\cite{HRSV02}, parametric time Petri nets~\cite{TLR09} or parametric stateful timed CSP~\cite{ALSD14}.

We first need several definitions and intermediate results.
Given a parametric service model~$\CS$ and a parameter valuation~$\pval$, let us relate runs of~$\LTS{\CS}$ and $\LTS{\CS[\pval]}$.
We will say that two runs are equivalent if they share the same discrete support, \ie{} follow the same application of sequences of rules regardless of the actual timing values.

\begin{definition}[equivalent runs]\label{def:equivalent-runs}
	Let $\CS$ be a parametric service model, and
	let~$\pval$ be a parameter valuation.
	
	Let $\varrun = \sequence{ (\Valuation_0,\Process_0,\Constraint_0,\Delay_0), \ruleseq_0, (\Valuation_1,\Process_1,\Constraint_1,\Delay_1), \ldots, \ruleseq_{n-1},$ $(\Valuation_n,\Process_n,\Constraint_n,\Delay_n) }$ be a run of $\LTS{\CS[\pval]}$.
	Let $\varrun' = \sequence{ (\Valuation_0',\Process_0',\Constraint_0',\Delay_0'), \ruleseq_0', (\Valuation_1',\Process_1',\Constraint_1',\Delay_1'), \ldots, \ruleseq_{n-1}', (\Valuation_n',\Process_n',\Constraint_n',$ $\Delay_n') }$ be a run of $\LTS{\CS}$.
	
	The two runs $\varrun$ and $\varrun'$ are \emph{equivalent} if
	$\Valuation_i = \Valuation_i'$ and $\Process_i = \Process_i'[\pval]$ for $0 \leq i \leq n$
	and
	$\ruleseq_i=\ruleseq_i'$ for $0 \leq i \leq n-1$.
\end{definition}

The following lemma states that, given a run of~$\LTS{\CS[\pval]}$, there exists a unique equivalent run in~$\LTS{\CS}$.
\begin{proposition}\label{proposition:unique-equivalent-run}
	Let $\CS$ be a parametric service model, and
	let~$\pval$ be a parameter valuation.
	Let $\varrun_\pval$ be a run of $\LTS{\CS[\pval]}$.
	
	Then there exists a unique run of $\LTS{\CS}$ equivalent to~$\varrun_\pval$.
\end{proposition}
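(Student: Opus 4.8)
The plan is to prove the statement by induction on the length $n$ of the run $\varrun_\pval$, building the equivalent run in $\LTS{\CS}$ one transition at a time and arguing that every choice is forced. Throughout I would carry, as a strengthened induction hypothesis, not only the three equivalence conditions of \cref{def:equivalent-runs} ($\Valuation_i = \Valuation_i'$, $\Process_i = \Process_i'[\pval]$ and $\ruleseq_i = \ruleseq_i'$), but also the two additional relations $\Constraint_i = \Constraint_i'[\pval]$ and $\Delay_i = \Delay_i'[\pval]$. Maintaining this extra invariant on the constraints is precisely what will let the satisfiability side condition of \cref{def:semantics} transfer from $\LTS{\CS[\pval]}$ to $\LTS{\CS}$.

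For the base case $n=0$ the two runs reduce to their initial states. By \cref{def:semantics} the initial state of $\LTS{\CS}$ is $(\Vinit, \ProcessInit, \ConstraintInit, 0)$, and (viewing $\CS[\pval]$ as the non-parametric model $(\Var,\Vinit,\ProcessInit[\pval])$) that of $\LTS{\CS[\pval]}$ is its valuation; hence $\Process_0'[\pval] = \ProcessInit[\pval] = \Process_0$, $\Valuation_0' = \Vinit = \Valuation_0$ and $\Constraint_0'[\pval] = \ConstraintInit[\pval] = \Constraint_0$, so all invariants hold, and the initial state is the unique candidate. For the inductive step I would first observe that the \emph{discrete} effect of each rule — which process $\Process_{i+1}'$ and which variable valuation $\Valuation_{i+1}'$ result — depends only on $\Process_i'$, on $\Valuation_i'$ and on the rule label $\ruleseq_i$, and not at all on the parameter values. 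The key point is that $\ruleseq_i$ already records every non-deterministic choice: the index in $(\rPickOne, i)$ or $(\rPickTwo, j)$ fixes the selected pick branch, and $\rCondThree$ / $\rCondFour$ fix the guard outcome. Applying $\ruleseq_i$ from the state with $\Process_i'[\pval] = \Process_i$ therefore yields a successor with $\Process_{i+1}'[\pval] = \Process_{i+1}$ and $\Valuation_{i+1}' = \Valuation_{i+1}$. I would also note that the fresh clock chosen by $\activation$ agrees on both sides, since it is the first clock of $\ClockSeq$ not in $\ActiveClocks(\Process_i')$, and $\ActiveClocks$ is determined by the clock subscripts of the process, which parameter valuation leaves untouched; thus $\ActiveClocks(\Process_i') = \ActiveClocks(\Process_i)$.

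The remaining and main work is to re-establish the constraint invariant and thereby transfer satisfiability. Every constraint operation used in the rules of \cref{appendix:semantics} and in \cref{def:semantics} — time elapsing $\timelaps{\Constraint}$, conjunction with the atomic clock constraints such as $\clock = \param_\Service$ or $\clock = a_j$ produced by $\funIdle$, and the pruning $\prune{\Clock \setminus \ActiveClocks(\Process')}{\Constraint'}$ realised by existential elimination of clocks — commutes with substituting the parameters by their values $\pval$, because $\pval$ assigns values to parameters only, whereas these operations act on clock variables. Using $(\timelaps{\Constraint})[\pval] = \timelaps{(\Constraint[\pval])}$, the analogous commutations for conjunction, and $(\prune{\Clock'}{\Constraint})[\pval] = \prune{\Clock'}{(\Constraint[\pval])}$, together with the hypothesis $\Constraint_i = \Constraint_i'[\pval]$, I obtain $\Constraint_{i+1} = \Constraint_{i+1}'[\pval]$. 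Since $(\Valuation_{i+1},\Process_{i+1},\Constraint_{i+1},\Delay_{i+1})$ is a genuine state of $\LTS{\CS[\pval]}$, its constraint $\Constraint_{i+1}$ is satisfiable; as $\Constraint_{i+1} = \Constraint_{i+1}'[\pval]$, any clock valuation witnessing $\Constraint_{i+1}$ extended by $\pval$ witnesses $\Constraint_{i+1}'$, so $\Constraint_{i+1}'$ is satisfiable and the transition indeed belongs to $\Steps$ in $\LTS{\CS}$. Uniqueness of the entire run then follows because, the labels $\ruleseq_i$ being fixed by the equivalence requirement, each successor state is determined deterministically by the construction above, so no alternative equivalent run can exist.

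I expect the principal obstacle to be the careful verification that all constraint-manipulating operations commute with parameter valuation — in particular that the Fourier–Motzkin elimination of clocks commutes with fixing the parameters to constants — since it is exactly this commutation that simultaneously re-establishes the invariant $\Constraint_{i+1} = \Constraint_{i+1}'[\pval]$ and delivers the satisfiability of $\Constraint_{i+1}'$ on which the existence of the matching transition rests.
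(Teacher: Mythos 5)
Your proposal is correct and follows essentially the same route as the paper's proof: induction on the length of the run, carrying a strengthened invariant relating the constraints of corresponding states, which is exactly what transfers satisfiability of the successor constraint from $\LTS{\CS[\pval]}$ to $\LTS{\CS}$, with uniqueness following because the rule labels already fix all discrete choices. The only (harmless) differences are that you maintain the equality invariant $\Constraint_i = \Constraint_i'[\pval]$ (the one the paper uses for the dual \cref{proposition:unique-equivalent-run-opposite}) whereas the paper's proof of this proposition only keeps the inclusion $\Constraint_i \subseteq \Constraint_i'$, and that you are somewhat more explicit than the paper about the choice of the fresh clock and the commutation of time elapsing, conjunction and clock pruning with parameter valuation.
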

\begin{proof}
	By induction on the length of the runs.
	We prove in fact a slightly stronger result:
		given a state $(\Valuation, \Process, \Constraint, \Delay)$ of a run~$\varrun$ in $\LTS{\CS[\pval]}$,
		and given a state $(\Valuation', \Process', \Constraint', \Delay')$ of the equivalent run~$\varrun'$ in $\LTS{\CS}$,
	we show that these two runs are not only equivalent, but also that $\Constraint \subseteq \Constraint'$.
	
\paratitle{Base case} From \cref{def:semantics}, the initial state of $\LTS{\CS}$ is $(\Vinit, \ProcessInit, \ConstraintInit, 0)$.
		The initial state of $\LTS{\CS[\pval]}$ is $(\Vinit, \ProcessInit[\pval], \ConstraintInit[\pval], 0)$.
		Since $\ConstraintInit[\pval] \subseteq \ConstraintInit$, then the result trivially holds.
		
\paratitle{Induction step}
Assume $\varrun_\pval$ is a run of $\LTS{\CS[\pval]}$ of length~$m$ reaching state $(\Valuation_1, \Process_1, \Constraint_1, \Delay_1)$;
		assume there exists a unique run of $\LTS{\CS}$ equivalent to~$\varrun_\pval$ and of length~$m$, reaching state $(\Valuation_1', \Process_1', \Constraint_1', \Delay_1')$.
		From \cref{def:equivalent-runs}, it holds that $\Valuation_1 = \Valuation_1'$ and $\Process_1 = \Process_1'[\pval]$.
		From the induction hypothesis, it holds that  $\Constraint_1 \subseteq \Constraint_1'$.

		Let $(\Valuation_2, \Process_2, \Constraint_2, \Delay_2)$ be the successor state of $(\Valuation_1, \Process_1, \Constraint_1, \Delay_1)$ via a given sequence of rules~$\ruleseq$ in~$\varrun_\pval$.

		Assume $(\Valuation_2, \Process_2, \Constraint_2, \Delay_2)$ is obtained from $(\Valuation_1, \Process_1, \Constraint_1, \Delay_1)$ by applying rule~$\rSInv$ in \cref{appendix:semantics}.
		Since $\Constraint_1 \subseteq \Constraint_1'$, then rule~$\rSInv$ can also be applied to $(\Valuation_1', \Process_1', \Constraint_1', \Delay_1')$, yielding a state $(\Valuation_2', \Process_2', \Constraint_2', \Delay_2')$.
		Now, we have:\\
		$\Constraint_1 \subseteq \Constraint_1' \Longrightarrow \timelaps{\Constraint_1} \subseteq \timelaps{\Constraint_1'}$\\
		$\Longrightarrow (x=\pval(\param_{\Service}) \wedge \timelaps{\Constraint_1}) \subseteq (x=\param_{\Service} \wedge \timelaps{\Constraint_1'})$\\
		$\Longrightarrow \Constraint_2 \subseteq \Constraint_2'$.\\
		In particular, $\Constraint_2 \subseteq \Constraint_2'$ implies that $\Constraint_2'$ is non-empty, hence the state $(\Valuation_2', \Process_2', \Constraint_2', \Delay_2')$ is a valid state.
		In addition, since $\Process_1 = \Process_1'[\pval]$ and rule~$\rSInv$ derives to $\lstop$, then $\Process_2 = \Process_2'[\pval]$.
		Variables are updated in the same manner on both sides, hence $\Valuation_2 = \Valuation_2'$.\ea{but there could be some non-determinism in variable updates, don't you think so?}%
		\tth{Please forget about my previous comments, actually after rereading it again, I think that the proof is correct. On the other hand, what could likely result in the non-determinism of variable updates that you could think of?}
		\ea{I suspect there might be some non-deterministic variable modification, but this is not a real problem (if this happens, we can still remember the ``past'', \ie{} which exact sequence of transition was taken in the process, so as to guarantee uniqueness of the symbolic counterpart)}\tth{Ok, then we can leave that first, i can't immediately think of any case like that}
		The proof is similar for other rules in \cref{appendix:semantics}.

		Finally, %
		the successor state $(\Valuation_2', \Process_2', \Constraint_2', \Delay_2')$ is the unique successor state of $(\Valuation_1', \Process_1', \Constraint_1', \Delay_1')$ in $\LTS{\CS}$ via this sequence of rules.
		Hence there exists a unique run of $\LTS{\CS}$ equivalent to~$\varrun_\pval$ and of \mbox{length~$m+1$}.
	
\end{proof}

We now prove the dual result.
\cref{proposition:unique-equivalent-run-opposite} states that, given a run~$\varrun$ of~$\LTS{\CS}$, there exists a unique equivalent run in~$\LTS{\CS[\pval]}$, provided $\pval$ satisfies the parametric constraint associated with the last state of~$\varrun$.
\LongVersion{%
	The result is dual to what we proved in \cref{proposition:unique-equivalent-run}.
}

\begin{proposition}\label{proposition:unique-equivalent-run-opposite}
	Let $\CS$ be a parametric service model, and
	let~$\pval$ be a parameter valuation.
	Let $\varrun$ be a run of $\LTS{\CS}$ ending in a state $(\Valuation_n, \Process_n, \Constraint_n, \Delay_n)$.
	
	For any $\pval \models \projectP{\Constraint_n}$, there exists a unique run of $\LTS{\CS[\pval]}$ equivalent to~$\varrun$.
\end{proposition}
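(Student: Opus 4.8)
The plan is to mirror the induction of \cref{proposition:unique-equivalent-run} but to traverse the implication in the opposite direction; the essential new ingredient is a monotonicity property of the projection onto the parameters along the run.

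First I would establish that the parameter projections are non-increasing along $\varrun$, that is, $\projectP{\Constraint_n} \subseteq \projectP{\Constraint_i}$ for every $0 \leq i \leq n$. This is read off from the three operations through which a successor constraint is built in \cref{def:semantics}. Time elapsing acts on clocks only, so $\projectP{\timelaps{\Constraint}} = \projectP{\Constraint}$; pruning clocks is existential quantification over clock variables, which is subsumed by the projection onto $\Param$, so $\projectP{\prune{\Clock'}{\Constraint}} = \projectP{\Constraint}$ for any $\Clock' \subseteq \Clock$; and conjoining the freshly reset clock constraint $\clock = 0$ (with $\clock$ fresh) together with the firing condition can only remove satisfying valuations, hence can only shrink $\projectP{\cdot}$. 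Composing these facts over a single transition yields $\projectP{\Constraint_{i+1}} \subseteq \projectP{\Constraint_i}$, and the monotonicity follows. In particular, the hypothesis $\pval \models \projectP{\Constraint_n}$ entails $\pval \models \projectP{\Constraint_i}$ for all~$i$.

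Then I would reconstruct the equivalent run in $\LTS{\CS[\pval]}$ by induction on the length of $\varrun$, keeping the invariant that the $i$th reconstructed state has variable valuation $\Valuation_i$, process $\Process_i[\pval]$, and constraint $\Constraint_i[\pval]$. For the base case the initial states of \cref{def:semantics} agree up to valuation, and $\ConstraintInit[\pval]$ is satisfiable since $\pval \models \projectP{\Constraint_0} = \projectP{\ConstraintInit}$. For the induction step, assume the run is reconstructed up to index~$i$ and that $\varrun$ performs $(\Valuation_i, \Process_i, \Constraint_i, \Delay_i) \to (\Valuation_{i+1}, \Process_{i+1}, \Constraint_{i+1}, \Delay_{i+1})$ via the rule sequence $\ruleseq_i$. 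Because valuation by $\pval$ commutes with the activation function and with each firing rule of \cref{appendix:semantics}, applying $\ruleseq_i$ to the reconstructed state yields a state with process $\Process_{i+1}[\pval]$, variable valuation $\Valuation_{i+1}$, and constraint $\Constraint_{i+1}[\pval]$. The sole side condition of \cref{def:semantics} is that this constraint be satisfiable; and $\Constraint_{i+1}[\pval]$ is non-empty precisely because $\pval \models \projectP{\Constraint_{i+1}}$, which holds by the monotonicity established above. Hence the transition is legal in $\LTS{\CS[\pval]}$, and by \cref{def:equivalent-runs} the reconstructed prefix remains equivalent to $\varrun$.

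The main obstacle is the monotonicity lemma, since it is what lets the single hypothesis on the final constraint discharge the satisfiability side condition at every intermediate step; the care needed there is to confirm that, among time elapsing, conjunction, and clock pruning, only the conjunction affects the parameter projection and only by tightening it. Uniqueness is then immediate from the determinism of the semantics once the discrete support is fixed: the activation function picks the first fresh clock of $\ClockSeq$ deterministically, and every element of $\ruleseq_i$ (including the indices carried by $\rPickOne$ and $\rPickTwo$) singles out one successor. The remainder is the routine commutation-of-valuation argument dual to \cref{proposition:unique-equivalent-run}.
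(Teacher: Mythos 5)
Your proposal is correct and follows essentially the same route as the paper's proof: an induction on the length of the run maintaining the strengthened invariant that the reconstructed constraint equals $\Constraint_i[\pval]$, with satisfiability of each successor discharged by $\pval \models \projectP{\Constraint_{i+1}}$ and uniqueness following from determinism of the rules, exactly as in the dual argument of \cref{proposition:unique-equivalent-run}. The one genuine addition is your explicit monotonicity lemma $\projectP{\Constraint_{i+1}} \subseteq \projectP{\Constraint_i}$ (time elapsing, fresh-clock conjunction and clock pruning preserve the parameter projection, guards only tighten it), which the paper uses only tacitly when it applies its induction hypothesis to the run prefix; making it explicit is a welcome strengthening rather than a divergence.
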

\begin{proof}
	By induction on the length of the runs.
	We prove in fact a slightly stronger result:
		given a state $(\Valuation', \Process', \Constraint', \Delay')$ of a run~$\varrun'$ in $\LTS{\CS}$,
		and given a state $(\Valuation, \Process, \Constraint, \Delay)$ of the equivalent run~$\varrun$ in $\LTS{\CS[\pval]}$,
	we show that these runs are not only equivalent, but also that $\Constraint = \Constraint'[\pval]$.

Base Step: From \cref{def:semantics}, the initial state of $\LTS{\CS}$ is $(\Vinit, \ProcessInit, \ConstraintInit, 0)$.
		The initial state of $\LTS{\CS[\pval]}$ is $(\Vinit, \ProcessInit[\pval], \ConstraintInit[\pval], 0)$.
		Since $\ConstraintInit=true$ then $\ConstraintInit = \ConstraintInit[\pval]$.
		Hence the result trivially holds in that case.

	 Induction step: Assume $\varrun$ is a run of $\LTS{\CS}$ of length~$m$ reaching state $(\Valuation'_1, \Process'_1, \Constraint'_1, \Delay'_1)$.
		Let $(\Valuation_2', \Process_2', \Constraint_2', \Delay_2')$ be the successor state of $(\Valuation_1', \Process_1', \Constraint_1', \Delay_1')$ via a sequence of rules~$\ruleseq$ in~$\varrun$.
		Let $\pval \models \projectP{\Constraint'_2}$.
		Assume there exists a unique run of $\LTS{\CS[\pval]}$ equivalent to~$\varrun$ and of length~$m$, reaching state $(\Valuation_1, \Process_1, \Constraint_1, \Delay_1)$.
		From \cref{def:equivalent-runs}, it holds that $\Valuation_1 = \Valuation_1'$ and $\Process_1 = \Process_1'[\pval]$.
		From the induction hypothesis, it holds that $\Constraint_1 = \Constraint_1'[\pval]$.

		Assume $(\Valuation_2', \Process_2', \Constraint_2', \Delay_2')$ is obtained from $(\Valuation_1', \Process_1', \Constraint_1', \Delay_1')$ by applying rule~$\rSInv$ in \cref{appendix:semantics}.
		Recall that $\Constraint_1 = \Constraint_1'[\pval]$;
		since $\Process_1 = \Process_1'[\pval]$ (from \cref{def:equivalent-runs}), we can apply rule~$\rSInv$ to $(\Valuation_1, \Process_1, \Constraint_1, \Delay_1)$, yielding a state $(\Valuation_2, \Process_2, \Constraint_2, \Delay_2)$.
		From \cref{appendix:semantics}, we know that $\Constraint_2 = (x=\param_{\Service} \wedge \timelaps{(\Constraint_1}))$ and $\Constraint_2' = (x=\param_{\Service} \wedge (\timelaps{\Constraint_1'}))$.
		Now, we have:\\
		$ \Constraint_2 = \big(x=\pval(\param_{\Service}) \wedge \timelaps{(\Constraint_1})\big)
			\\~~~~= \big(x=\pval(\param_{\Service}) \wedge \timelaps{(\Constraint_1'[\pval])}\big)$\hfill(induction hypothesis)$
			\\~~~~= \big(x=\pval(\param_{\Service}) \wedge \timelaps{(\Constraint_1' \land \bigwedge_{\param_i \in \Param} \param_i = \pval_i)} \big)$\hfill(definition of valuation)$
			\\~~~~= \big(x=\pval(\param_{\Service}) \wedge \timelaps{(\Constraint_1')}\big) \land \bigwedge_{\param_i \in \Param} \param_i = \pval_i ~~$\hfill(property of time elapsing)$
			\\~~~~= \big(x=\param_{\Service} \wedge \timelaps{(\Constraint_1')}\big) \land \bigwedge_{\param_i \in \Param} \param_i = \pval_i ~~$\hfill(definition of valuation)$
			\\~~~~= \Constraint_2' \land \bigwedge_{\param_i \in \Param} \param_i = \pval_i$\hfill(definition of $\Constraint_2'$)$
			\\~~~~= \Constraint_2'[\pval]$\hfill(definition of valuation)$
			$
		\\
		\ea{%
			I fixed the proof; there is still the claim ``property of time elapsing'' that is unproved here, but I cannot find a proof I had written in some report a long time ago; let's see whether some reviewer asks for it (in which case I'll add a lemma in the preliminaries)
			\\
		}
		Note that adding $x=\param_{\Service}$ while keeping satisfiability of the expression is only true because $\pval \models \projectP{\Constraint_2'}$.
		This implies that $\Constraint_2'[\pval]$ is non-empty, hence the state $(\Valuation_2, \Process_2, \Constraint_2, \Delay_2)$ is a valid state.
		In addition, since $\Process_1 = \Process_1'[\pval]$ and rule~$\rSInv$ derives to $\lstop$, then $\Process_2 = \Process_2'[\pval]$.
		Similarly, variables are updated in the same manner on both sides, hence $\Valuation_2 = \Valuation_2'$.
		The proof is similar for other rules in \cref{appendix:semantics}.

		The proof of uniqueness is identical to that of~\cref{proposition:unique-equivalent-run}.
	
\end{proof}

\cref{proposition:unique-equivalent-run,proposition:unique-equivalent-run-opposite} give the following theorem.

\begin{theorem}[reachability condition]\label{theorem:reachability}
	Let $\CS$ be a parametric service model, and
	let~$\pval$ be a parameter valuation.
	Let $\varrun$ be a run of $\LTS{\CS}$ ending in a state $(\Valuation_n, \Process_n, \Constraint_n, \Delay_n)$.
	
	There exists a run of $\LTS{\CS[\pval]}$ equivalent to~$\varrun$ iff $\pval \models \projectP{\Constraint_n}$.
\end{theorem}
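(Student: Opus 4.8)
The plan is to read the biconditional as the conjunction of the two directions already isolated in the two preceding propositions, so that essentially no new induction is needed: the whole content of the theorem has been pushed into \cref{proposition:unique-equivalent-run,proposition:unique-equivalent-run-opposite}, and what remains is to glue them together and extract the parameter-level statement.

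For the ($\Leftarrow$) direction I would simply invoke \cref{proposition:unique-equivalent-run-opposite}, which states verbatim that whenever $\pval \models \projectP{\Constraint_n}$ there is a (unique) run of $\LTS{\CS[\pval]}$ equivalent to $\varrun$. So this direction is immediate. For the ($\Rightarrow$) direction I would start from a run $\varrun_\pval$ of $\LTS{\CS[\pval]}$ equivalent to $\varrun$ and aim at $\pval \models \projectP{\Constraint_n}$. First I would apply \cref{proposition:unique-equivalent-run} to $\varrun_\pval$, obtaining the unique run of $\LTS{\CS}$ equivalent to $\varrun_\pval$; since $\varrun$ is itself a run of $\LTS{\CS}$ equivalent to $\varrun_\pval$ (the relation of \cref{def:equivalent-runs} being symmetric between the valuated and the parametric side), uniqueness forces this run to be exactly $\varrun$, and in particular its final constraint is $\Constraint_n$. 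Next I would use the stronger inclusion proved inside \cref{proposition:unique-equivalent-run}, namely that the final constraint $\Constraint_n^\pval$ of $\varrun_\pval$ satisfies $\Constraint_n^\pval \subseteq \Constraint_n$. Because $\varrun_\pval$ is a genuine run of $\LTS{\CS[\pval]}$, its last state is valid, hence $\Constraint_n^\pval$ is non-empty; moreover, living in the valuated model, it entails $\bigwedge_{\param_i \in \Param} \param_i = \pval(\param_i)$. Any witness clock valuation satisfying $\Constraint_n^\pval$ therefore satisfies $\Constraint_n$ together with the parameter valuation $\pval$, and projecting onto $\Param$ gives $\pval \models \projectP{\Constraint_n}$, as required.

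The routine bookkeeping here is negligible; the one point deserving care is the identification step in the forward direction. I must make explicit that the run returned by \cref{proposition:unique-equivalent-run} applied to $\varrun_\pval$ coincides with $\varrun$, so that the inclusion I exploit genuinely concerns the constraint $\Constraint_n$ named in the theorem statement and not some a priori different final constraint; this rests on the uniqueness clause together with symmetry of equivalence, both already available. The remaining subtlety — passing from a clock-and-parameter witness of $\Constraint_n$ to the parameter-only statement $\pval \models \projectP{\Constraint_n}$ — is exactly the defining property of the projection $\projectP{\cdot}$ onto $\Param$ recalled in \cref{ssec:ActiveClocksocks}, so I would invoke it directly rather than re-derive it.
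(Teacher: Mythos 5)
Your proposal is correct and follows essentially the same route as the paper, whose proof of \cref{theorem:reachability} simply cites \cref{proposition:unique-equivalent-run,proposition:unique-equivalent-run-opposite}: the backward direction is \cref{proposition:unique-equivalent-run-opposite} verbatim, and the forward direction is the intended combination of the uniqueness clause and the constraint inclusion established inside the proof of \cref{proposition:unique-equivalent-run}. You merely spell out the gluing (identification of the unique equivalent parametric run with $\varrun$, non-emptiness of the valuated final constraint, and projection onto $\Param$) that the paper leaves implicit.
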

\section{Synthesizing the static LTC}\label{sec:syncConstraint}

Given $\CS=(\Var, \Param,\ProcessInit, \ConstraintInit)$, the \emph{global time requirement} for $\CS$ requires that, for every state $(\Valuation, \Process,\Constraint,\Delay)$ reachable from the initial state $( \Vinit,  \ProcessInit,\ConstraintInit,0)$ in its LTS, the constraint $\Delay \leq  \globalDelay$ is satisfied, where $\globalDelay \in \grandrplus$ is the \emph{global time constraint}.
The \emph{local time requirement} requires that if the response times of all component services of $\CS$ satisfy the \emph{local time constraint} (LTC) $\localDelay \in \setP$, then the service $\CS$ satisfies the global time requirement.

In this section, given a global time constraint $\globalDelay$ for a service $\CS$, we present an approach to synthesize the static LTC (\dLTC{}) $\localDelay$\LongVersion{ based on the LTS}.
The \dLTC{} will be given in the form of an NNCC over~$\Param$.
We show that if the response times of all component services of $\CS$ satisfy the local time requirement, then the service $\CS$ will end in a good state within $\globalDelay$ time units.

\subsection{Motivation}

Let $\param_i\in \grandqplus$ be the parametric response time of component service $\Servicei{i}$ for $i\in\{1,\ldots,n\}$, and let $\Param = \{\param_1,\ldots,\param_n\}$ be the set of component service parametric response times.
Using constraints over~$\Param$, we can represent an infinite number of possible response times symbolically.
The local time requirement of component services of $\CS{}$ is specified as a constraint over~$\Param$.
An example of a local time requirement is $(\param_1 \leq 6) \land (\param_2 \leq 5)$.
This local time requirement specifies that, in order for $\CS$ to satisfy the global time requirement, service $\Servicei{1}$ needs to respond within 6 time units, and service $\Servicei{2}$ needs to respond within 5 time units.
A local time requirement can also be in the form of a dependency between parametric response times, \eg{} $(\param_2 \leq \param_1 \implies \param_1 + \param_2\leq 6) \land (\param_1 \leq \param_2 \implies \param_1\leq 6)$.
\LongVersion{%
	This example requires that when service $\Servicei{2}$ responds not slower than service $\Servicei{1}$, then the sum of the response times of services $\Servicei{1}$ and $\Servicei{2}$ must be at most 6 seconds; however, if service $\Servicei{1}$ responds not slower than service $\Servicei{2}$, then service $\Servicei{1}$ must respond within 6 seconds.
}

In the following, we will propose a technique to synthesize the static LTC in the form of a convex over~$\Param$.
We first give an intuition concerning how to handle the good states (\cref{ss:goodstates}) and the bad states (\cref{ss:badstates});
then, we give the full synthesis algorithm (\cref{ssec:theaglorithm}), apply it to an example (\cref{ssec:ltcexample}) and prove its soundness (\cref{ssec:soundness}).

\subsection{Addressing the good states}\label{ss:goodstates}

We assume a composite service $\CS$ and its LTS $\LTS{\CS{}}=(\States{}, \sinit, \Sequences{\Rules}, \Steps)$; let $\States{}_{good}$ be the set of all good states of~$\LTS{\CS{}}$.
\LongVersion{%
	Given $\LTS{\CS{}}$, our goal is to synthesize the local time requirement for service $\CS$.
}We make two observations here.
First, from \cref{theorem:reachability}, a good state $\mystate_g=(\Valuation_g, \Process_g,\Constraint_g,\Delay_g) \in \States{}_{good}$ is reachable from the initial state $\sinit$ iff $\Constraint_g$ is satisfiable.
Second, whenever the good state $\mystate_g$ is reached, we require that the total delay from initial state $\sinit$ to state $\mystate_g$ must be no larger than the global time constraint $\globalDelay$, \ie{} $\Delay_g \leq \globalDelay$.
To sum up, given a good state $\mystate_g=(\Valuation_g, \Process_g,\Constraint_g,\Delay_g)$ where $\mystate_g\in \States{}_{good}$, we require the constraint $(\projectP{\Constraint_g} \implies (\Delay_g \leq \globalDelay))$ to hold.
The constraint means that whenever $\mystate_g$ is reachable from~$\sinit$, the total (parametric) delay from~$\sinit$ to $\mystate_g$ must be less than the global time constraint $\globalDelay$.
The synthesized \dLTC{} for $\CS$ must include the conjunction of such constraints for each good state $\mystate_g\in {\States{}}_{good}$, that is:
$$\bigwedge_{(\Valuation_g, \Process_g,\Constraint_g,\Delay_g)\in {\States{}}_{good}} (\projectP{\Constraint_g} \implies (\Delay_g \leq \globalDelay))\text{.}$$

\begin{example}\label{example:good}

	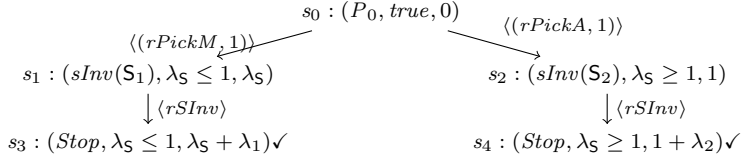
\begin{figure}[t]
	\centering
	\tikzset{bag/.style={text centered,yshift=-0.2cm},
 service/.style={align=left, text width=10cm}}
\tikzstyle{every edge}=[draw,->,font=\scriptsize]

{\centering
\begin{tikzpicture}[auto, node distance=2mm]
	\node[bag](s0){$s_0:(\ProcessInit, true, 0)$};
	\node[bag, below left= of s0](s4){$s_1:(\sInvoke{\Service{}_1},\param_{\Service{}}\leq1,\param_{\Service{}})$};
	\node[bag, below right= of s0](s5){$s_2:(\sInvoke{\Service{}_2},\param_{\Service{}}\geq1,1)$};
	\node[bag, below = of s4](s10){$s_3:(Stop, \param_{\Service{}}\leq1,\param_{\Service{}}+\param_1)\tick$};
	\node[bag, below = of s5](s11){$s_4:(Stop,\param_{\Service{}}\geq1,1+\param_2)\tick$};
	
	\path (s0) edge[] node[left]{$\sequence{(\rPickOne,1)}$} (s4);
	\path (s0) edge[] node{$\sequence{(\rPickTwo,1)}$} (s5);
	\path (s4) edge[] node{$\sequence{\rSInv}$} (s10);
	\path (s5) edge[] node{$\sequence{\rSInv}$} (s11);
\end{tikzpicture}

}
	\caption{LTS of composite service $\CS{}$}\label{fig:pickltsgood}
	\end{figure}

	Let us consider a composite service $\CS{}$ whose process component is $ \ProcessInit = \mpick{\Service}{\sInvoke{\Service_1}}{1}{\sInvoke{\Service_2}}$%
	, where $\Service$ is a component service.
	Assume that $\sInvoke{\Service_j}$ is a component service with parametric response time $\param_j$, for $j \in \{1,2\}$, and $\Service$ has a response time~$\param_\Service$.
	Suppose the global time requirement of the composite service $\CS{}$ is to respond within five seconds.
	\cref{fig:pickltsgood} shows the LTS of~$\CS{}$.

	For composite service $\CS{}$ in \cref{fig:pickltsgood}, we have two good states (states $\mystate_3$ and $\mystate_4$), and the synthesized local time requirement for composite service $\CS{}$ is:

	\[(\param_{\Service}\leq 1) \implies (\param_{\Service}+\param_1 \leq 5) \land (\param_{\Service} \geq 1) \implies (1+\param_2 \leq 5)\]

\end{example}

\subsection{Addressing the bad states}\label{ss:badstates}
Another goal we want to achieve is to avoid all bad states in $\LTS{\CS{}}$. Let $\States{}_{bad}$ be the set of all bad states of service $\LTS{\CS{}}$.
Given a bad state $\mystate_b=(\Valuation_b, \Process_b,\Constraint_b,\Delay_b) \in \States{}_{bad}$, this bad state must not be reachable from the initial state $\sinit$.
Hence, in order to prevent $\Constraint_b$ to be satisfiable, we require that the parameters be taken in the negation of the projection of $\Constraint_b$ onto~$\Param$, \ie{} we require that $\neg \projectP{(\Constraint_b)}$ be satisfiable because of the reachability condition (\cref{theorem:reachability}).
In addition to the good state constraint given in \cref{ss:goodstates}, the synthesized \dLTC{} for $\CS$ must also include the conjunction of such constraints for each bad state $\mystate_b\in {\States{}}_{bad}$, that is:
$$\bigwedge_{(\Valuation_b, \Process_b,\Constraint_b,\Delay_b)\in {\States{}}_{bad}} \big(\neg \projectP{(\Constraint_b)}\big)\text{.}$$

\begin{example}
	Consider a variant~$\CS'$ of \cref{example:good}, where $\sInvoke{\Service_2}$ is now treated as a bad activity, denoted by $[\sInvoke{\Service_2}]_{bad}$.
	This service results in the LTS shown in \cref{fig:pickltsbad}, where state $\mystate_4$ is a bad state.
\LongVersion{%
	Note that the constraint $\mystate_4.\Constraint=\param_{\Service}\geq 1$ is introduced by the $pick$ activity.}
	From \cref{theorem:reachability}, a way to avoid the reachability of $\mystate_4$ is to negate its associated constraint~$\Constraint$. %
	Therefore, the local time requirement for composite service $\CS{}'$ is $(\projectP{\mystate_3.\Constraint} \implies (\mystate_3.\Delay  \leq  \globalDelay)) \land \lnot (\projectP{\mystate_4.\Constraint})$:
	the first term guarantees the reachability of~$\mystate_3$ while the second term guarantees the non-reachability of~$\mystate_4$.
	Therefore, this NNCC ensures that any complete run of the service ends in a good state.
	(This will be proved in \cref{ssec:soundness}.)

	\begin{figure}[t]
		\centering
		\tikzset{bag/.style={text centered,yshift=-0.2cm},
 service/.style={align=left, text width=10cm}}
\tikzstyle{every edge}=[draw,->,font=\scriptsize]

{\centering
\begin{tikzpicture}[auto, node distance=2mm]
	\node[bag](s0){$s_0:(\ProcessInit', true,0)$};
	\node[bag, below left= of s0](s4){$s_1:(\sInvoke{\Service{}_1},\param_{\Service{}} \leq 1,\param_{\Service{}})$};
	\node[bag, below right= of s0](s5){$s_2:([\sInvoke{\Service{}_2}]_{bad},\param_{\Service{}} \geq 1,1)$};
	\node[bag, below = of s4](s10){$s_3:(Stop, \param_{\Service{}} \leq 1,\param_{\Service{}}+\param_1)\tick$};
	\node[bag, below = of s5](s11){$s_4:(Stop,\param_{\Service{}} \geq 1,1+\param_2)\cross$};
	
	\path (s0) edge[] node[left]{$\sequence{(\rPickOne,1)}$} (s4);
	\path (s0) edge[] node{$\sequence{(\rPickTwo,1)}$} (s5);
	\path (s4) edge[] node{$\sequence{\rSInv}$} (s10);
	\path (s5) edge[] node{$\sequence{\rSInv}$} (s11);
\end{tikzpicture}

}
		\caption{LTS of composite service $\CS{}'$}
		\label{fig:pickltsbad}
	\end{figure}
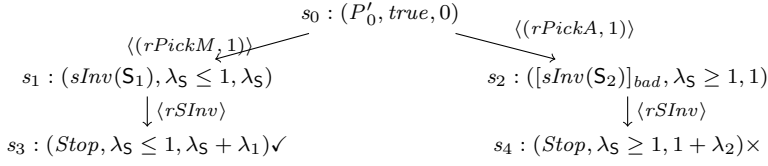
\end{example}

\subsection{Synthesis algorithms}\label{ssec:theaglorithm}

\cref{algo:ltc} presents the entry algorithm for synthesizing the \dLTC{} for a given service $\CS$, by traversing the LTS of $\CS$.
\cref{algo:ltc} simply calls $\SynthesizeConstraint(\mystate)$ applied to the initial state~$\sinit$; this latter algorithm $\SynthesizeConstraint$ is given in \cref{algo:synConstraint}.

\begin{algorithm}[t]
	\SetKwInOut{Input}{input}\SetKwInOut{Output}{output}
	
		\Input{Composite service model $\CS$ with LTS $\LTS{\CS{}}$ of initial state $\sinit$}

	\Output{The \dLTC{} $\localDelay \in \setNNCCP$}

	\BlankLine
	
	\Return{$ \SynthesizeConstraint(\sinit)$}\nllabel{algo:ltr:2}\;

\caption{$\algoDLTC(\CS)$\label{algo:ltc}}
\end{algorithm}
Given a state $\mystate=(\Valuation, \Process, \Constraint,\Delay)$ in the LTS of service $\CS$, $\SynthesizeConstraint(\mystate)$ returns a parameter constraint as follows.
If state $\mystate$ is a good state (\cref{algo:syn:1}), then it returns the constraint $\projectP{\mystate.\Constraint} \implies (\mystate.\Delay \leq \globalDelay)$ (\cref{algo:syn:2}), where $\globalDelay$ is the given global time constraint of the service $\CS$.
If state $\mystate$ is a bad state (\cref{algo:syn:3}), then the negation of the current constraint $\projectP{\mystate.\Constraint}$ is returned (\cref{algo:syn:4}).
Finally, if $\mystate$ is a non-terminal state (\cref{algo:syn:6}), the algorithm returns the conjunction of the result of the algorithm recursively applied on the successors of~$\mystate$ (\cref{algo:syn:8}).

\begin{algorithm}[t]

	\SetKwInOut{Input}{input}\SetKwInOut{Output}{output}
	
		\Input{State $\mystate$ of LTS}

	\Output{The constraint for LTS that starts at $\mystate$}

	\BlankLine

\uIf{$\mystate$ is a good state\nllabel{algo:syn:1}} {

 \Return{$\big(\projectP{\mystate.\Constraint} \implies (\mystate.\Delay \leq \globalDelay) \big)$}\nllabel{algo:syn:2}\;
}

\uElseIf{$\mystate$ is a bad state \nllabel{algo:syn:3}}{
		\Return{$\lnot (\projectP{\mystate.\Constraint})$}\nllabel{algo:syn:4}\;
 }
 \Else{
	\tcp{$\mystate$ is a non-terminal state\nllabel{algo:syn:6}}

  \Return{$\bigwedge_{\mystate' \in \Succ(\mystate)} \SynthesizeConstraint(\mystate') $ \nllabel{algo:syn:8}}\;
 }

\caption{$\SynthesizeConstraint(\mystate)$}
\label{algo:synConstraint}

\end{algorithm}
\subsection{Application to the running example}\label{ssec:ltcexample}

\begin{figure}[t]
{\centering
 \tikzset{
  bag/.style={text centered},
  aux/.style={font=\footnotesize},
  service/.style={align=left, text width=8cm}
}
\begin{tikzpicture}[node distance=3mm and 7mm,-stealth,scale=0.8,transform shape]
\node[bag] (s0) {$s_0:(S,true,0)$};
\node[bag,below =of s0] (s1) {$s_1:(\mconditional{\ltsrgood}{b}{A_1},true,\param_{\dataS})$};
\node[bag,below =of s1] (s2) {$s_2:(A_1,true,\param_{\dataS})$};
\node[bag,right =of s2] (s3) {$s_3:({\ltsrgood},true,\param_{\dataS})$};
\node[bag,below =of s2, yshift=-12mm] (s4) {$s_4:(P_1,true,\param_{\dataS})$};
\node[bag,below =of s3] (s5) {$s_5:(Stop,true,\param_{\dataS})\tick$};
\node[bag,below =of s4] (s6) {$s_6:(A_2, \param_{\free} \geq 1,\param_{\dataS} + 1)$};
\node[bag,right =of s6] (s7) {$s_7:(\ltsrgood,\param_{\free} \leq 1,\param_{\dataS} + \param_{\free})$};
\node[bag,below =of s6,yshift=-12mm] (s8) {$s_{8}:(P_2,\param_{\free} \geq 1,\param_{\dataS} + 1)$};
\node[bag,below =of s7] (s9) {$s_{9}:(Stop,\param_{\free} \leq 1,\param_{\dataS} + \param_{\free})\tick$};
\node[bag,below =of s8] (s10) {$s_{10}:(\ltsrbad,\param_{\paid} \geq 1 \land \param_{\free} \geq 1,\param_{\dataS} + 2)$};
\node[bag,right =of s10,xshift=-4mm] (s11) {$s_{11}:(\ltsrgood,\param_{\paid} \leq 1 \land \param_{\free} \geq 1,\param_{\dataS} + 1 + \param_{\paid})$};
\node[bag,below =of s10] (s12) {$s_{12}:(Stop,\param_{\paid} \geq 1 \land \param_{\free} \geq 1,\param_{\dataS} + 2)$};
\node[bag,below =of s11] (s13) {$s_{13}:(Stop,\param_{\paid} \leq 1 \land \param_{\free} \geq 1,\param_{\dataS} + 1 + \param_{\paid})$};

\draw (s0) -- node[aux,auto] {$\sequence{ \rSInv , \rSeqTwo }$} (s1); %
\draw (s1)  -| node[aux,auto,near start] {$\sequence{\rCondOne}$} (s3); %
\draw (s1) -- node[aux,auto] {$\sequence{\rCondOne}$} (s2); %
\draw (s2) -- node[aux,left] {$\sequence{ \rAInv , \rSeqTwo } $} (s4); %
\draw (s3)  -- node[aux,auto] {$\sequence{\rReply}$} (s5); %
\draw (s4) -- node[aux,auto]{$\sequence{(\rPickTwo,1)} $} (s6); %
\draw (s4)  -| node[aux,auto,near start] {$\sequence{(\rPickOne,1)}$} (s7); %
\draw (s6) -- node[aux,auto] {$\sequence{ \rAInv , \rSeqTwo }$}(s8); %
\draw (s7) -- node[aux,auto] {$\sequence{\rReply} $}(s9); %
\draw (s8) -- node[aux,auto] {$\sequence{(\rPickTwo,1)}$}(s10); %
\draw (s8)  -| node[aux,auto,near start] {$\sequence{(\rPickOne,1)} $} (s11); %
\draw (s10) -- node[aux,auto] {$ \sequence{\rReply}$}(s12); %
\draw (s11) -- node[aux,auto] {$\sequence{\rReply}$}(s13); %
\node[anchor=west,service,below = of s13,yshift=5mm,xshift=-16mm](s15){$\cross\ \ \ \ \ \ \ \ \ \ \ \ \ \ \ \ \ \ \ \ \ \ \ \ \ \ \ \ \ \ \ \ \ \ \ \ \ \ \ \ \ \ \ \ \  \tick$};
\end{tikzpicture}

}

${S} = (\sInvoke{\dataS}{\sequential}\ltsrgood \dres b \rres A_1)$\\
${A}_{1} = (\aInvoke{\free}{\sequential} P_1)$\\
${P}_{1} = (pick(FS\conditionArrow \ltsrgood, alrm(1) \conditionArrow A_2))$\\
${A}_{2} = (\aInvoke{\paid}{\sequential} P_2)$\\
${P}_{2} = (pick(PS\conditionArrow \ltsrgood,alrm(1)\conditionArrow \ltsrbad))$\\
${r}_{{good}} = (reply(User))$\\
${r}_{{bad}} = \protect([reply(User)]_{bad})$

\caption{LTS of the \newsInfoShort{}}
\label{fig:running}
\end{figure}

Consider again the running example~\newsInfoShort{} introduced in \cref{sec:timeBpelExample}.
Assume the parametric response times of $\free$, $\paid$ and $\dataS$ are $\param_\free$, $\param_\paid$ and $\param_\dataS$, respectively.
Recall that $\globalDelay = 3$.

\cref{fig:running} shows the LTS of \newsInfoShort{}.
The \dLTC{} resulting from the application of \algoDLTC{} is:

  \[\big( (\param_{\dataS}\leq 3) \land (\param_{\free}\leq 1) \implies (\param_{\dataS}+\param_{\free} \leq 3) \big) \land \\
  \big( (\param_{\free}\geq 1 \land \param_{\paid}\leq 1) \implies (\param_{\dataS}+\param_{\paid} \leq 2) \big) \land
  \lnot (\param_{\free}\geq 1 \land \param_{\paid}\geq 1)\]	

After simplification\footnote{For readability, we give the constraint as output in disjunctive normal form (DNF), instead of the usual conjunctive normal form (CNF).} using Z3~\cite{conf/tacas/MouraB08}, a state-of-the-art Satisfiability Modulo Theories (SMT) solver developed by Microsoft Research,\LongVersion{\footnote{%
	We first translate our expressions into Z3 expressions, and apply Z3 built-in strategies (\eg{} simplify) to get the equivalent DNF formulas.
}}
we get the following \dLTC{}:

\[ (\param_{\free} < 1 \land \param_{\dataS} + \param_{\free} \leq 3) \lor\\
  (\param_{\paid} < 1 \land \param_{\free} > 1 \land \param_{\dataS} + \param_{\paid} \leq 2) \lor\\
  (\param_{\paid} < 1 \land \param_{\dataS} + \param_{\free} \leq 3 \land \param_{\dataS} + \param_{\paid} \leq 2)
\]

This result provides us useful information on how the component services collectively satisfy the global time constraint.
That is useful when selecting component services.
For the case of \newsInfoShort{}, one way to fulfill the global time requirement of \newsInfoShort{} is to select component service \free{} with response time that is less than 1~second, and component services \dataS{} and \free{} where the summation of their response times should be less than or equal to 3~seconds.
For example, a suitable valuation is $\pval$ such that $\pval(\param_{\free}) = 0.5$, $\pval(\param_{\dataS}) = 1.5$ and $\pval(\param_{\free}) = 0.8$.

\subsection{Termination and soundness of \algoDLTC{}}\label{ssec:soundness}

\LongVersion{
	We show the termination and soundness of synthesis of \algoDLTC{}.
}

\subsubsection{Termination}

\begin{lemma}\label{lemma:acyclic}
	Let $\CS$ be a service model. Then $\LTS{\CS{}}$ is acyclic and finite.
\end{lemma}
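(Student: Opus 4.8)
The plan is to exhibit a well-founded measure on processes that strictly decreases along every transition; this gives acyclicity directly, and combined with finite branching it gives finiteness.

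First I would define a structural measure $\mu$ mapping each process to a natural number: set $\mu(\lstop) = 0$, assign weight $1$ to each not-yet-executed atomic activity ($\receive{\Service}$, $\reply{\Service}$, $\sInvoke{\Service}$, $\aInvoke{\Service}$), and for each structural combinator let $\mu$ be $1$ plus the sum of the measures of its immediate sub-processes, so that $\mu(\minterleave{P}{Q}) = 1 + \mu(P) + \mu(Q)$, $\mu(P \sequential Q) = 1 + \mu(P) + \mu(Q)$, and analogously for the conditional and $pick$ combinators. By \cref{assumption:bound} every loop is bounded, hence each reachable process is a finite syntactic tree and $\mu$ is a well-defined element of $\grandn$. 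Observe that the clock-activation step $\activation(\cdot)$ only annotates sub-processes with clock names, while pruning acts on the constraint component only; neither alters the syntactic shape of the process, so neither affects $\mu$.

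Next I would show, by induction on the derivation of $\hookrightarrow$, that every firing step strictly decreases $\mu$. The base cases are the atomic rules $\rSInv, \rRec, \rReply, \rAInv$, each rewriting an atomic activity (weight $1$) into $\lstop$ (weight $0$); the branch-selecting rules $\rCondOne$, $\rCondTwo$, $\rCondThree$, $\rCondFour$ and $(\rPickOne, i)$, $(\rPickTwo, j)$, each discarding the combinator node together with all but one sub-process; and $\rSeqTwo$, which discards a completed $\lstop$ and the enclosing sequential node. The inductive cases $\rSeqOne$, $\rFlowOne$, $\rFlowTwo$ recurse into a single sub-process via $A \hookrightarrow A'$ with $\mu(A') < \mu(A)$ by the induction hypothesis; since $\mu$ is strictly monotone in that argument ($1 + \mu(A) + \mu(B) > 1 + \mu(A') + \mu(B)$), the enclosing process strictly decreases as well. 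As a transition of $\Steps$ is labeled by a non-empty sequence of such rules (e.g.\ $\sequence{\rSInv, \rSeqTwo}$), $\mu$ strictly decreases across every transition of $\LTS{\CS{}}$.

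Acyclicity is then immediate: along any run $\mu$ is strictly decreasing, so no state can recur and in particular no state lies on a cycle. For finiteness I would establish two facts. First, branching is finite: the fresh clock picked in \cref{def:semantics} is uniquely determined (the first clock of $\ClockSeq$ outside $\ActiveClocks(\Process)$), and the only sources of nondeterminism are the finitely many branches of a $pick$ (namely $n + k$), the at-most-two branches of a conditional whose guard is uninitialized, and the choice of which side of $\minterleave{A}{B}$ advances; hence $\Succ(\mystate)$ is finite for every $\mystate$. Second, since $\mu$ ranges over $\grandn$ and strictly decreases, every run has length at most $\mu(\ProcessInit)$, so the reachable state space has bounded depth. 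A finitely branching tree of bounded depth is finite, so $\LTS{\CS{}}$ is finite. The main obstacle I anticipate is the bookkeeping in the inductive step, because one $\Steps$-transition bundles an activation, a whole $\hookrightarrow$-derivation, and a pruning, so one must confirm carefully that neither activation nor pruning inflates $\mu$ and that the compound label still amounts to a net strict decrease; a secondary subtlety is reconciling \cref{assumption:bound} with the grammar of \cref{definition:process}, which already excludes unbounded recursion, so that loops are understood as bounded and finitely unrollable and $\mu$ is genuinely finite.
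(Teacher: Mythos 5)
Your proof is correct, but it is considerably more explicit than what the paper does: the paper's entire proof of this lemma is a one-line appeal to \cref{assumption:bound} together with the observation that BPEL (and the grammar of \cref{definition:process}) has no recursive activities, leaving both the termination argument and finiteness implicit. Your route supplies the machinery that justifies that appeal: a well-founded structural measure $\mu$ that strictly decreases under every rule of $\hookrightarrow$ (correctly noting that activation only annotates processes with clocks and pruning only touches the constraint component, so neither affects $\mu$), plus a finite-branching argument (the $n+k$ branches of $pick$, the two branches of a conditional with uninitialized guard, the two sides of an interleaving, and finite-domain variables), from which acyclicity follows from strict decrease and finiteness from bounded depth with finite branching. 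What the paper's terse argument buys is brevity, at the cost of hiding exactly the bookkeeping you flag (that a single $\Steps$-transition bundles activation, a whole $\hookrightarrow$-derivation, and pruning); what your argument buys is a checkable invariant that would survive, for instance, the addition of bounded loops by finite unrolling, which is precisely how \cref{assumption:bound} is meant to be reconciled with the loop-free grammar. The only point worth tightening is your treatment of branch-selecting rules: for $\rSeqTwo$ the premise is that $A$ evolves \emph{into} $\lstop$ in the same step (so the decrease combines the inner derivation with dropping the sequencing node), rather than discarding an already-completed $\lstop$; the strict decrease still holds, but the phrasing should match the rule.
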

\begin{proof}
	From \cref{assumption:bound} and from the fact that there are no recursive activities in BPEL.
\end{proof}

\begin{proposition}\label{proposition:termination}
	Let $\CS$ be a service model.
	Then $\algoDLTC(\CS)$ terminates.
\end{proposition}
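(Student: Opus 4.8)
The plan is to reduce the claim to the termination of the recursive routine $\SynthesizeConstraint$ and then invoke \cref{lemma:acyclic}. Since \cref{algo:ltc} does nothing but return $\SynthesizeConstraint(\sinit)$, it suffices to show that the recursive call $\SynthesizeConstraint(\sinit)$ terminates. First I would observe that $\SynthesizeConstraint(\mystate)$ recurses only in its third branch, where $\mystate$ is a non-terminal state, and there it calls itself exactly on the elements of $\Succ(\mystate)$; in the good-state and bad-state branches it returns immediately without recursion. Hence the only way non-termination could arise is through an infinite chain of recursive descents, i.e.\ an infinite sequence $\mystate, \mystate', \mystate'', \ldots$ with each the successor of the previous.

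Next I would rule out such an infinite chain using \cref{lemma:acyclic}, which guarantees that $\LTS{\CS{}}$ is acyclic and finite. Concretely, I would introduce a well-founded measure on states: for each state $\mystate \in \States{}$, let $\ell(\mystate)$ denote the length of the longest run in $\LTS{\CS{}}$ starting from $\mystate$. Because the LTS is finite and acyclic, every such longest run is finite, so $\ell(\mystate) \in \grandn$ is well defined for all states, with $\ell(\mystate) = 0$ precisely at terminal states. For any non-terminal $\mystate$ and any $\mystate' \in \Succ(\mystate)$ we have $\ell(\mystate') < \ell(\mystate)$, since prepending the transition $(\mystate,\action,\mystate')$ to a longest run from $\mystate'$ yields a run from $\mystate$. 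The termination argument then proceeds by strong induction on $\ell(\mystate)$: if $\ell(\mystate)=0$ the call returns through a base branch, and if $\ell(\mystate)>0$ every recursive call is on some $\mystate'$ with strictly smaller measure, so by the induction hypothesis each of the finitely many calls $\SynthesizeConstraint(\mystate')$ terminates, and therefore so does $\SynthesizeConstraint(\mystate)$.

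The only subtlety worth flagging is that $\LTS{\CS{}}$ is a directed acyclic graph rather than a tree, so the recursion may revisit the same state along distinct incoming paths; this affects efficiency but not termination, because the well-founded measure $\ell$ strictly decreases along every single recursive edge regardless of how many times a state is reached overall. I do not expect any genuine obstacle here: once \cref{lemma:acyclic} is in place, finiteness bounds the branching factor of $\Succ(\mystate)$ and acyclicity supplies the strictly decreasing measure, so the recursion tree explored by $\SynthesizeConstraint(\sinit)$ is finite. Concluding, $\SynthesizeConstraint(\sinit)$ terminates, and hence so does $\algoDLTC(\CS)$.
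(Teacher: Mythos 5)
Your proposal is correct and follows the same route as the paper: reduce $\algoDLTC$ to the recursive call $\SynthesizeConstraint(\sinit)$ and terminate via \cref{lemma:acyclic}. The only difference is one of rigor: the paper argues informally that, because the recursion proceeds along successors in an acyclic LTS, ``no state is explored more than once,'' whereas you make the argument precise with a well-founded measure (the length of the longest run from a state) and strong induction. Your version is in fact slightly more careful on exactly the point you flag: since $\LTS{\CS{}}$ is a DAG rather than a tree and the algorithm has no memoization, a state reachable along several incoming paths \emph{can} be explored more than once, so the paper's phrasing is imprecise even though its conclusion stands; your decreasing-measure argument sidesteps this and still yields a finite recursion tree. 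So, same decomposition and same key lemma, with your write-up supplying the formal well-foundedness justification that the paper leaves implicit.
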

\begin{proof}
	From \cref{lemma:acyclic}, $\LTS{\CS{}}$ is acyclic.
	\cref{algo:ltc} is obviously non-recursive.
	Now, \cref{algo:synConstraint} is recursive (\cref{algo:syn:8}).
	However, due to the acyclic nature of $\LTS{\CS{}}$ and the fact that \cref{algo:synConstraint} is called recursively on the \emph{successors} of the current state, then no state is explored more than once.
	This ensures termination.
\end{proof}

\begin{remark}[Complexity of \cref{algo:synConstraint}]\label{remark:complexity:algo:synConstraint}
	First, note that all states of $\LTS{\CS{}}$ are explored by \cref{algo:synConstraint}: indeed, the algorithm is recursively called on non-terminal states, and stops only on terminal states---that have no successors anyway.
	So, the algorithm time is constant in the number of states of $\LTS{\CS{}}$.
	In addition, the number of conjuncts in the result of~\cref{algo:synConstraint} is at most the number of states of $\LTS{\CS{}}$, and less if not all states are terminal states.
\end{remark}

\subsubsection{Soundness}\label{ssec:soundnesssec}
Let us prove that for any parameter valuation satisfying the output of $\algoDLTC$, any complete run ends in a good state, and all reachable good states are reachable within the global delay~$\globalDelay$.

In the following,
given a run~$\varrun_\pval$ of $\LTS{\CS[\pval]}$,
from \cref{proposition:unique-equivalent-run}
we can safely refer to the run of $\LTS{\CS}$ equivalent to~$\varrun_\pval$.

The following lemmas will be used to prove the subsequent \cref{theorem:soundness}.

\ea{Tian Huat: big issue again: what is exactly $\mystate.\Constraint$? Is that a constraint on $\Param$ or a constraint on $\Clock \cup \Param$? If the latter, then everything needs to be rewritten, as the claims are not true as such ! And according to \cref{definition:state}, it is on $\Clock \cup \Param$…}\tth{In fact, we can show that it is only $\Param$.  To see why, at the initial state $\mystate{}_0$, we start with $\mystate{}_0.\Constraint=true$. If process $P$ is an atomic activity, we will always end in $Stop$ (due to rules $\rSInv$, $\rRec$, $\rReply$, or $\rAInv$ in~\cref{appendix:semantics}). In such we can always prune $\Clock$. For structural activities, only rules $\rPickOne$, $\rPickTwo$, $\rFlowOne$, and $\rFlowTwo$ make changes to $C$. For $\rPickOne$ and $\rPickTwo$, their idle functions will  introduce constraints like $x\leq t$ for some $t \in \Param$ or $x \leq a$ for some $a \in  \grandrplus$. Without loss of generality, say we have a constraint $x\leq t_j$ where $t_j \in \Param$. Our final simplified constraint will be  $t_i\leq t_j$ (without any $x \in \Clock$).\ea{macros !!}
For rules $\rFlowOne$ and $\rFlowTwo$, their idle functions will be applied recursively, and ultimately it will be either applying $I1$, $I2$ or $I5$ (see \cref{definition:idling}). Similar to the cases of $\rPickOne$ and $\rPickTwo$, our final simplified constraints will not involve any $x \in \Clock$.}\tth{could this serve as a lemma?}\ea{two options: either we add the lemma you suggest (which is probably true indeed) or we safely add a projection onto the parameters; I don't have any preference (I would go for the second option if you don't have any preference either, but I'm fine with both options)}\tth{second option is fine for me :)}\ea{done; let me know if you see anything wrong}\tth{sorry, but I can't find where it is?}\ea{everywhere :D I added a lot of projections ($\projectP{C}$ instead of $C$) throughout the paper}

\begin{lemma}\label{lemma:nobad}
	Let $\CS$ be a service model.
	Let $\pval \models \algoDLTC(\CS)$.
	Then no bad state is reachable in $\LTS{\CS[\pval]}$.
\end{lemma}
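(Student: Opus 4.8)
The plan is to exploit the structure of \cref{algo:synConstraint}: its output is a conjunction that contains, for every bad state of $\LTS{\CS}$, precisely the conjunct asserting the non-reachability of that state, and then to transport this non-reachability from $\LTS{\CS}$ to $\LTS{\CS[\pval]}$ via the reachability condition (\cref{theorem:reachability}).

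First I would observe that $\algoDLTC(\CS)$ returns $\SynthesizeConstraint(\sinit)$, and that by the recursive definition of \cref{algo:synConstraint} together with \cref{remark:complexity:algo:synConstraint}, every state of $\LTS{\CS}$ is explored. In particular, for every bad state $\mystate_b = (\Valuation_b, \Process_b, \Constraint_b, \Delay_b)$ of $\LTS{\CS}$ the conjunct $\lnot \projectP{\Constraint_b}$ occurs among the conjuncts of $\algoDLTC(\CS)$ (this is the branch returning on \cref{algo:syn:4}). Hence from $\pval \models \algoDLTC(\CS)$ I obtain $\pval \models \lnot \projectP{\Constraint_b}$, i.e.\ $\pval \not\models \projectP{\Constraint_b}$, for every bad state $\mystate_b$ of $\LTS{\CS}$.

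Next I would argue by contradiction. Suppose some bad state were reachable in $\LTS{\CS[\pval]}$, and let $\varrun_\pval$ be a run of $\LTS{\CS[\pval]}$ ending in such a state, whose last process I denote $\widehat{\Process}_n$. By \cref{proposition:unique-equivalent-run} there is a unique run $\varrun$ of $\LTS{\CS}$ equivalent to $\varrun_\pval$; write its last state as $(\Valuation_n, \Process_n, \Constraint_n, \Delay_n)$. The step I expect to be the main obstacle is showing that this last state of $\varrun$ is itself a bad state of $\LTS{\CS}$. By \cref{def:equivalent-runs} we have $\widehat{\Process}_n = \Process_n[\pval]$; since a bad state is determined purely by the syntactic bad-activity annotation on its process, and valuating the parameters with $\pval$ does not alter that annotation, $\Process_n$ carries a bad activity exactly when $\widehat{\Process}_n$ does. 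As $\widehat{\Process}_n$ carries a bad activity, so does $\Process_n$, and the last state of $\varrun$ is a bad state $\mystate_b$ of $\LTS{\CS}$, with constraint $\Constraint_n = \Constraint_b$.

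Finally I would close the loop with \cref{theorem:reachability}: since $\varrun_\pval$ is a run of $\LTS{\CS[\pval]}$ equivalent to $\varrun$, the reachability condition gives $\pval \models \projectP{\Constraint_n}$. This contradicts the fact established in the second paragraph that $\pval \not\models \projectP{\Constraint_b} = \projectP{\Constraint_n}$. Therefore no bad state is reachable in $\LTS{\CS[\pval]}$, which is exactly the claim.
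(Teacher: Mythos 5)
Your proposal is correct and follows essentially the same route as the paper's proof: the paper simply notes that $\algoDLTC(\CS)$ contains the conjunct $\lnot\projectP{\Constraint_b}$ for every bad state and then invokes the reachability condition (\cref{theorem:reachability}), whereas you unfold that appeal into an explicit contradiction via \cref{proposition:unique-equivalent-run}. The only nit is your justification that the equivalent run in $\LTS{\CS}$ ends in a bad state: badness is determined by the state being a terminal state reached through a bad activity (its process is typically $\lstop$, not an annotated activity), but since equivalent runs share the same sequences of rules and processes up to valuation, the corresponding state is reached via the same bad activity and your conclusion stands.
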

\begin{proof}
	Let $\Kresult = \algoDLTC(\CS)$.
	$\Kresult$ is a conjunction of ``good'' parameter constraints (accumulated from \cref{algo:syn:2} in \cref{algo:synConstraint}) and ``bad'' parameter constraints (accumulated from \cref{algo:syn:4} in \cref{algo:synConstraint}).
	Hence, $\Kresult$ contains at least the negated constraints of all bad states.
	Hence, from \cref{theorem:reachability}, the bad states are unreachable for any $\pval \models \Kresult$.
\end{proof}
\begin{lemma}\label{lemma:atleastonegood}
	Let $\CS$ be a service model.
	Let $\pval \models \algoDLTC(\CS)$.
	Then any complete run of $\LTS{\CS[\pval]}$ ends in a good state.
\end{lemma}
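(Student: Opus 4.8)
The plan is to derive this lemma almost immediately from \cref{lemma:nobad}, combined with the observation that the terminal states of $\LTS{\CS[\pval]}$ are exhaustively partitioned into good states and bad states. First I would recall that, by definition, a complete run of $\LTS{\CS[\pval]}$ starts in the initial state $\sinit$ and ends in some terminal state $\mystate$, and that this $\mystate$ is therefore reachable from $\sinit$. Next I would invoke the definition of good and bad states: every terminal state is a bad state if it is reached through the execution of a bad activity, and a good state otherwise. This dichotomy is exhaustive by construction, and it transfers unchanged from $\LTS{\CS}$ to $\LTS{\CS[\pval]}$, since the bad-activity annotation is a purely syntactic property of the process that is preserved under parameter valuation.

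The key step is then to apply \cref{lemma:nobad}: because $\pval \models \algoDLTC(\CS)$, no bad state is reachable in $\LTS{\CS[\pval]}$. Since the terminal state $\mystate$ ending a complete run is reachable from $\sinit$, it cannot be a bad state; hence, by the dichotomy above, $\mystate$ must be a good state. As the complete run was arbitrary, every complete run of $\LTS{\CS[\pval]}$ ends in a good state, which is exactly the claim.

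The only points deserving care---and this is where I would focus the write-up rather than on any genuine difficulty---are checking that the notion of reachability used to define a complete run coincides with the one used in \cref{lemma:nobad}, and that the good/bad partition of terminal states really is exhaustive; both hold directly by the construction of the semantics in \cref{def:semantics} and the definition of good and bad states. I would optionally remark that, by \cref{lemma:acyclic}, $\LTS{\CS[\pval]}$ is finite and acyclic, so every run extends to a complete run reaching a terminal state, guaranteeing the statement is not vacuous; this is not strictly required, since the lemma quantifies only over complete runs. Thus the substantive content rests entirely on \cref{lemma:nobad}, and the present lemma follows by a short logical deduction.
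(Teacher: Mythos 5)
Your proposal is correct and follows essentially the same route as the paper's proof: both rest on \cref{lemma:nobad} together with the fact that every terminal state is, by definition, either a good or a bad state. The only (harmless) difference is that you argue directly in $\LTS{\CS[\pval]}$, where the paper additionally case-splits on whether the initial state is the only state and invokes \cref{theorem:reachability} to transfer the dichotomy from $\LTS{\CS}$; since the good/bad classification of terminal states applies intrinsically to $\LTS{\CS[\pval]}$, your streamlined deduction is equally valid.
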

\begin{proof}
    First, note that the initial state $\sinit$ is reachable in $\LTS{\CS[\pval]}$ (since $\sinit.\Constraint=\mathit{true}$).
	If the initial state is the only state, then from \cref{lemma:nobad}, it is also not a bad state; hence it is a good state.
	Now, if it is not the only state, %
	from the fact that all runs of $\LTS{\CS}$ end either in a good state or in a bad state,
	from the absence of bad states (\cref{lemma:nobad}),
	and from \cref{theorem:reachability},
	then any run of $\LTS{\CS[\pval]}$ ends in a good state.
\end{proof}
\begin{lemma}\label{lemma:goodontime}
	Let $\CS$ be a service model.
	Let $\pval \models \algoDLTC(\CS)$.
	Then for all good state $(\Valuation, \Process_g, \Constraint, d)$ of $\LTS{\CS[\pval]}$,
	$d \leq \globalDelay$.
\end{lemma}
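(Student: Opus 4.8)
The plan is to pull any reachable good state of the valuated model back to its equivalent symbolic counterpart in $\LTS{\CS}$, and then to exploit the fact that $\algoDLTC$ has emitted precisely the conjunct that bounds that state's delay. Concretely, let $(\Valuation, \Process_g, \Constraint, d)$ be a good state reachable in $\LTS{\CS[\pval]}$, lying on a run $\varrun_\pval$ from $\sinit$. By \cref{proposition:unique-equivalent-run} there is a unique equivalent run $\varrun$ in $\LTS{\CS}$, ending in a state $\mystate = (\Valuation, \Process_g', \Constraint_g, \Delay_g)$ with $\Process_g = \Process_g'[\pval]$. Because run equivalence preserves the discrete support (the same sequence of rules, hence the same structural process, is followed), $\mystate$ is itself a good state of $\LTS{\CS}$; moreover the delay is accumulated by identical rule applications differing only by parameter valuation, so $d = \Delay_g[\pval]$.

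First I would argue that the conjunct $\projectP{\Constraint_g} \implies (\Delay_g \leq \globalDelay)$ occurs in $\Kresult = \algoDLTC(\CS)$. This holds because $\SynthesizeConstraint$ is invoked at $\sinit$ and recurses on all successors, so (by \cref{lemma:acyclic}, and as noted in \cref{remark:complexity:algo:synConstraint}) it visits every state of $\LTS{\CS}$, in particular the good state $\mystate$, for which \cref{algo:synConstraint} returns exactly this implication. Since $\pval \models \Kresult$, we obtain $\pval \models \big(\projectP{\Constraint_g} \implies (\Delay_g \leq \globalDelay)\big)$. Next I would discharge the antecedent: the good state $\mystate$ lies on the run $\varrun$ of $\LTS{\CS}$, and an equivalent run (namely $\varrun_\pval$) exists in $\LTS{\CS[\pval]}$, so the reachability condition (\cref{theorem:reachability}) gives $\pval \models \projectP{\Constraint_g}$. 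Combining the two yields $\pval \models (\Delay_g \leq \globalDelay)$, that is $\Delay_g[\pval] \leq \globalDelay$, which is exactly $d \leq \globalDelay$.

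The main obstacle is not the arithmetic but the bookkeeping that links the valuated and symbolic worlds: one must be sure that both \emph{being a good state} and \emph{the value of the accumulated delay} are genuinely invariant under run equivalence. Goodness is a purely structural (discrete) property --- a terminal state reached without firing a bad activity --- and is therefore preserved because equivalent runs fire the same sequence of rules; the delay identity $d = \Delay_g[\pval]$ follows from the same observation, since each rule increments the delay by either a valuated response-time parameter or a constant alarm value. Once these two correspondences are pinned down, the remainder is the short syllogism above: the algorithm emits the implication, and the reachability condition discharges its premise.
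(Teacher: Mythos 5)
Your proof is correct and follows essentially the same route as the paper's: the algorithm emits the conjunct $\projectP{\Constraint_g} \implies (\Delay_g \leq \globalDelay)$ for every good symbolic state, and the reachability condition discharges the antecedent for any $\pval$ under which that state is reached, yielding $d \leq \globalDelay$. The only difference is that you make explicit the pull-back from $\LTS{\CS[\pval]}$ to $\LTS{\CS}$ via \cref{proposition:unique-equivalent-run}, together with the preservation of goodness and of the accumulated delay under run equivalence, steps the paper's terser proof leaves implicit.
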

\begin{proof}
	Let $\mystate_g=(\Valuation, \Process_g, \Constraint,\Delay)$ be a reachable state in $\LTS{\CS{}}$ such that $\mystate_g$ is a good state.
	From \cref{def:semantics}, $\Constraint$ is satisfiable\ (and hence $\projectP{\Constraint}$ too).
Since $\mystate_g$ is a good state, Algorithm $\SynthesizeConstraint$ added a constraint $\projectP{C} \implies \Delay \leq \globalDelay$ to the result.
	Hence, $\algoDLTC(\CS) \subseteq (\projectP{\Constraint} \implies \Delay \leq \globalDelay$).
	Now, for any $\pval \models \algoDLTC(\CS)$, we have that $\pval \models (\projectP{\Constraint} \implies \Delay \leq \globalDelay)$, and hence all reachable states in $\LTS{\CS[\pval]}$ are such that $d \leq \globalDelay$.
\end{proof}

We can now formally state the soundness of $\algoDLTC$.

\begin{theorem}\label{theorem:soundness}
	Let $\CS$ be a service model.
	Let $\pval \models \algoDLTC(\CS)$.
	Then:
	\begin{enumerate}
		\item Any complete run of $\LTS{\CS[\pval]}$ ends in a good state.%
		\item For all good state $(\Valuation, \Process_g, \Constraint, d)$ of $\LTS{\CS[\pval]}$,
	$d \leq \globalDelay$.
	\end{enumerate}
\end{theorem}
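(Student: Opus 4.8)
The plan is to observe that the two assertions of the theorem coincide, almost verbatim, with the two lemmas established immediately beforehand, so the proof should be a short assembly step rather than a fresh argument. Both the theorem and the lemmas share the single hypothesis $\pval \models \algoDLTC(\CS)$, so there is no hypothesis-matching work to do: I would simply check that the hypothesis transfers unchanged into each lemma invocation.

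For the first part, I would invoke \cref{lemma:atleastonegood} directly: under $\pval \models \algoDLTC(\CS)$, every complete run of $\LTS{\CS[\pval]}$ ends in a good state. For the second part, I would invoke \cref{lemma:goodontime} directly: under the same hypothesis, any good state $(\Valuation, \Process_g, \Constraint, d)$ reachable in $\LTS{\CS[\pval]}$ satisfies $d \leq \globalDelay$. That discharges both items.

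Since the theorem itself is essentially a packaging of the lemmas, the genuine content lives upstream, and I would want to be sure the chain feeding these lemmas is sound before claiming the theorem. The crucial link is \cref{lemma:nobad}, which guarantees that no bad state is reachable: it relies on the fact that $\algoDLTC(\CS)$ accumulates, via \cref{algo:syn:4} of \cref{algo:synConstraint}, the negation $\lnot(\projectP{\Constraint_b})$ of the projected constraint of every bad state, so that the reachability condition (\cref{theorem:reachability}) forbids reaching any of them. \cref{lemma:atleastonegood} then combines this absence of bad states with the structural dichotomy that every complete run of $\LTS{\CS}$ terminates either in a good or a bad state, together with the equivalence of runs between $\LTS{\CS}$ and $\LTS{\CS[\pval]}$ furnished by \cref{proposition:unique-equivalent-run,proposition:unique-equivalent-run-opposite}.

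The main obstacle is therefore not in the theorem but in verifying that the good-state dichotomy used by \cref{lemma:atleastonegood} is exhaustive, \ie{} that a complete run cannot terminate in a state that is neither good nor bad (a deadlocked non-terminal state), and that the reachability condition correctly handles the projection onto $\Param$; once those points are trusted from the preceding development, the theorem follows immediately by citing \cref{lemma:atleastonegood} for item~1 and \cref{lemma:goodontime} for item~2.
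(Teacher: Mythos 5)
Your proposal is correct and matches the paper's own proof, which simply cites \cref{lemma:atleastonegood} for item~1 and \cref{lemma:goodontime} for item~2 under the shared hypothesis $\pval \models \algoDLTC(\CS)$. The additional verification of the upstream chain you mention is already discharged in the preceding lemmas, so no further work is needed.
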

\begin{proof}
	From \cref{lemma:atleastonegood,lemma:goodontime}.
\end{proof}

Given a composite service $\CS{}$, and assume $S_g=\{\mystate_1,\ldots,\mystate_n\}$ be the set of all good states in $\LTS{\CS}$.
In the following proposition, we show that any $\pval \models \algoDLTC(\CS)$ necessarily satisfies (at least) one of the good states' constraints, \ie{} $\pval \models \projectP{\mystate_i.\Constraint}$ for some $\mystate_i \in S_g$.

Indeed, recall $\algoDLTC(\CS)$ is a conjunction of good and bad constraints.
In %
the following proposition,
we show that the good constraints of the form $(\Constraint_{1} \implies r_1 \land \ldots \land \Constraint_{n} \implies r_n)$ will not hold trivially by just having $\Constraint_{i} =false$, for all $i \in \{1,\ldots{},n\}$.

\begin{proposition}\label{prop:p-satisfy-sg}
Let $\CS$ be a service model, and $\States{}_{good}$ be the set of all good states in $\LTS{\CS}$.
Let $\pval \models \algoDLTC(\CS)$.

	Then $\exists \mystate \in \States{}_{good} : \pval \models \projectP{\mystate.\Constraint}$.
\end{proposition}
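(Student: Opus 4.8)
The plan is to produce an explicit witness: a good state of $\LTS{\CS}$ whose parameter projection is satisfied by $\pval$. The idea is to \emph{run} the valuated model, use the soundness lemma to land in a good state, and then transport that run back to the parametric level via the reachability condition.

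First I would argue that $\LTS{\CS[\pval]}$ admits at least one complete run. Indeed, $\CS[\pval]$ is itself a (non-parametric) service model, so the argument of \cref{lemma:acyclic}---namely \cref{assumption:bound} together with the absence of recursion in BPEL---applies verbatim and shows that $\LTS{\CS[\pval]}$ is finite and acyclic. Since it contains at least its initial state $\sinit$, following transitions from $\sinit$ must eventually reach a terminal state, yielding a complete run $\varrun_\pval$. Applying \cref{lemma:atleastonegood} to $\varrun_\pval$ (legitimate because $\pval \models \algoDLTC(\CS)$), this complete run ends in a good state.

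Next I would pass back to the parametric model. By \cref{proposition:unique-equivalent-run}, there is a unique run $\varrun$ of $\LTS{\CS}$ equivalent to $\varrun_\pval$, say ending in a state $\mystate = (\Valuation_n, \Process_n, \Constraint_n, \Delay_n)$. Equivalent runs apply the same sequence of rules and satisfy $\Process_i = \Process_i'[\pval]$ at every index (\cref{def:equivalent-runs}); since the good/bad status of a terminal state is determined solely by whether the last executed activity carries a bad-activity annotation---a structural property that parameter valuation does not affect---the terminal state $\mystate$ of $\varrun$ is good as well, i.e.\ $\mystate \in \States{}_{good}$. Finally, because $\varrun_\pval$ is a run of $\LTS{\CS[\pval]}$ equivalent to the run $\varrun$ of $\LTS{\CS}$ ending in $\mystate$, the forward direction of the reachability condition (\cref{theorem:reachability}) gives $\pval \models \projectP{\mystate.\Constraint}$, which is exactly the claim.

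The one delicate point is the middle step, transferring ``goodness'' between $\varrun_\pval$ and its parametric counterpart $\varrun$: this is where the definition of equivalent runs does the real work, guaranteeing that the two runs terminate on matching processes (one being the valuation of the other) and therefore agree on whether a bad activity was ever fired. Everything else is a routine assembly of the finiteness/acyclicity observation, \cref{lemma:atleastonegood}, and the two directions of the run-equivalence machinery established in \cref{proposition:unique-equivalent-run} and \cref{theorem:reachability}.
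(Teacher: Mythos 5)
Your argument is correct in substance, but it takes a genuinely different route from the paper's. The paper's own proof is a short syntactic one: it unfolds the output of \cref{algo:synConstraint} as a conjunction $\Constraint_g \land \Constraint_b$, where $\Constraint_g$ collects the implications $\projectP{\mystate_i.\Constraint} \implies (\mystate_i.\Delay \leq \globalDelay)$ accumulated at \cref{algo:syn:2} over good states and $\Constraint_b$ the negated constraints accumulated at \cref{algo:syn:4} over bad states, and then reads the conclusion off $\pval \models \Constraint_g$. You instead construct an explicit witness semantically: a complete run of $\LTS{\CS[\pval]}$ exists because the argument of \cref{lemma:acyclic} applies to the valuated model, it ends in a good state by \cref{lemma:atleastonegood}, it lifts to a unique equivalent run of $\LTS{\CS}$ by \cref{proposition:unique-equivalent-run}, and the forward direction of \cref{theorem:reachability} yields $\pval \models \projectP{\mystate.\Constraint}$ for its final state $\mystate$. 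What your detour buys is exactly the non-vacuity the proposition is meant to establish: satisfying a conjunction of implications does not by itself force any antecedent to hold, so your construction supplies the justification that the paper's one-line ``hence'' leaves implicit; what the paper's phrasing buys is brevity and independence from the run-correspondence machinery.

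One caveat on the step you yourself flag as delicate: to place $\mystate$ in $\States{}_{good}$ you need not only that no bad activity was executed (which the matching processes of \cref{def:equivalent-runs} give you), but also that $\mystate$ is \emph{terminal} in $\LTS{\CS}$, since good states are by definition terminal states. Run equivalence alone does not formally exclude that $\mystate$ has a successor in $\LTS{\CS}$ whose constraint is satisfiable only for parameter valuations other than $\pval$, while its valuated counterpart is stuck. You should either argue that such valuation-induced deadlocks cannot arise in this semantics (so a complete valuated run ends with the same fully terminated process as its parametric counterpart), or note explicitly that this identification of terminal states across $\LTS{\CS}$ and $\LTS{\CS[\pval]}$ is the same one the paper itself relies on implicitly in \cref{lemma:nobad,lemma:atleastonegood}; with that observation, your proof stands at the same level of rigor as the surrounding results.
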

\begin{proof}
	From \cref{algo:synConstraint}, $\algoDLTC(\CS)$ is a conjunction of ``good'' constraints (accumulated at \cref{algo:syn:2}\LongVersion{ in \cref{algo:synConstraint}}) and ``bad'' constraints (accumulated at \cref{algo:syn:4}\LongVersion{ in \cref{algo:synConstraint}}).
	That is, assume $\algoDLTC(\CS) = (\Constraint_g \land \Constraint_b)$, where
	$\Constraint_g=\bigwedge_{\mystate_i \in {\States{}}_{good}} (\projectP{\mystate_i.\Constraint} \implies (\mystate_i.\Delay \leq \globalDelay))$, and $\globalDelay$ be the global time constraint,
	and $\Constraint_b=\bigwedge_{\mystate_j \in {\States{}}_{bad}} \neg (\projectP{s.\Constraint_j})$.
	Hence, since $\pval \models \algoDLTC(\CS)$ then $\pval \models \Constraint_g$, hence $\exists \mystate \in \States{}_{good} : \pval \models \projectP{\mystate.\Constraint}$.
\end{proof}

\subsection{Incompleteness of \algoDLTC{}}\label{ssec:completeness}
A limitation of \algoDLTC{} is that it is incomplete, \ie{} it does not include all parameter valuations that could give a solution to the problem of the local time requirement.
Given an expression $\mconditional{A}{a=1}{B}$, since $a$ may be unknown at design time, we explore both branches (activities $A$ and $B$) for synthesizing the \dLTC{}.
Nevertheless, only exactly one of these activities will be executed at runtime.
Including constraints from activities $A$ and $B$ will make the constraints stricter than necessary; therefore some of the feasible parameter valuations are excluded---this makes the synthesis by \algoDLTC{} incomplete.
This can be seen as a trade-off to make the synthesized local time requirement more general, \ie{} to hold in any composite service instance.
In \cref{sec:rrefine}, we will introduce a method that leverages on runtime information to mitigate this problem.

\section{Runtime refinement of local time requirement}\label{sec:rrefine}

In order to improve the local time requirement computed statically using the algorithms presented in \cref{sec:syncConstraint}, we introduce in this section a \emph{refined} local time requirement, together with its usage for runtime adaptation of a service composition.

\subsection{Motivation}\label{sec:rrefinemotivation}

Let us consider a composite service $\CS$.
Assume that we have selected a set of component services such that their stipulated response times fulfill the \dLTC{} of~$\CS$.
Since the composite service is executed under a highly evolving dynamic environment,
the design time assumptions may evolve at runtime.
For instance, the response times of component services could be affected by network congestion.
This might result in the non-conformance of stipulated response times for some component services.
However, the non-conformance of stipulated response times of component services does not necessary imply that the composite service will not satisfy its global time requirement.
This is because the \dLTC{} is synthesized at the design time to hold in \emph{any} execution trace of~$\CS$; whereas at runtime, the runtime information can be used to synthesize a more relaxed constraint for $\CS$.

More specifically, given a composite service $\CS$, we have two pieces of runtime information that may help to synthesize a more relaxed constraint: the execution path that has been taken by $\CS$, and the elapsed time of~$\CS$.
First, the execution path taken by $\CS$ can be used for LTS simplification.
This is because in the midst of execution, some of the execution traces can be disregarded and therefore a weaker LTC, that includes more parameter valuations, may be synthesized.
Second, the time elapsed of $\CS$ can be used to instantiate some of the response time parameters with real-valued constants; this makes the synthesized LTC contain less uncertainty and be more precise.

\begin{example}
	For example, consider the \newsInfoShort{} composite service, the LTS of which is depicted in \cref{fig:running}.
	Assume a valuation $\pval$ satisfying the $\dLTC$.
	At runtime, after invocation of the component service $\dataS{}$, \newsInfoShort{} will be at state~$\mystate_2$.
	Assume that $\dataS{}$ does not conform to its stipulated response time.
	Therefore, it is desirable to check whether invoking $\free{}$ can still satisfy the global time requirement of $\CS{}$.
	One can make use of \dLTC{} for this purpose.
	Nevertheless, a more precise LTC may be synthesized at state $\mystate_2$.

	The first observation is that, from state $\mystate_2$, we can safely ignore the constraints from the good state $\mystate_5$, since it is not reachable from~$\mystate_2$.
	The second observation is that the delay from state $\mystate_0$ to state $\mystate_2$ (say $r$ time units, with $r \in \grandrplus$) is known.
	For this reason, we can substitute the delay component of state $\mystate_2$, which is the response time $\pval(\param_{\dataS{}})$, with the actual time delay~$r$.
	This motivates the use of runtime information of the composite service to refine the LTC.
	We refer to the runtime refined LTC as the runtime LTC (denoted by \rLTC{}).
	In addition to this refinement, we can also simplify the LTS by pruning the states corresponding to past states (\eg{} $\mystate_0$, $\mystate_1$ in \cref{fig:running}), as well as the successors of these past states that were not met in practice (\eg{} $\mystate_3$ and $\mystate_5$ in \cref{fig:running}), because another branch was taken at runtime.
	We show the LTS of \newsInfoShort{} before and after simplification in \cref{fig:workflow:beforesim,fig:workflow:aftersim} respectively.

	\begin{figure}[htb]
		\centering
		\begin{subfigure}[b]{0.4\textwidth}
			\centering
			\begin{tikzpicture}[
 block/.style = {circle, draw,
    text width=1em,align=center,inner sep=0pt},
    line/.style = {draw,thick, -latex},
  service/.style={align=left, text width=0.5cm},
node distance=.8cm and 0.4cm
]

\node[block](s0){$s_0$};
\node[block, below of = s0](s1){$s_1$};
\node[block, below of =s1](s2){$s_2$};
\node[block, below right of =s1, xshift=2mm, yshift=2mm] (s3){$s_3$};
\node[block, below of =s3,yshift=2mm] (s5){$s_5$};
\node[block, below of =s2] (s4){$s_4$};
\node[block, below of =s4] (s6){$s_6$};
\node[block, right of =s6] (s7){$s_7$};
\node[service, below of=s6, xshift=0](s8){$\cdots$};
\node[service, below of=s7, xshift=0](s9){$\cdots$};
\path [line] (s0)--(s1);
\path [line] (s1)--(s2);
\path [line]  (s1)-|(s3);
\path [line] (s2)--(s4);
\path [line] (s4)--(s6);
\path [line] (s4)-|(s7);
\path [line] (s6)--(s8);
\path [line] (s7)--(s9);
\path [line] (s3)--(s5);
\end{tikzpicture}
			\caption{Before simplification}
			\label{fig:workflow:beforesim}
		\end{subfigure}
		\begin{subfigure}[b]{0.4\textwidth}
			\centering
			\begin{tikzpicture}[
 block/.style = {circle, draw,
    text width=1em,align=center,inner sep=0pt},
line/.style = {draw,thick, -latex},
  service/.style={align=left, text width=0.5cm},
node distance=.8cm and 0.4cm
]

\node[block](s2){$s_2$} ;
\node[block, below of =s2] (s4){$s_4$};
\node[block, below of =s4] (s6){$s_6$};
\node[block, right of =s6] (s7){$s_7$};
\node[service, below of=s6](s8){$\cdots$};
\node[service, below of=s7](s9){$\cdots$};
\path [line] (s2)--(s4);
\path [line] (s4)--(s6);
\path [line] (s4)-|(s7);
\path [line] (s6)--(s8);
\path [line] (s7)--(s9);
\end{tikzpicture}
			\caption{After simplification}
			\label{fig:workflow:aftersim}
		\end{subfigure}
		\caption{LTS Simplification of \newsInfoShort{}}\label{fig:beandsy}
	\end{figure}
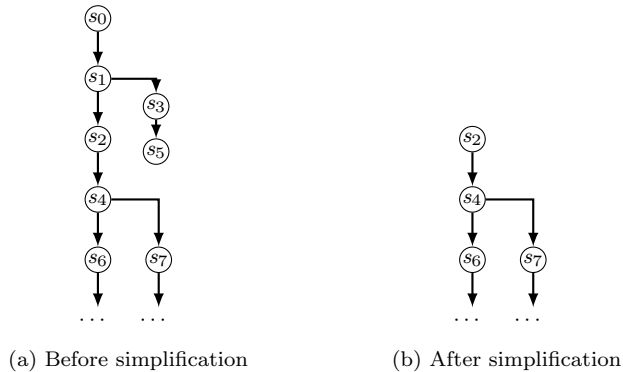

	By incorporating the runtime information, the resulting \rLTC{} at state $\mystate_2$ is:

	\medskip
	{\small\noindent
	\fbox{\begin{minipage}{8.3cm}
	$\big ((\param_{\free}\leq 1) \implies (r +\param_{\free} \leq 3) \big) \land \\
	\big ( (\param_{\free}\geq 1 \land \param_{\paid}\leq 1) \implies (r +\param_{\paid} \leq 2) \big) \land \\
	\lnot (\param_{\free}\geq 1 \land \param_{\paid}\geq 1)$
	\end{minipage}}
	}
	\medskip

\end{example}

\LongVersion{%
	Although the synthesized \rLTC{} is still incomplete, nevertheless by incorporating runtime information, it allows synthesizing a constraint weaker than \dLTC{}.
	By allowing more parameter valuations, \rLTC{} mitigates the problem of the incompleteness of the \dLTC{}.
}

\subsection{Runtime adaptation of a BPEL process}\label{ssec:rrruntimemonitoring}

We now introduce a service adaptation framework to improve the conformance of global time requirement for a composite service.
\LongVersion{%
	The framework makes use of \rLTC{}.
}%
The architecture of the framework is shown in \cref{fig:architecture}.
There are two modules in the framework--- the Runtime Engine Module (\modRE{}) and the Service Monitoring Module (\modSM{}).
\modRE{} provides an environment for the execution of a BPEL service; here, we use ApacheODE~\cite{ApacheODE}, an open source BPEL engine. We instrument the runtime component of Apache ODE
to communicate with the service monitoring module.

\begin{figure}[tb]
	\centering
	\includegraphics[width=1.8in]{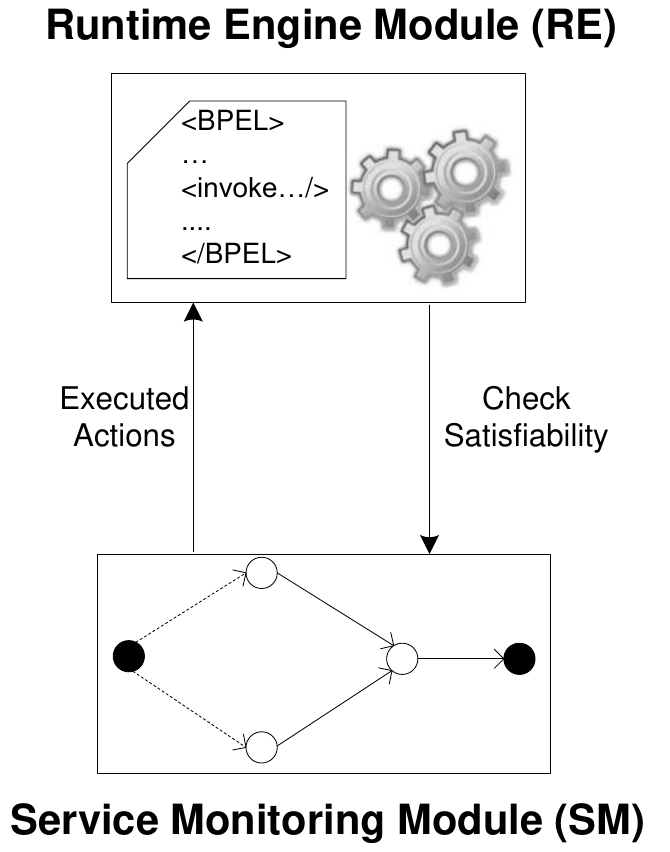}
	\caption{Service adaptation framework}
	\label{fig:architecture}
\end{figure}
\modSM{} is used to monitor the execution of a BPEL service.
During the deployment of a service $\CS$, \modSM{} generates the LTS of~$\CS$  and stores it in the cache of \modSM{} so that it is available when $\CS$ is executing.

During the execution of the composite service $\CS$, the sequences of rules
from \modRE{} are used to update the active state $\mystate_a \in \States{}$ of LTS stored in \modSM{}.
The %
sequence of rules
is\ls{are}\ea{I think this should be ``is'': ``sequence'' is singular (but English plural rules might differ from French?) But according to \url{https://gmat.economist.com/gmat-advice/gmat-verbal-section/sentence-correction/using-singular-vs-plural-verb-after-relative-pronoun} it looks the same in both languages.} also stored as part of the current execution run.
\modSM{} also keeps track of the total execution time for this execution run, as well as the response time for each component service invocation.

Prior to the invocation of a component service $\Service$, \modRE{} will consult \modSM{} to check the satisfiability of \rLTC{}. If  the \rLTC{} of $\mystate_a$ is satisfiable, then \modSM{} will instruct \modRE{} to continue invoking $\Service$ as usual.
Otherwise, some kind of mitigation procedure may be triggered. One of the possible mitigation procedures is to invoke a backup service of $\Service$, $\Service_{bak}$, which has a faster stipulated response time than $\Service$ (that may come with a cost).

\begin{example}
	Consider again the running example \newsInfoShort{} in \cref{sec:timeBpelExample}.
	An example of $\Service$ and $\Service_{bak}$, are services \free{} and \paid{} respectively.
\end{example}

In the following, we introduce the details on the synthesis of \rLTC{} (\cref{ssec:rralgo}) and  satisfiability checking (\cref{ssub:satChecking}).

\subsection{Algorithm for runtime refinement}\label{ssec:rralgo}

A way to calculate the \rLTC{} could be to run \algoDLTC{} (\cref{algo:synConstraint}) from a state $\mystate$ in the LTS.
However, this requires traversing the state-space repeatedly for every calculation of the \rLTC{}. To make it more efficient, we extend \algoDLTC{} by calculating the \rLTC{} for each state $\mystate$ during the synthesis of the LTC at the design time.
Therefore, at runtime, we only need to retrieve the synthesized \rLTC{} of the corresponding state for direct usage.

\algoRLTC{} (given in \cref{algo:synConstraintR}) synthesizes the \rLTC{} for each state in the LTS.
Before explaining the algorithm, let us introduce a few notations used in \cref{algo:synConstraintR}.
First, we assume that states in the LTS of~$\CS$ are augmented with an additional ``field'' to store the computed \rLTC{}.
We use $\mystate.\rLTC$ to denote the $\rLTC$ associated with state~$\mystate$.
Additionally, we use the following shorthand to perform a conjunction of pairs of parametric constraints $(cons_i.g, cons_i.b)$ such that the resulting pair is such that its left-hand (resp.\ right-hand) side is the conjunction of all left-hand (resp.\ right-hand) sides:
$\bigsqcap \big( (cons_1.g,cons_1.b), \dots, (cons_n.g,cons_n.b) \big)$
denotes
$\big( (cons_n.g \land \dots \land cons_n.g) , (cons_n.b \land \dots \land cons_n.b) \big)$.

Given a composite service $\CS{}$ together with its associated LTS, and a state in $\LTS{\CS}$, \algoRLTC{} returns a constraint pair $\Constraint_s=(g,b)$, where $g,b \in \setP$.
In this pair, $g$ (resp.~$b$) denotes the constraint associated to a good (resp.\ bad) state.
Given a constraint pair $\Constraint_s$, we use $\Constraint_s.g$ (resp.\ $\Constraint_s.b$) to refer to the first (resp.\ second) component of $\Constraint_s$.
Variables $d_f$ and $r_f$ are \emph{free variables}, which are variables to be substituted at runtime. In particular,
given a state $\mystate$, free variables $d_f$ and $r_f$ in $\mystate.\rLTC$ are to be substituted by the delay component $\mystate.D \in \setLP$ and the actual delay $r \in \grandrplus$ from the initial state to the state $\mystate$ respectively.

\begin{algorithm}[htb]

	\SetKwInOut{Input}{input}\SetKwInOut{Output}{output}
	\Input{Composite service $\CS{}$}
	\Input{LTS~$\LTS{\CS}$ of $\CS{}$}
	\Input{State $\mystate$ in LTS of $\CS{}$}

	\Output{Constraint pair for sub-LTS of $\CS{}$ starting with $\mystate$}

	\BlankLine

	$cons \assign \emptyset$\label{line:sLTC:1}\;
	
	\uIf{$\mystate$ is a good state\label{line:sLTC:2}}{
		$cons \assign \big(\projectP{\mystate.C} \conditionArrow (\mystate.D - d_f + r_f\leq \globalDelay) , true \big) $\label{line:sLTC:3}\;
		$\mystate.\rLTC   \assign cons.g \land (d_f=\mystate.D)$\label{line:sLTC:4}\;
	}

	\uElseIf{$\mystate$ is a bad state\label{line:sLTC:5} }{
		$cons \assign(true, \lnot (\projectP{\mystate.C}))$\label{line:markbad1}\label{line:sLTC:6}\;
		$\mystate.\rLTC  \assign cons.b$\label{line:sLTC:7}\;
	}

	\Else{
		\tcp{$\mystate$ is a non-terminal state\label{line:sLTC:8}}
		
		$cons \assign \bigsqcap_{\mystate' \in \Succ(\mystate)} \algoRLTC(\mystate')$\label{line:sLTC:9}\;
		$\mystate.\rLTC \assign cons.g \land cons.b \land (d_f=\mystate.D)$\label{line:sLTC:13}\;
	}

	\Return $cons$\;

\caption{$\algoRLTC(\CS{}, \LTS{\CS}, \mystate)$~\label{algo:synConstraintR}}

\end{algorithm}

Let us now explain \algoRLTC{} in details.
Given a good state $\mystate$ (\cref{line:sLTC:2}), $\mystate.\rLTC{}$ is assigned with value $cons.g$, with free variable $d_f$ substituted with  $\mystate.D$ (\cref{line:sLTC:4}); note that substitution is here achieved using conjunction of the constraint with the equality $d_f=s.D$.
As an illustration, consider the good state $\mystate_{13}$ in the \newsInfoShort{} example (the LTS of which is given in \cref{fig:running}).
At runtime, assume the active state is at state $\mystate_{13}$, and assume that it takes $r\in \grandrplus$ time units to execute from the initial state $\mystate_0$ to state $\mystate_{13}$.
Therefore, the previously unknown parametric  response time  in the delay component of state $\mystate_{13}$, \ie{} $t_{\dataS}+1+t_{\paid{}}$, can be substituted with the real value~$r$.
To achieve this, at \cref{line:sLTC:3}, we subtract away the free variable $d_f$, which is to be substituted with the response time parameter of state $\mystate_{13}$, and add back the free variable $r_f$, which is to be substituted with the real value~$r$.
We substitute the free variable $d_f$ at \cref{line:sLTC:4}.
For free variable $r_f$, it is only substituted in \cref{algo:chkSat} at runtime when the delay is known.
In the case of the \newsInfoShort{} example, the $\rLTC{}$ of state $\mystate_{13}$ after substituting free variable $r_f$ with value $r$ (\ie{} $\mystate_{13}.\rLTC{} \land{} (r_f=r)$) is $((t_{\paid{}} \leq 1 \land t_{\free{}} \geq 2) \implies (r \leq 3))$. %

When $\mystate$ is a bad state (\cref{line:sLTC:5,line:sLTC:6,line:sLTC:7}), we simply compute the negation of the associated constraint so as to keep the system reaching this bad state (just as in \cref{algo:synConstraint}).

When $\mystate$ is a non-terminal state (\cref{line:sLTC:8}), $\mystate.\rLTC{}$ is assigned with the conjunction of all good and bad constraints computed by recursively calling \algoRLTC{} on the successor states of~$\mystate$, where free variable $d_f$ is substituted with $\mystate.D$ (\cref{line:sLTC:13}).
\LongVersion{%
	The reason for taking the conjunction of both good and bad constraints is to guarantee any complete run from state~$\mystate$ ends in a good state, and to avoid the reachability of any bad state from~$\mystate$.
	Also note that the \rLTC{} of the initial state $\mystate_0$ is the same as its \dLTC{}, \ie{} $\mystate_0.\rLTC{} = \dLTC{}$; the reason is that at the initial state $\mystate_0$, there is no runtime information for refining the \dLTC{}, hence the refined LTC is equal to the static LTC.
	In fact, one can see \cref{algo:synConstraintR} as a generalization of \cref{algo:ltc}, in the sense that \cref{algo:synConstraintR} can be applied to any state (not only the initial one), and can benefit from the current partial execution.
}

\subsection{Satisfiability checking}\label{ssub:satChecking}

We now introduce a satisfiability checking algorithm.
This satisfiability checking is done before the invocation of a component service.
Suppose that, before the invocation of a component service~$\Service_i$, $\CS{}$ is at the active state~$\mystate_a$.
The satisfiability of the \rLTC{} at $\mystate_a$  will be checked before~$\Service_i$ is invoked.
If it is satisfiable, then it will invoke~$\Service_i$ as usual.
Otherwise, some mitigation procedures will be triggered.
A mitigation procedure could consist of invoking a faster backup service~$\Service_i'$ instead of~$\Service_i$.

\begin{algorithm}[ht!]

	\SetKwInOut{Input}{input}\SetKwInOut{Output}{output}

	\Input{LTS of the parametric composite service $\CS$,
	Active state $\mystate_a \in \States{}$,
	Elapsed time $r \in \grandrplus$,
	Set of parametric response times~$\Param$,
	Parameter valuation~$\pval$}
	\Output{True if the local time constraint at $\mystate_a$ is satisfiable, false otherwise}
	
	\BlankLine
	
	\Return{$\issat{}((\bigwedge_{1\leq i \leq n} \param_{i} \leq \pval(\param_i)) \implies (\mystate_a.\rLTC{} \land{} (r_f=r)))$\label{line:checksat:1}\; }

	\caption{\algoCheckSat($\LTS{\CS}, \mystate_a, r, \Param, \pval$)}
	\label{algo:chkSat}
\end{algorithm}

We give in \cref{algo:chkSat} the algorithm checking the satisfiability of \rLTC{} at state $\mystate_a\in Q$.
With the assumption that all component services will reply within their stipulated response times ($\bigwedge_{1\leq i \leq n} \param_{i} \leq \pval(\param_i)$), \algoCheckSat{} checks whether the \rLTC{} at state $\mystate_a$ can be satisfied with free variables $r_f$ substituted with the actual elapsed time $r \in \grandrplus$.
The function $\issat{}$ \LongVersion{(\cref{line:checksat:1}) }returns true iff the input constraint is satisfiable.

\subsection{Termination and soundness of \algoRLTC{}}\label{ss:refinement:soundness}

\LongVersion{%
	We now prove the termination and soundness of the synthesis of \algoRLTC{}.
}

\subsubsection{Termination}

\begin{proposition}\label{proposition:refinement:termination}
	Let $\CS$ be a service model, $\mystate$ be a state in $\LTS{\CS}$.

	Then $\algoRLTC(\CS,\LTS{\CS}, \mystate)$ terminates.
\end{proposition}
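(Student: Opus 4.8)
The plan is to reuse, almost verbatim, the termination argument given for \algoDLTC{} in \cref{proposition:termination}, since \algoRLTC{} has the same recursive skeleton. First I would invoke \cref{lemma:acyclic}, which guarantees that $\LTS{\CS}$ is both acyclic and finite; this is the only structural ingredient the proof really needs.

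I would then inspect the three branches of \cref{algo:synConstraintR}. The good-state branch (\cref{line:sLTC:2,line:sLTC:3,line:sLTC:4}) and the bad-state branch (\cref{line:sLTC:5,line:sLTC:6,line:sLTC:7}) perform only elementary constraint manipulations (projection, negation, conjunction with an equality) and issue \emph{no} recursive call, so they terminate immediately. The sole recursion is in the non-terminal branch at \cref{line:sLTC:9}, where \algoRLTC{} is applied to every element of $\Succ(\mystate)$, i.e.\ to the \emph{strict} successors of~$\mystate$. As $\Succ(\mystate)$ is finite, each call spawns only finitely many sub-calls.

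The heart of the argument is that acyclicity forbids any state from being reachable from itself, so the recursion strictly descends along the edges of a finite directed acyclic graph: the depth of the call tree rooted at~$\mystate$ is bounded by the length of the longest path from~$\mystate$ to a terminal state, which is finite by \cref{lemma:acyclic}. Hence no state is ever revisited within a single descending chain and the recursion cannot unfold indefinitely, which yields termination for every state~$\mystate$. I expect no genuine obstacle here, as termination is purely structural and independent of the semantic content of the constraints; the only point deserving explicit mention is that the recursive calls target \emph{successors} rather than~$\mystate$ itself, which is precisely what lets acyclicity bound the recursion depth---exactly the observation already exploited in \cref{proposition:termination}.
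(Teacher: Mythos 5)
Your proposal is correct and follows essentially the same route as the paper's own proof: both rely on \cref{lemma:acyclic} (acyclicity and finiteness of $\LTS{\CS}$) together with the observation that the only recursive call at \cref{line:sLTC:9} targets the successors of the current state, so the recursion cannot unfold indefinitely. Your remark that the call tree has bounded depth and finite branching is a slightly more careful phrasing of the paper's statement that no state is explored more than once, but it is the same argument.
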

\begin{proof}
	Observe that \cref{algo:synConstraintR} is recursive (on \cref{line:sLTC:9}).
	However, due to the acyclic nature of $\LTS{\CS{}}$ (from \cref{lemma:acyclic}) and the fact that \cref{algo:synConstraintR} is called recursively on the successors of the current state, then no state is explored more than once.
	This ensures termination.
\end{proof}

\subsubsection{Soundness}

\cref{theorem:refinement:soundness-dynamic} formally states the correctness of our runtime refinement algorithm.
\LongVersion{%
	It generalizes \cref{theorem:soundness} to the case of runtime refinement.
}

\begin{theorem}\label{theorem:refinement:soundness-dynamic}
	Let $\CS$ be a service model. %
	Let $\LTS{\CS}$ be the LTS of~$\CS$.	
	Let $\mystate$ be the current state in~$\LTS{\CS}$ and $r$ be the current elapsed time.
	
	Fix $\pval \models \algoRLTC(\CS, \LTS{\CS}, \mystate)$.
	Then:

	\begin{enumerate}
		\item there exists a run in~$\LTS{\CS[\pval]}$ ending in some state $\mystate_\pval$ such that this run is equivalent to a run of $\LTS{\CS}$ ending in~$\mystate$;

		\item any complete run of the sub-LTS of $\LTS{\CS[\pval]}$ starting from $\mystate_\pval$ ends in a good state;
		
		\item for all good states $(\Valuation, \Process_g, \Constraint, d)$ in the sub-LTS of $\LTS{\CS[\pval]}$ starting from $\mystate_\pval$, then $d \leq \globalDelay$.
	\end{enumerate}
	
\end{theorem}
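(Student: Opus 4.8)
The plan is to reduce the statement to the static soundness result \cref{theorem:soundness}, applied to the sub-LTS of $\LTS{\CS}$ starting from~$\mystate$ (\cref{def:subLTS}), viewing~$\mystate$ as a fresh initial state and accounting for the runtime-observed prefix through the free-variable substitutions performed by \cref{algo:synConstraintR}. The backbone is an inductive invariant describing exactly what constraint \algoRLTC{} accumulates for this sub-LTS; I would state and prove it first, and then read off the three claims essentially as in the proofs of \cref{lemma:nobad,lemma:atleastonegood,lemma:goodontime}.

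First I would establish the structural invariant. Since $\LTS{\CS}$ is acyclic and finite (\cref{lemma:acyclic}), so is the sub-LTS from~$\mystate$, and I can induct on its height exactly as $\SynthesizeConstraint$ recurses. The claim is that $\algoRLTC(\CS,\LTS{\CS},\mystate)$ returns a pair $(g,b)$ whose conjunction $g\land b$ is logically equivalent to
\[ \Big(\bigwedge_{\mystate_g}\big(\projectP{\mystate_g.\Constraint}\implies(\mystate_g.\Delay-d_f+r_f\leq\globalDelay)\big)\Big)\ \land\ \Big(\bigwedge_{\mystate_b}\lnot\projectP{\mystate_b.\Constraint}\Big), \]
where $\mystate_g$ (resp.\ $\mystate_b$) ranges over the good (resp.\ bad) states reachable from~$\mystate$. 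The base cases are \cref{line:sLTC:3,line:sLTC:6}; the inductive step is the $\bigsqcap$ over successors on \cref{line:sLTC:9}, which distributes the conjunction over the two components exactly as \cref{algo:synConstraint} conjoins over $\Succ(\mystate)$. Binding $d_f=\mystate.\Delay$ (\cref{line:sLTC:4,line:sLTC:13}) and substituting $r_f=r$ at runtime then turns each good-state conjunct into $\projectP{\mystate_g.\Constraint}\implies\big(r+(\mystate_g.\Delay-\mystate.\Delay)\leq\globalDelay\big)$, i.e.\ the requirement that the \emph{actual} elapsed time~$r$ up to~$\mystate$, plus the parametric delay from~$\mystate$ to~$\mystate_g$, stays within~$\globalDelay$.

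With the invariant in hand the three claims follow. For item~1, $\mystate$ is the current runtime state, hence reachable in $\LTS{\CS}$, and $\pval$ is consistent with the observed prefix, so $\pval\models\projectP{\mystate.\Constraint}$; the reachability condition (\cref{theorem:reachability}) then yields a run of $\LTS{\CS[\pval]}$ ending in the desired $\mystate_\pval$, with the correspondence unique by \cref{proposition:unique-equivalent-run,proposition:unique-equivalent-run-opposite}. For item~2, the invariant gives $\pval\models\lnot\projectP{\mystate_b.\Constraint}$ for every bad state reachable from~$\mystate$, so by \cref{theorem:reachability} none is reachable in the sub-LTS of $\LTS{\CS[\pval]}$ from~$\mystate_\pval$; since every complete run of the sub-LTS ends in a good or a bad state, it must end in a good one, just as in \cref{lemma:nobad,lemma:atleastonegood}. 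For item~3, the invariant gives $\pval\models\big(\projectP{\mystate_g.\Constraint}\implies(r+\mystate_g.\Delay-\mystate.\Delay\leq\globalDelay)\big)$ for each reachable good state; combined with \cref{theorem:reachability} (only good states with satisfiable $\projectP{\mystate_g.\Constraint}$ are actually reached), this yields $d\leq\globalDelay$ for the concrete delay~$d$ of each reachable good state, mirroring \cref{lemma:goodontime}.

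The main obstacle is the delay bookkeeping hidden in $d_f$ and $r_f$. I must check that binding $d_f$ to $\mystate.\Delay$ only at the active state (\cref{line:sLTC:13}) while substituting $r_f$ at runtime (via \cref{algo:chkSat}) really produces the single shift $r+(\mystate_g.\Delay-\mystate.\Delay)$ and not some doubly-shifted quantity across recursion levels, using that $\mystate_g.\Delay$ is additive along runs so the subtraction of the prefix is well defined. More delicately, the delay~$d$ in item~3 must be read as the genuine wall-clock time, namely $r+\pval(\mystate_g.\Delay-\mystate.\Delay)$, rather than the stipulated $\pval(\mystate_g.\Delay)$ one would naively extract from $\LTS{\CS[\pval]}$ using Definition~7; reconciling this replacement of the prefix $\pval(\mystate.\Delay)$ by the observed~$r$ is exactly what the free variables encode, and is the sole place where the runtime argument genuinely departs from the static one. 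The soundness of the projection onto~$\Param$ (that $\mystate.\Constraint$ constrains only parameters after clock pruning) is needed here as in the static algorithm; everything else is a routine transcription of the proofs of \cref{theorem:soundness} from the whole LTS to the sub-LTS.
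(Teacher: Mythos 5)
Your proposal is correct in substance, but it takes a genuinely different route from the paper. The paper's proof is a three-line reduction: item~1 is obtained directly from \cref{proposition:unique-equivalent-run-opposite}; for items~2 and~3 it observes that complete runs (resp.\ good states) of the sub-LTS of $\LTS{\CS[\pval]}$ starting from $\mystate_\pval$ are suffixes of complete runs (resp.\ good states) of the whole $\LTS{\CS[\pval]}$, and then simply invokes \cref{lemma:atleastonegood,lemma:goodontime}. You instead re-prove the soundness \emph{inside} the sub-LTS: you first establish an explicit invariant characterizing the pair returned by \algoRLTC{} as the conjunction of good-state implications (with the $d_f$/$r_f$ shift) and bad-state negations over the states reachable from~$\mystate$, and then re-derive the analogues of \cref{lemma:nobad,lemma:atleastonegood,lemma:goodontime} from the reachability condition (\cref{theorem:reachability}) restricted to that sub-LTS. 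The trade-off is instructive: the paper's route is shorter and reuses already-proved lemmas, but those lemmas are stated under $\pval \models \algoDLTC(\CS)$, which is in general \emph{stronger} than the theorem's hypothesis $\pval \models \algoRLTC(\CS,\LTS{\CS},\mystate)$ (the whole point of the \rLTC{} is to be weaker), so the reduction implicitly relies on exactly the sub-LTS-restricted invariant you make explicit; your version is longer but self-contained, and it also surfaces the delay bookkeeping (the binding of $d_f$ to $\mystate.\Delta$, the runtime substitution of $r_f$, and the reading of the delay $d$ in item~3 as $r$ plus the residual parametric delay) that the paper's citation of \cref{lemma:goodontime} glosses over. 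Note that both treatments share the same small gap on item~1: applying \cref{proposition:unique-equivalent-run-opposite} needs $\pval \models \projectP{\mystate.\Constraint}$, which neither the theorem's hypothesis nor the paper's one-line citation supplies formally; your appeal to consistency of $\pval$ with the observed prefix is the intended (but unstated) justification.
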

\begin{proof}
	\begin{enumerate}
		\item From \cref{proposition:unique-equivalent-run-opposite}.
		\item From \cref{def:subLTS}, the sub-LTS of $\LTS{\CS[\pval]}$ starting from $\mystate_\pval$ contains the successors of $\mystate_\pval$ in $\LTS{\CS[\pval]}$, and hence any complete run of the sub-LTS of $\LTS{\CS[\pval]}$ starting from $\mystate_\pval$ corresponds to the end of some complete run of $\LTS{\CS[\pval]}$.
		From \cref{lemma:atleastonegood}, any complete run of $\LTS{\CS[\pval]}$ ends in a good state, which gives the result.
		\item Any good state of the sub-LTS of $\LTS{\CS[\pval]}$ starting from $\mystate_\pval$ is also a good state of $\LTS{\CS[\pval]}$.
		From \cref{lemma:goodontime}, for all good state of $\LTS{\CS[\pval]}$, $d \leq \globalDelay$, which gives the result.
	\end{enumerate}
\end{proof}

\begin{remark}[Complexity of \cref{algo:synConstraintR}]\label{remark:complexity:algo:synConstraintR}
	First, note that all states of $\LTS{\CS{}}$ are explored by \cref{algo:synConstraintR}: indeed, the algorithm is recursively called on non-terminal states, and stops only on terminal states---that have no successors anyway.
	So, the algorithm time is constant in the number of states of the sub-LTS of $\LTS{\CS[\pval]}$ starting from $\mystate_\pval$.
	
	Let us now investigate the worst-case number of conjuncts in the result of~\cref{algo:synConstraintR}.
	The algorithm returns the good conjuncts ($cons.g$), the bad conjuncts ($cons.b$) and a last term (``$d_f=\mystate.D$'') (\cref{line:sLTC:13} in \cref{algo:synConstraintR}).
	Any good terminal state or bad terminal state adds exactly one conjunct to either $cons.g$ or~$cons.b$.
	Therefore, the number of conjuncts is exactly the number of terminal states, plus one due to the last term.
\end{remark}

\subsection{Discussion}\label{sec:discussion}

\paragraph{Termination}
From \cref{proposition:refinement:termination}, our method terminates due to the fact that BPEL composite services do not support recursion, and \cref{assumption:bound} on the loop activities ensuring that the upper bound on the number of iterations and the time of execution are known.
We briefly discuss how to enforce this assumption in the presence of loops in the composite service.
The upper bound on the number of iterations could be either inferred by using loop bound analysis tool (\eg{} \cite{Ermedahl_1317:2007}), or could be provided by the user otherwise.
In the worst case, an alternative option is to set up a bound arbitrary but ``large enough''.
Concerning the maximum time of loop executions, it could be enforced by using proper timeout mechanism in BPEL.

\paragraph{Time for internal operations}\label{paragraph:internal}
For simplicity, we do not account for the time taken for the internal operations of the system. In reality, the time taken by the internal operations may become significant, especially when the process is large.
We can provide a more accurate synthesis of the constraints by including an additional constraint $t_{overhead} \leq b$, where $t_{overhead} \in \grandrplus$ is a time overhead for an internal operation, and $b \mem \grandrplus$ is a machine dependent upper bound for $t_{overhead}$.
The method to obtain an estimation of $b$ is beyond the scope of this work; interested readers may refer to, \eg{} \cite{DBLP:conf/www/MoserRD08}.

\LongVersion{
	\paragraph{Completeness of \algoRLTC{}}
	The \rLTC{} computed by our algorithm \algoRLTC{} is still incomplete in general,
	with the same reason for the incompleteness of the \dLTC{} computed by \algoDLTC{} as discussed in \cref{ssec:completeness}. Nevertheless, it helps to mitigate the problem of incompleteness of the \dLTC{} with LTS simplification, as illustrated in \cref{sec:rrefinemotivation}.

	\paragraph{Bad activities}
	The bad activities are the activities triggered when timeout occurs.
	For the running example \newsInfoShort, it is a reply activity that reports the user on the timeout of a composite
	service.
	As an additional example, it could also be an invocation activity to
	log the timeout event upon the timeout of a composite service.
	With the rule of thumb that a bad activity is always triggered upon the timeout of a composite service, identifying a bad activity would become an easy task;
	devising techniques for (semi-)automating this task is left as future work.
	On the other hand, specifying bad activities is not mandatory. If the user cannot identify a bad activity in the composite service, (s)he has the option not to specify any. Doing so, all activities in the composite service are treated as good activities.
	This implies that the synthesized constraints only provide the following guarantee: any possible complete run of the composite service is able to satisfy the global time requirement upon completion. It does not consider the situation where the execution of a complete run could directly lead to the violation of the global time requirement, \eg{} the complete run that contains $\mystate_0$, $\mystate_2$, and $\mystate_4$ in \cref{fig:pickltsbad}.
}

\newcommand{\paramC}{p_c}
\newcommand{\paramE}{t_e}

\section{Evaluation}\label{sec:evaluation}

As a proof of concept, we applied our method to several examples.
After briefly presenting our implementation, %
we describe the examples we use (\cref{ssec:casestudies}).
We then evaluate our methods for the synthesis of local time requirement at the design time (\cref{ssec:synLTR}) and for the runtime refinement (\cref{ssec:monitorLTR}).

\paragraph{Implementation}
We implemented our algorithms for synthesizing the \dLTC{} and \rLTC{}\LongVersion{ (\viz{} \algoDLTC{} and \algoRLTC{})} in \ToolBPEL{}, a tool developed in \Csharp{}.
We use an \emph{ad-hoc} input syntax very close to that of \cref{definition:process}.
Our prototype implementation uses basic state space reduction techniques, notably zone inclusion (see, \eg{} \cite{HSW16,NPP18} for recent such techniques in the (parametric) timed setting), to prune whole branches of the state space. %
The front-end GUI relies on the PAT model checker~\cite{SLDP09}.

The simplification of the final results of \dLTC{} and \rLTC{} is achieved using Microsoft Z3~\cite{conf/tacas/MouraB08}.
For the runtime adaptation, we use Apache ODE 1.3.6 as runtime engine module (\modRE{}).
The service monitoring module (\modSM{}) is developed in \Csharp{}, which uses Microsoft Z3 for the satisfiability checking.
The tool and examples can be downloaded at~\cite{toolsforltr}.

\newcommand{\rrmethod}{\textit{rr}}
\subsection{Examples}\label{ssec:casestudies}

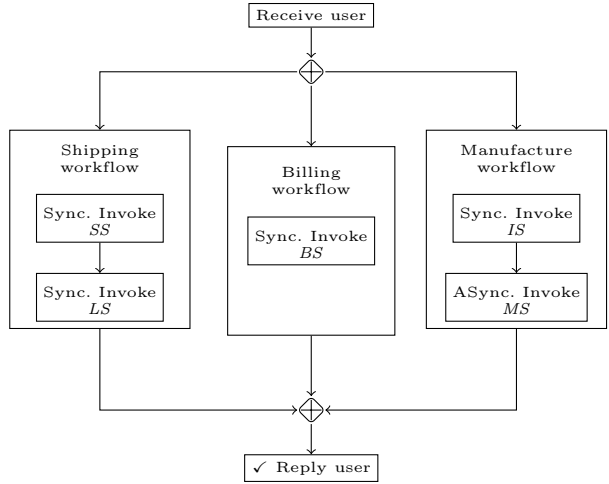
\begin{figure}[t]
	\centering
	\tikzset{
  my box/.style = {draw, minimum width = 3em, minimum height=0.7em},
  service/.style={align=center, text width=2cm},
}
{\scalefont{0.7}
\begin{tikzpicture}[node distance=4mm]
\node[my box,align=center] (b) {\\Sync.\ Invoke\\ \compSS{}};
\node[my box,align=center, below =of b](d) {\\Sync.\ Invoke\\ \compLS{}};
\node [service,above of =b, yshift=4mm](ship){Shipping\\ workflow};

\draw[->] (b)--(d);

\node[draw, fit=(ship) (b) (d), minimum height=2.5cm] (left part) {};

\node[my box,align=center, right= 3.8cm of b] (e) {\\Sync.\ Invoke\\ \compIS{}};
\node[my box,align=center, below = of e] (i) {\\ASync.\ Invoke\\ \compMS{}};
\node [service,above of =e, yshift=4mm](Manu){Manufacture\\ workflow};
\draw[->] (e)--(i);

\node[draw, fit= (e) (Manu) (i), minimum height=2.5cm] (right part) {};

\node[my box,align=center, right= 1.1cm of b, yshift=-3.0mm] (f) {\\Sync.\ Invoke\\ \compBS{}};
\node[draw, fit= (f), minimum height=2.5cm, minimum width=2.2cm] (middle part) {};
\node [service,below of =middle part, yshift=12mm](bill){Billing\\ workflow};

\node[draw,diamond, above=of middle part, rounded corners=1.5pt, yshift=4mm](a){};
\draw  ([yshift=-\Shift]a.north)
    -- ([yshift=+\Shift]a.south)
       ([xshift=+\Shift]a.west)
    -- ([xshift=-\Shift]a.east);

\node[draw,diamond, below=of middle part, rounded corners=1.5pt, yshift=-4mm](g){};
\draw  ([yshift=-\Shift]g.north)
    -- ([yshift=+\Shift]g.south)
       ([xshift=+\Shift]g.west)
    -- ([xshift=-\Shift]g.east);

\draw[->] (a)-|(left part);
\draw[->] (a)--(middle part);
\draw[->] (a)-|(right part);

\draw[->] (left part)|-(g);
\draw[->] (middle part)--(g);
\draw[->] (right part)|-(g);

\node [my box,align=center, above = of a](ru){Receive user};
\node [my box,align=center, below = of g](rp){$\tick$\ Reply user};

\draw[->](ru)--(a);
\draw[->](g)--(rp);
\end{tikzpicture}
} 
	\caption{Computer Purchasing Service (CPS)}
	\label{fig:CPS}
\end{figure}

\begin{figure}[t]
	\centering
	\tikzset{
  my box/.style = {draw, minimum width = 3em, minimum height=0.7em},
  service/.style={align=center, text width=2cm},
  oplus/.style={draw,circle, text width=0.5em,
  postaction={path picture={%
    \draw[black]
      (path picture bounding box.south west) -- (path picture bounding box.north east)
      (path picture bounding box.north west) -- (path picture bounding box.south east);}}},
      decision/.style = {diamond, draw,
    text width=1.5em,inner sep=0pt},
}
{\scalefont{0.7}
\begin{tikzpicture}[node distance=4mm]
\node[my box,align=center] (s1) {ASync.\ Invoke\\ \compFS{}};
\node[oplus,below = of s1](s2){};
\node[my box, align=center, below left= of s2](s3){OnMessage\\ \compFS{}};
\node[my box, align=center, below right= of s2](s4){OnAlarm\\ 2\ seconds};
\node[my box, align=center, below = of s4](s5){ASync.\ Invoke\\ \compFS{}$_{bak}$};
\node[oplus,below = of s5](s6){};
\node[my box, align=center, left= of s6, xshift=-2mm](s7){OnMessage\  \compFS{}$_{bak}$};
\node[my box, align=center, below = of s6](s8){OnAlarm\ 1\ second};
\node[my box, align=center, below = of s8](s9){$\cross$ res='false'};
\draw[->](s1)--(s2);
\draw[->](s2)-|(s3);
\draw[->](s2)-|(s4);
\draw[->](s4)--(s5);
\draw[->](s5)--(s6);
\draw[->](s6.west)--(s7);
\draw[->](s6.south)-|(s8);
\draw[->](s8)--(s9);
\node[service, above of=s1, yshift=4mm, text width=3cm](flight){Flight\ request\ workflow};
\node[draw, fit= (s1) (s4) (s3) (s5) (s7) (s8) (s9) (flight), minimum height=2.5cm] (left part) {};

\node[my box,align=center, right of = s1,xshift=4.0cm] (a1) {ASync.\ Invoke\\ \compHS{}};
\node[oplus,below = of a1](a2){};
\node[my box, align=center, below left= of a2](a3){OnMessage\\ \compHS{}};
\node[my box, align=center, below right= of a2](a4){OnAlarm\\ 2\ seconds};
\node[my box, align=center, below = of a4](a5){ASync.\ Invoke\\ \compHS{}$_{bak}$};
\node[oplus,below = of a5](a6){};
\node[my box, align=center,left= of a6, xshift=-2mm](a7){OnMessage\ \compHS{}$_{bak}$};
\node[my box, align=center, below = of a6](a8){OnAlarm\ 1\ second};
\node[my box, align=center, below = of a8](a9){$\cross$ res='false'};
\draw[->](a1)--(a2);
\draw[->](a2)-|(a3);
\draw[->](a2)-|(a4);
\draw[->](a4)--(a5);
\draw[->](a5)--(a6);
\draw[->](a6.west)--(a7);
\draw[->](a6.south)-|(a8);
\draw[->](a8)--(a9);
\node[service, above of=a1, yshift=4mm, text width=3cm](hotel){Hotel\ request\ workflow};
\node[draw, fit= (a1) (a4) (a3) (a5) (a7) (a8) (a9) (hotel), minimum height=2.5cm] (right part) {};

\coordinate (aux) at ($(left part.north)!.5!(right part.north)$);
\node[draw,diamond, above=of aux, rounded corners=1.5pt](g){};
\draw  ([yshift=-\Shift]g.north)
    -- ([yshift=+\Shift]g.south)
       ([xshift=+\Shift]g.west)
    -- ([xshift=-\Shift]g.east);
\draw[->](g)-|(left part);
\draw[->](g)-|(right part);

\coordinate (aux) at ($(left part.south)!.5!(right part.south)$);
\node[draw,diamond, below=of aux, rounded corners=1.5pt](h){};
\draw  ([yshift=-\Shift]h.north)
    -- ([yshift=+\Shift]h.south)
       ([xshift=+\Shift]h.west)
    -- ([xshift=-\Shift]h.east);
\draw[->](left part)|-(h);
\draw[->](right part)|-(h);

\node[my box, align=center, below =  of h](reply){Reply result};
\draw[->](h)->(reply);
\node[my box, align=center, above= of g](res){res='true'};
\node[my box, align=center, above= of res](rec){Receive\ user};
\draw[->](rec)->(res);
\draw[->](res)->(g);
\end{tikzpicture}
} 
	\caption{Travel Booking Service (TBS)}
	\label{fig:TBS}
\end{figure}
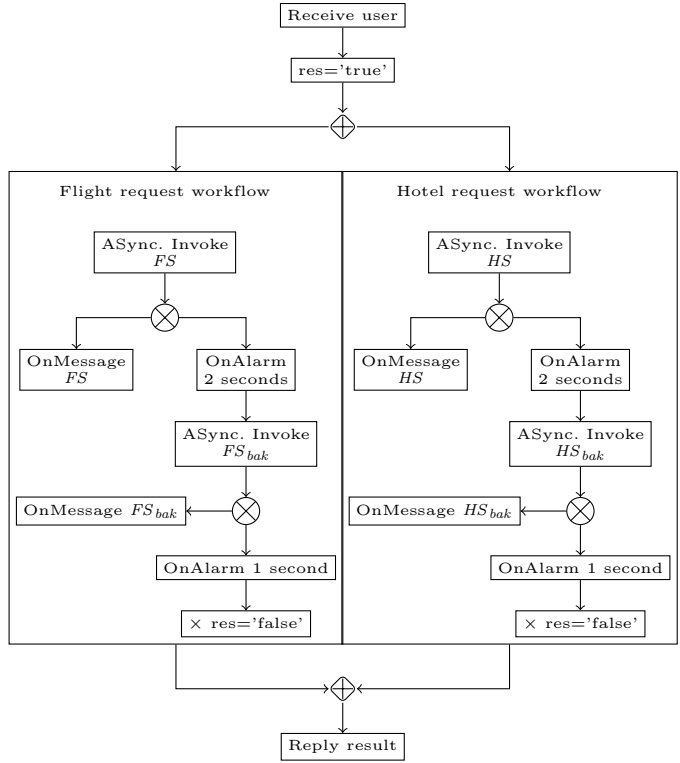

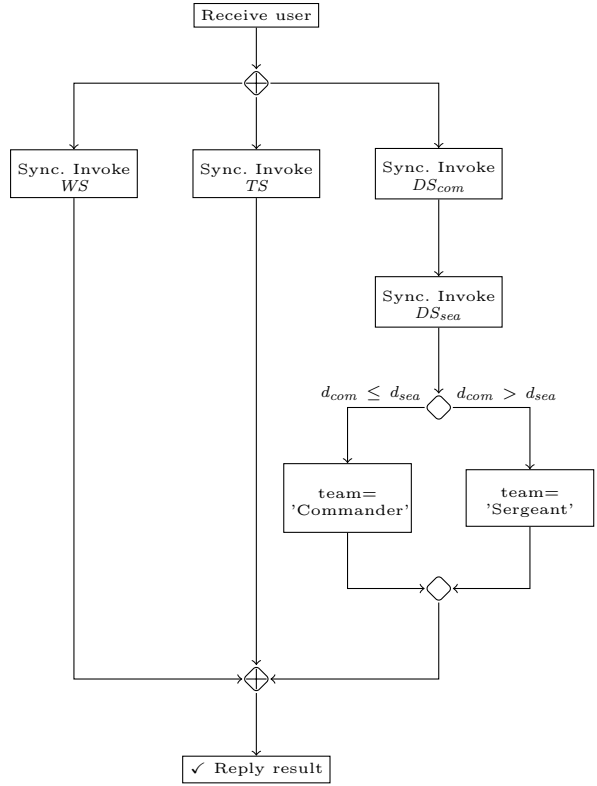
\begin{figure}[t]
	\centering
	\tikzset{
  my box/.style = {draw, align=center,minimum width = 2em, minimum height=0.2em},
  service/.style={align=center, text width=2cm},
}
{\scalefont{0.7}
\begin{tikzpicture}[node distance=7mm]

\node [my box,align=center](s0){Receive user};
\node[draw,diamond, below of =s0, rounded corners=1.5pt,yshift=-2mm](a){};
\draw  ([yshift=-\Shift]a.north)
    -- ([yshift=+\Shift]a.south)
       ([xshift=+\Shift]a.west)
    -- ([xshift=-\Shift]a.east);
\node[my box,align=center, below of =a, yshift=-5mm](s1){\\Sync.\ Invoke\\ \compTS{}};
\node[my box,align=center, below of =a, xshift=-24mm, yshift=-5mm](s2){\\Sync.\ Invoke\\ \compWS{}};
\node[my box,align=center, below of =a, xshift=24mm, yshift=-5mm](s3){\\Sync.\ Invoke\\ \compCom{}};
\node[my box,align=center, below of =s3,yshift=-10mm](s4){\\Sync.\ Invoke\\ \compSea{}};
\node[draw,diamond, below of =s4, rounded corners=1.5pt,yshift=-7mm](b){};
\node[my box,align=center, below of =b,text width=1.5cm, xshift=-12mm, yshift=-5mm,minimum height=3.4em](s5){\\team=\\ 'Commander'};
\node[my box,align=center, below of =b,text width=1.5cm, xshift=12mm, yshift=-5mm,minimum height=2.8em](s6){\\team=\\ 'Sergeant'};
\node[draw,diamond, below of =b, rounded corners=1.5pt,yshift=-17mm](c){};
\node[draw,diamond, below of =s1, rounded corners=1.5pt,yshift=-60mm](d){};
\draw  ([yshift=-\Shift]d.north)
    -- ([yshift=+\Shift]d.south)
       ([xshift=+\Shift]d.west)
    -- ([xshift=-\Shift]d.east);
\node[my box,align=center, below of =d, yshift=-5mm](s7){$\tick$ Reply result};

\draw[->] (s0)--(a);
\draw[->] (a)--(s1);
\draw[->] (a)-|(s2);
\draw[->] (a)-|(s3);
\draw[->] (s3)--(s4);
\draw[->] (s4)--(b);
\draw[->] (b)-|(s5) node [near start,anchor=south, xshift=-2mm] {$d_{com}$\ $\leq$\ $d_{sea}$};
\draw[->] (b)-|(s6) node [near start,anchor=south, xshift=2mm] {$d_{com}$\ $>$\ $d_{sea}$};
\draw[->] (s5)|-(c);
\draw[->] (s6)|-(c);
\draw[->] (c)|-(d);
\draw[->] (s1)--(d);
\draw[->] (s2)|-(d);
\draw[->] (d)--(s7);
\end{tikzpicture}
} 
	\caption{Rescue Team Service (RS)}
	\label{fig:RST}
\end{figure}

\subsubsection{Stock market indices service (\newsInfoShortI{})}
This is the running example introduced in \cref{sec:timeBpelExample}.

\subsubsection{Computer purchasing services (CPS)}
The goal of a CPS \LongVersion{(such as Dell.com) }is to allow a user to purchase a computer system online using credit cards.
Our CPS makes use of five component services, namely Shipping Service (\compSS{}), Logistic Service (\compLS{}), Inventory Service (\compIS{}), Manufacture Service (\compMS{}), and Billing Service (\compBS{}).
The global time requirement of the CPS is to respond within three seconds. The CPS workflow is shown in \cref{fig:CPS}.
The CPS starts upon receiving the purchase request from the client with credit card information, and the CPS spawns three workflows (\viz{} shipping workflow, billing workflow, and manufacture workflow) concurrently.
In the shipping workflow, the shipping service provider is invoked synchronously for the shipping service on computer systems. Upon receiving the reply, \compLS{} (which is a service provided by the internal logistic department) is invoked synchronously to record the shipping schedule.
In the billing workflow, the billing service (which is offered by a third party merchant) is invoked synchronously for billing the customer with credit card information. In the manufacture workflow, \compIS{} is invoked synchronously to check for the availability of the goods. Subsequently, \compMS{} is invoked asynchronously to update the manufacture department regarding the current inventory stock.  Upon receiving the reply message from \compLS{} and  \compBS{}, the result of the computer purchasing will be returned to the user.

\subsubsection{Travel booking service (TBS)}
The goal of a travel booking service (TBS) \LongVersion{(such as Booking.com) }is to provide a combined flight and hotel booking service by integrating two independent existing services.
TBS provides an SLA for its subscribed users, saying that it must respond within five seconds upon request.
The travel booking system has four component services, namely Flight Service (\compFS{}), Backup Flight Service (\compFS{}$_{bak}$), Hotel Service (\compHS{}) and Backup Hotel Service (\compHS{}$_{bak}$).
The TBS workflow is given in \cref{fig:TBS}.
Upon receiving the request from users, the variable $res$ is assigned to true.
After that, TBS spawns two workflows (\viz{} a flight request workflow, and a hotel request workflow)  concurrently.
In the flight request workflow, it starts by invoking \compFS{}, which is a service provided by a flight service booking agent.
If service \compFS{} does not respond within two seconds, then \compFS{} is abandoned, and another backup flight service \compFS{}$_{bak}$ is invoked.
If \compFS{}$_{bak}$ returns within one second, then the workflow is completed; otherwise the variable $res$ is assigned to false.
The hotel request workflow shares the same process as the flight request workflow, by replacing \compFS{} with \compHS{} and \compFS{}$_{bak}$ with \compHS{}$_{bak}$.
The booking result will be replied to the user if $res$ is true; otherwise, the user will be informed of the booking failure.\label{chg:revision3_reviewer1_comment7}

\subsubsection{Rescue team service (RS)}
The goal of a Rescue Team service (RS) is to identify the place, weather, and nearest rescue team, by the longitude and latitude on Earth.
RS makes use of three component services, namely Terra Service (\compTS{}), Weather Service (\compWS{}) and Distance Service (\compDS{}). The global requirement of the RS is to respond within five seconds.
The RS workflow is given in \cref{fig:RST}.
RS starts upon receiving longitude and latitude coordinates from the user.
After that, it invokes Terra Service (\compTS{}), Weather Service (\compWS{}),
and Distance Service (\compDS{}) concurrently.
Service \compTS{} (resp.~\compWS{}) will return the name of the place (resp.\ the weather information)
that corresponds to the longitude and latitude.
\compDS{} is used to calculate the distance between each rescue team and the event location.
In particular, \compCom{} and \compSea{} are used to calculate the distance between commander team and sergeant team to the event location.
If the distance to the event of the commander team ($d_{com}$) is not larger than the distance to the event of the sergeant team ($d_{sea}$), then the commander team will be chosen.
Otherwise, the sergeant team will be chosen.
Subsequently, the place, weather and rescue team information is returned to the user.

\subsection{Synthesis of local time requirement}\label{ssec:synLTR}

\subsubsection{Environment of the experiments}
We run our algorithms to synthesize the \dLTC{}  and \rLTC{} for the four examples on a computer with Intel Core I5 2410M CPU with 4\,GiB RAM.

\subsubsection{Evaluation results}\label{subsec:evaresultssub}
The details of the synthesis are shown in \cref{fig:ltcsynthesis}.
The \textbf{\#states} and \textbf{\#transitions} columns provide the information of number of states and transitions of the LTS, respectively.
We repeated all experiments 30 times; we report here the average time for each experiment. %
The \textbf{\dLTC} and \textbf{\rLTC} columns provide the average time (in seconds) spent for synthesizing \dLTC{} (for the entire LTS),  and  \rLTC{} (for each state in the LTS), respectively. TBS takes a longer time than SMIS, CPS, and RS for synthesizing \dLTC{} and \rLTC{}, as it contains a larger number of states and transitions compared to SMIS, CPS, and RS.
Nevertheless, since both  \dLTC{} and \rLTC{} are synthesized offline, the time for synthesizing the constraints (less than two seconds) for TBS is considered to be reasonable.

\begin{table}[t]
  \centering
    \begin{tabular}{|c|c|c|c|c|}
    \hline
    \textbf{Example} & \textbf{\#states} & \textbf{\#transitions} & \textbf{\dLTC{} (s)} & \textbf{\rLTC{} (s)} \\
    \hline
    SMIS & 14  & 13  & 0.0076 & 0.0078 \\
    \hline
    TBS & 683 & 3677 & 1.8501 & 1.9000 \\
    \hline
    CPS & 120 & 119 & 0.0529 & 0.0559 \\
    \hline
    RS & 85 & 134 & 0.0701 & 0.0733 \\
    \hline
    \end{tabular}

    \caption{Synthesis of \dLTC{} and \rLTC{}}
	\label{fig:ltcsynthesis}%

\end{table}

\LongVersion{%
	The main overhead on synthesizing \dLTC{} and \rLTC{} is due to the calculation of the constraint component $\Constraint$ of each new state $\mystate=(\Valuation,\Process,\Constraint,\Delay)$. Calculation of a constraint component $C'$ is required for each transition in the LTS, in order to create the successor state $\mystate'=(\Valuation',\Process',\Constraint',\Delay')$.
	If two created states $\mystate,\mystate'$ in the LTS are found to be equal, they will be merged into a single state.\label{th:merge}
	Therefore, the number of transitions will have a greater effect than the number of states on the time spent for synthesizing \dLTC{} and \rLTC{}.
}

\LongVersion{
\begin{table}[t]
  \centering
    \scalebox{0.9}{
    \begin{tabular}{|c|c|c|c|}
    \hline
    \textbf{Case Studies} & \textbf{\#states} & \textbf{\#transitions} & \textbf{\dLTC{} (s)} \bigstrut\\
    \hline
    SMIS & 17  & 16  & 0.0090 \bigstrut\\
    \hline
    TBS & 938 & 4614 & 1.9811  \bigstrut\\
    \hline
    CPS & 144 & 143 & 0.0626  \bigstrut\\
    \hline
    RS & 117 & 166 & 0.0740\bigstrut\\
    \hline
    \end{tabular}%
    }
  \caption{Synthesis for \dLTC{} based on \protect\cite{TanICSE13}}
  \label{fig:rltcsynthesis2}

\end{table}

	For comparison with~\cite{TanICSE13}, we also provided the information of synthesis of \dLTC{} based on~\cite{TanICSE13} in \cref{fig:rltcsynthesis2} (\dLTC{} is given in seconds).
	The additional ``and/or states'' (used in~\cite{TanICSE13} for synthesizing the \dLTC{}) yield an increased number of both states and transitions.
	As a consequence, this yields an increased time for synthesizing the \dLTC{}, for all case studies.
}

The synthesized \dLTC{} for SMIS %
has been given in \cref{ssec:ltcexample},
while the synthesized \dLTC{} for CPS, TBS, and RS are shown in \cref{fig:synRes}.
\LongVersion{%
	All the \dLTC{} are simplified and in DNF form.
	It is worth noting that the \dLTC{} of CPS and RS can be represented in one line representation (\ie{} only one inequality) after simplification.
}%
Note that $\param_{\compMS{}}$ does not appear in the \dLTC{} of CPS.
The reason is that \compMS{} is invoked asynchronously without expecting a response; therefore its response time is irrelevant to the global time requirement of~CPS.

The synthesized \rLTC{} are used for runtime adaptation at runtime.
We will evaluate the runtime adaptation of a composite service with \rLTC{} in the following section.

\begin{figure}[t]
	\begin{subfigure}[t]{0.45\textwidth}
		{\centering
			$(\param_{\compSS{}} + \param_{\compLS{}} + \param_{\compIS{}} + \param_{\compBS{}}) \leq  3$
		
		}
		\caption{\dLTC{} for CPS}
		\label{fig:subfig2}
	\end{subfigure}
	\begin{subfigure}[t]{0.45\textwidth}
		{\centering
			$(\param_{TS}+\param_{WS}+ 2 \cdot \param_{\dataS}) \leq 5$

		}
		\caption{\dLTC{} for RS}
	\end{subfigure}
	
	\begin{subfigure}[t]{\textwidth}
		\[
		\big((2 \cdot \param_{HSbak} < \param_{FSbak})\land (2 \cdot  \param_{FSbak} < \param_{HSbak}) \land (\param_{HSbak} < 1) \land (\param_{FSbak} < 1)\big) \\
		\lor \big((\param_{HSbak}< 1) \land (\param_{FSbak} < 1) \land (\param_{FSbak}+ \param_{HSbak} \leq 1)\big) \\
		\lor \big((\param_{HSbak}<1) \land (\param_{\compFS{}}<2)\big)
		\lor \big((\param_{\compHS{}}<2) \land (\param_{FSbak}<1)\big)
		\lor \big((\param_{\compHS{}}<2) \land (\param_{\compFS{}}<2)\big)
		\]
		\caption{\dLTC{} for TBS}
	\end{subfigure}
	\caption{Synthesized \dLTC{}}
	\label{fig:synRes}
\end{figure}

\subsection{Runtime adaptation}\label{ssec:monitorLTR}

We now conduct experiments to answer the following two questions:

\paratitle{Q1} What is the \emph{overhead} of the runtime adaptation?

\paratitle{Q2} What is the \emph{improvement} provided by the runtime adaptation?

\subsubsection{Environment of the experiments}\label{ssec:expsetup}
The evaluation was conducted using two different physical machines, connected by a 100 Mbit LAN.
One machine is running ApacheODE~\cite{ApacheODE} to host the \modRE{} module to execute the BPEL program, configured with Intel Core I5 2410M CPU with 4\,GiB RAM.
The other machine hosts the \modSM{} module, configured with Intel I7 3520M CPU with 8\,GiB RAM.

To test the composite service under controlled situation, we introduce the notion of \emph{execution configuration}. An execution configuration defines a particular execution scenario for the composite service.
Formally, an execution configuration $E$ is a tuple $(M, R)$, where $M$ decides which path to choose for an \xml{if} activity
and $R$ is a function that maps a component service $\Service_i$ to a real value $r \in \grandrplus$, which represents the response time of~$\Service_i$.
We discuss how an execution configuration $E=(M, R)$ is generated.
$M$ is generated by choosing one of the branches of an \xml{if} activity uniformly among all possible branches. %

Let $\CS$ be a composite service
	model, where a component service $\Service_i$ of $\CS$ has a stipulated response time $\pval(\param_i) \in \grandqplus$.
Then $R(\Service_i)$ will be assigned with a response time within the stipulated response time $\pval(\param_i)$ with a probability of $\paramC{}\in \grandqplus \cap [0,1]$.
$\paramC{}$ is the \emph{response time conformance threshold}.  More specifically,  $R(\Service_i)$ will be assigned with a value in $[0, \pval(\param_i)]$ uniformly with a probability of $\paramC{}$, and assigned to a value in $(\pval(\param_i), \pval(\param_i)+\paramE{}]$ uniformly with a probability of $1-\paramC{}$.
$\paramE{} \in \grandrplus$ is the \emph{exceeding threshold}; and assume after $\pval(\param_i)+\paramE{}$ seconds, the component service $\Service_i$ will be automatically timeout by $\modRE{}$ to prevent an infinite delay.

Given a composite service $\CS$, and an execution configuration $E$, a \emph{run} is denoted by $\varrun(\CS,\adaptMech,E)$, where the first argument is the composite service $\CS$ that is running, the second argument $\adaptMech{} \in \{\rrmethod{}, \emptyset \}$ is the adaptive mechanism where $\rrmethod{}$ denotes the runtime adaptation, and $\emptyset$ denotes no runtime adaptation.
\LongVersion{%
	Two runs $\varrun(\CS,\adaptMech,E)$ and $\varrun(\CS',\adaptMech',E')$ are equal if $\CS=\CS'$, $\adaptMech=\adaptMech'$ and $E=E'$.
	Note that all equal runs have the same execution paths and response times for all service invocations.
}

\LongVersion{
	We use two kinds of instrumentation for runtime engines used for adaptive and non-adaptive runs, respectively.
	For both adaptive and non-adaptive runs, we instrument both runtime engines to execute conditional statements based on~$M$.
	For adaptive runs only, we instrument the runtime engine according to~\cref{ssec:rrruntimemonitoring}.
}

\begin{table}[t]
	\centering
	\begin{tabular}{|c|c|c|}
	\hline
	\textbf{Example} & \textbf{Avg.\ \#SAT} & \textbf{Avg.\ SAT runtime (s)}\\
	\hline
	SMIS & 1.74  & 13 \\
	\hline
	TBS & 2.25 & 17 \\
	\hline
	CPS & 4.00 & 27 \\
	\hline
	RS & 4.00 & 19 \\
	\hline
	\end{tabular}%

    \caption{Satisfiability checking}\label{fig:dltcsynthesis}%

\end{table}%

\begin{table*}[t]
  \centering
\begin{tabular}{|c|c|c|c|c|c|}
    \hline
   & \textbf{$\paramC{}$} &\textbf{$N_{se}$} & \textbf{$N_{e}$}&  \textbf{Improvement (\%)} & \textbf{Avg.\ backup service} \\
    \hline
    \multirow{4}[8]{*}{SMIS} & 0.9 & 9441 & 8976 & 5.18 &  0.127 \\
\cline{2-6}        & 0.8 & 9211 & 8374 & 10.00 &  0.352   \\
\cline{2-6}        & 0.7 & 8109 & 6965 & 16.42 & 0.577 \\
\cline{2-6}        & 0.6 & 7593 & 6348 & 19.61 & 0.702   \\
    \hline
    \multirow{4}[8]{*}{TBS} & 0.9 & 10000 & 9743 & 2.64 & 0.384   \\
\cline{2-6}        & 0.8 & 10000 & 9364 & 6.79 & 0.779  \\
\cline{2-6}        & 0.7 & 10000 & 8460 & 18.20 & 0.948    \\
\cline{2-6}        & 0.6 & 10000 & 7700 & 29.87  & 1.05 \\
    \hline
    \multirow{4}[8]{*}{CPS} & 0.9 & 9523 & 8809 & 8.11 & 1.259  \\
\cline{2-6}        & 0.8 & 9241 & 7156 & 29.14 & 1.509 \\
\cline{2-6}        & 0.7 & 8504 & 6108 & 39.23 & 2.014  \\
\cline{2-6}        & 0.6 & 8430 & 5650 & 49.20 & 2.578 \\
    \hline
 \multirow{4}[8]{*}{RS} & 0.9 & 8181 & 7271 & 12.52 & 1.787 \\
\cline{2-6}        & 0.8 & 7201 & 7011 & 2.71 & 1.589 \\
\cline{2-6}        & 0.7 & 6590 & 5227 & 26.08 & 1.659  \\
\cline{2-6}        & 0.6 & 5609 & 4146 & 35.29 & 1.54 \\
    \hline
    \end{tabular}%

    \caption{Improvement of runtime conformance}  \label{fig:impExp}
\end{table*}%

\subsubsection{Evaluation results}\label{ssec:evalresults}
We conducted two experiments Exp1 and Exp2, to answer the questions Q1 and~Q2, respectively.
Each experiment goes through 10,000 rounds of simulations, and an execution configuration $E$ is generated for each round of simulation.
Given a composite service $\CS{}$, we assume that for each component service~$\Service_i$ with a stipulated response time $\pval(\param_i)$, there exists a backup service~$\Service'_i$, with a stipulated response time $\pval(\param_i)/2$ and a conformance threshold of 1.
Suppose that before the invocation of a component service~$\Service_i$, $\CS{}$ is at active state $\mystate_a$.
The satisfiability of the \rLTC{} at $\mystate_a$  will be checked (using \cref{algo:chkSat}) before~$\Service_i$ is invoked.
If it is satisfiable, then it will invoke~$\Service_i$ as usual.
Otherwise, the faster backup service~$\Service_i'$ will be invoked instead, as a mitigation procedure.

\LongVersion{%
	We now describe both experiments Exp1 and Exp2.
}

\paragraph{Experiment Exp1}\label{pExp1}
Given a composite service $\CS$, in order to measure the overhead, we use an execution configuration $E=(M, Q)$ for an adaptive run $\varrun(\CS,\rrmethod{},E)$, and non-adaptive run $\varrun(\CS,\emptyset,E)$.
We have modified the runtime adaptation mechanism for $\rrmethod{}$ so that, if the \rLTC{} of the active state is checked to be unsatisfiable, component service~$\Service_i$ will still be used (instead of~$\Service_i'$).
The purpose for this modification is to make $\varrun(\CS,\rrmethod{},E)$ and $\varrun(\CS,\emptyset,E)$ invoke the same set of component services, so that we can effectively compare the overhead of $\varrun(\CS,\rrmethod{},E)$.

\paratitle{Results} Suppose at round $k$, the times spent for $\varrun(\CS,\rrmethod{},E)$ and $\varrun(\CS,\emptyset,E)$ are $r_{rr}^k \in \grandrplus$ time units and $r_{\emptyset}^k \in \grandrplus$ time units respectively.
The overhead $O_k$ at round $k$ is the time difference between $r_{rr}^k$ and $r_{\emptyset}^k$, \ie{} $O_k=r_{rr}^k - r_{\emptyset}^k$. The average overhead at round $k$ is calculated using \cref{eq:overhead}.
\begin{equation}
Avg.\ overhead=(\sum\limits_{i=1}^k O_i)/k \label{eq:overhead}
\end{equation}

The main source of overhead for runtime adaptation comes from the satisfiability checking with \cref{algo:chkSat}.
We make use of Z3~\cite{conf/tacas/MouraB08} for this purpose.
Other sources of overhead include update of active state in \modSM{}, and communications between \modSM{} and \modRE{}.

\LongVersion{%
	\input{figures/overhead}

	The experiment results can be found in \cref{fig:overheadExp}.
}%
The average overheads of \newsInfoShort, CPS, TBS, and RS after 10,000 rounds are 15\,ms, 21\,ms, 30\,ms, and 23\,ms respectively.
The results convey to us that the additional operations involved in the runtime adaptation, including the satisfiability checking, can be done efficiently.

We further evaluate the overhead on satisfiability checking. \cref{fig:dltcsynthesis} shows the results of satisfiability checking. The average number of satisfiability checking for each round (Avg.\ \#SAT) is calculated using \cref{eq:overheadk} where $N_i$ is the total number of satisfiability checking for $i$-th round and $r$ is the total number of running rounds.
The average time (given in milliseconds) spent on satisfiability checking for each round (Avg.\ SAT\ runtime) is calculated using \cref{eq:overheadR}, where $T_i$ is the time spent on satisfiability checking for $i$-th round. \cref{fig:dltcsynthesis} shows that the satisfiability checking has contributed most of the overhead of runtime adaptation.

\begin{equation}
Avg.\ \#SAT=(\sum\limits_{i=1}^r N_i)/r \label{eq:overheadk}
\end{equation}

\begin{equation}
Avg.\ SAT\ runtime=(\sum\limits_{i=1}^r T_i)/r \label{eq:overheadR}
\end{equation}

\paragraph{Experiment Exp2}
In this second experiment, we measure the improvement for the conformance of global constraints due to \rrmethod{}. Given a composite service $\CS$, an execution configuration $E$, two runs $\varrun(\CS,\rrmethod{},E)$ and $\varrun(\CS,\emptyset,E)$ are conducted for each round of simulation. $N_{se}$ is the number of executions that satisfy global constraints for composite service with \rrmethod{}, and $N_{e}$ is the number of executions that satisfy global constraints for composite service without \rrmethod{}, the improvement is calculated by \cref{eq:improvement}.
\begin{equation}
Improvement=\frac{(N_{se}-N_{e})*100}{N_{e}}\label{eq:improvement}
\end{equation}
\paratitle{Results} The experiment results can be found in~\cref{fig:impExp}.  The $Improvement\ (\%)$ column provides the information of improvement (in percentage) that is calculated using \cref{eq:improvement}.  The \emph{Avg.\ Backup Service} column provides the average number of backup service used (calculated by summing up the number of backup services used for 10,000 rounds, and divided by 10,000).

The decrement of $\paramC{}$ represents the undesired situation where component services have a higher chance for not conforming to their stipulated response time.
This may be due to situations such as poor network conditions.
For each example, the improvement provided by the runtime adaptation increases when $\paramC{}$  decreases. This shows that runtime adaptation improves the conformance of global time requirement.
In addition, the average number of backup service used increases when $\paramC{}$ decreases.
This shows the adaptive nature of runtime adaptation with respect to different $\paramC{}$---more corrective actions are likely to perform when the chances that component services do not satisfy their stipulated response time increase.

\LongVersion{
\paragraph{Answer to research questions}
}
The results in Exp1 and Exp2 have shown that the runtime adaptation has a low overhead, and improves the runtime conformance, especially when the response time conformance threshold of the component services is low.

\LongVersion{
\subsection{Threats to validity}
There are several threats to validity. %
The first threat to validity is due to the fact that we assume a uniform distribution of response time for evaluation of runtime adaptation.
To address this issue, more experimentations with real-world services should be performed.
This said, our experiments provide a first idea that our assumptions are realistic.

The second threat to validity is stemmed from our choice to use a few example values as experimental parameters, that include global constraints and termination thresholds, in order to cope with the combinatorial explosion of options. To resolve this problem, it is clear that even more experimentations with different case studies and experimental parameters should be performed, so that we could further investigate the effects that have not been made obvious by our case studies and experimental parameters.
}

\subsection{Threats to validity}
Our experiments show a good efficiency of our technique for the examples we considered; these are arguably on the smaller side, but we claim that they are non-trivial enough to not be analyzable by hand, and therefore our technique proposes what we believe to be a valuable contribution.

\section{Related work}\label{sec:relatedWork}

\paragraph{Model-based analysis of Web services using LTSs}
Our method is related to using LTSs for model-based analysis of Web services.
In~\cite{DBLP:conf/icse/BianculliGP11}, the authors propose an approach to obtain behavioral interfaces in the form of LTSs of external services by decomposing the global interface specification.
It also has been used in model checking the safety and liveness properties of BPEL services. For example, Foster \emph{et al.}~\cite{FosterThesis,FosterICSE06} transform BPEL process into FSP~\cite{DBLP:books/daglib/0016245}, subsequently using a tool named ``WS-Engineer'' for checking safety and liveness properties.
Simmonds \emph{et al.}~\cite{DBLP:conf/sigsoft/SimmondsBC10} propose a user-guided recovery framework for Web services based on LTSs.
Our work uses LTSs in synthesizing local time requirement.

\paragraph{Constraint synthesis for scheduling problems}
Our work shares common techniques with work for constraint synthesis for scheduling problems.
The use of models such as parametric timed automata (PTAs)~\cite{AHV93} and parametric time Petri nets (PTPNs)~\cite{TLR09} for solving such problems has received recent attention.
In particular, in~\cite{CPR08,LPPRC10,FLMS12}, parametric constraints are inferred, guaranteeing the feasibility of a schedule using PTAs extended with stopwatches (see, \eg{} \cite{AM02}).
In~\cite{ALSD14}, we proposed a parametric, timed extension of CSP, to which we extended the ``inverse method'', a parameter synthesis algorithms preserving the discrete behavior of the system (see, \eg{} \cite{AS13}).
Although PTAs or PTPNs might have been used to encode (part of) the BPEL language, our work is specifically adapted and optimized for synthesizing local timing constraint in the area of service composition.

\paragraph{Finding suitable quality of service} %
Our method is related to the finding of a suitable quality of service (QoS) for the system~\cite{journals/tweb/YuZL07}.
The authors of~\cite{journals/tweb/YuZL07} propose two models for the QoS-based service composition problem: a combinatorial model and a graph model. The combinatorial model defines the problem as a multidimension multichoice 0-1 knapsack problem. The graph model defines the problem as a multiconstraint optimal path problem. A heuristic algorithm is proposed for each model: the WS-HEU algorithm for the combinatorial model and the MCSP-K algorithm for the graph model.
The authors of~\cite{conf/bpm/ArdagnaP05} model the service composition problem as a mixed integer linear problem where constraints of global and local component service can be specified.
The difference with our work is that, in their work, the local constraint is specified, whereas in ours, the local constraint is synthesized.
An approach of decomposing the global QoS to local QoS has been proposed in~\cite{conf/www/AlrifaiR09}. It uses the mixed integer programming (MIP) to find optimal decomposition of QoS constraint.
However, the approach only concerns simple sequential composition of Web services method calls, without considering complex control flows and timing requirements.

\paragraph{Response time estimation}
Our approach is also related to response time estimation.
In~\cite{DBLP:conf/valuetools/KraftPCD09}, the authors propose to use linear regression method and a maximum likelihood technique for estimating the service demands of requests based on their response times.
\cite{DBLP:journals/internet/Menasce04} has also discussed the impact of slow services on the overall response time on a transaction that use several services concurrently.
Our work is focused on decomposing the global requirement into local requirement, which is orthogonal to these works.
Our work~\cite{DBLP:conf/fm/LiTC14} complements with this work by proposing a method on building LTCs that under-approximate the \dLTC{} of a composite service.
The under-approximated LTCs consisting of independent constraints over components, which can be used to improve the design, monitoring and repair of component-based systems under time requirements.

\paragraph{Service monitoring}
Our method is related to service monitoring. Moser \emph{et al}.~\cite{DBLP:conf/www/MoserRD08} present VieDAME, a non-intrusive approach to monitoring. VieDAME allows monitoring of BPEL composite service on quality of service attributes, and existing component services are replaced based on different replacement strategies.
They make use of the aspect-oriented approach (AOP); therefore the VieDAME engine adapter could be interwoven into the BPEL runtime engine at runtime.
Baresi \emph{et al.}~\cite{DBLP:journals/tse/BaresiG11} propose an idea of self-supervising BPEL processes by supporting both service monitoring and recovery for BPEL processes.
They propose using Web Service Constraint Language (WSCoL) to specify the monitoring directives to indicate properties that need to hold during the runtime of composite service.
They also make use of the AOP approach to integrate their monitoring adapters with the BPEL runtime engine.
Our work is orthogonal to the aforementioned works, as we do not assume any particular service monitoring framework for monitoring the composite service, and those methods can be used to aid our monitoring approach, as discussed in \cref{ssec:rrruntimemonitoring}.
Our previous work~\cite{DBLP:conf/www/TanCA0LD14} proposes an automated approach based on a genetic algorithm to calculate the recovery plan that can guarantee the satisfaction of functional properties of the composite service after recovery.

\paragraph{Service selection}
In~\cite{ZBDKS03,zeng2004qos},
the authors present an approach that makes use of global planning to search dynamically for the best
services component for service composition. Their approach involves the use of mixed integer programming (MIP) techniques to find the optimal selection of component services.
Ardagna \emph{et al.}~\cite{ACMPP07} extend the MIP methods to include local constraints. Cardellini \emph{et al.}~\cite{CCGPM09} propose a methodology to integrate different adaptation mechanisms for combining concrete services to an abstract service, in order to achieve a greater flexibility in facing different operating environments.
Our work is orthogonal to aforementioned works, as it does not assume particular formulation of the MIP problems.

Although the method in aforementioned works efficiently for small case studies, it suffers from scalability problems when the size of the case studies becomes larger, since the time required grows exponentially with the size of problem.
To address this problem, Yu \emph{et al.}~\cite{journals/tweb/YuZL07} propose a heuristic algorithm that could be used to find a near-optimal solution.
The authors proposed two QoS compositional models, a combinatorial model and a graph model. The time complexity for the combinatorial model is polynomial, while the time complexity for the graph model is exponential. However the algorithm does not scale with the increasing number of web services. To address this problem, Alrifai \emph{et al.} present an approach that pruned the search space using skyline methods, and they make use of a hierarchical $k$-means clustering method~\cite{KMeansClustering} for representative selection. The work of Alrifai \emph{et al.} is the closest to ours.
Despite its reasonable performance, a limitation for the MIP approach is that it cannot deal with non-linear objective functions or  aggregated constraints.
To address this problem, Canfora \emph{et al.}~\cite{CPEV05} have formulated the problem as a genetic algorithm problem.  Genetic algorithms (GA) are algorithms based on stochastic search methods, that support non-linear objective functions. Two different GAs encodings are proposed in~\cite{CPEV05,zhang2003demand}.
In~\cite{zhang2003demand}, the authors propose to encode the chromosome using binary strings, and every gene is a chromosome representing a service candidate with value 0 (respectively, 1) that represents the unselected (respectively, selected) service. Therefore the length of the genome can be very long, given a large number of service candidates.
In~\cite{CPEV05}, the authors propose to encode the chromosome using an integer value which represents the index of the concrete services stored in an array.
This coding scheme results in shorter chromosomes, and the length of a chromosome is independent of the number of service candidates.
In~\cite{gao2007QoS}, Gao \emph{et al.}\ propose a tree coding scheme to represent the service composition.
They reported a 40\% performance improvement with respect to the single dimension coding scheme used in~\cite{CPEV05}.
This is because the tree coding scheme does not need to recalculate the entire fitness value each time compare to the single-dimension encoding.
Our work does not assume any particular encoding scheme and it can be used with any existing coding techniques.
In~\cite{ai2008penalty}, Ai \emph{et al.}\ proposed an approach extending the GA methods for handling inter-service dependencies and conflicts using a penalty-based genetic algorithm.
Our work does not assume a particular fitness function.
In~\cite{ma2008quick}, Ma \emph{et al.}\ proposed an enhanced initial population policy and an evolution policy based on  population diversity and a relation matrix coding scheme.
They considered all concrete services for each service class starting from the initial population. Different from their approach, we only consider a subset of services with high local utility value from the start, and we progressively add more services later on.
In~\cite{SN20} the problem of functionally equivalent service composition is considered.

\paragraph{Verification of services}
Concerning verification of services, Filieri \emph{et al}.~\cite{DBLP:conf/icse/FilieriGT11} focus on checking the reliability of component (service)-based systems. They make use of Discrete Time Markov Chain (DTMC) to check the reliability of models at runtime.
Our previous works~\cite{DBLP:conf/icfem/ChenT0LPL13,DBLP:conf/icse/ChenT0LD14} develop a tool to verify combined functional and non-functional requirements of Web service composition.
In contrast, the current work focuses on response time: given the global response time of the composite service, we synthesize the response time requirement for component services at design time and refine it at runtime.
Schmieders \emph{et al}.~\cite{DBLP:conf/servicewave/SchmiedersM11}  proposed the SPADE approach.
SPADE invokes the BOGOR model checker to model check the SLAs at design time and at runtime.
Our work is different from theirs in two aspects.
First, we focus on the synthesis of the local time requirement, which is a formal requirement on the response time requirement of component services.
Second, at runtime, \cite{DBLP:conf/servicewave/SchmiedersM11} performs model checking on a given state to check whether an adaptation is needed. In contrast, we have precomputed the constraints for every state at design time.
Therefore, we only require evaluation of constraints by substituting the free variables at runtime, and this allows a more efficient runtime-analysis.

\paragraph{BPEL}

In~\cite{GMJ08} a template-based system is used to reconfigure service composition, using BPEL.
In~\cite{Pautasso09}, service composition using RESTful (Representational State Transfer) is performed using the BPEL extention ``BPEL for REST''.
In~\cite{EFTG10}, a tool based on Services Creation Environments and BPEL is proposed, that also allows translation to Java.

The work~\cite{TBM13} automates the formalization and verification process of BPEL.
It extends the existing spring framework to represent BPEL activities with Java bean, which is subsequently transformed into XML-based formal model (like colored Petri nets) for verification.
The work~\cite{MMKG16} also transforms BPEL services into Probabilistic Labeled Transition Systems, which are then used to conduct probabilistic model checking to verify reliability properties on the BPEL models.
Their word did not consider timing requirements.

\paragraph{Surveys}
Finally, composition of Web services has been recently surveyed.
In \cite{OARFCG15} the larger domain of composition of convergent services is surveyed; however, BPEL is still surveyed in this work.
In \cite{LDB16}, a taxonomy of Web service composition is provided with different directions surveyed such as language, knowledge reuse, automation, tool support, execution platform or target user.

\section{Conclusion and future work}\label{sec:conclusion}

\subsection{Conclusion}
We have presented a model-based approach for synthesizing local time constraints for component services of a composite service~$\CS$, knowing its global time requirement.
Our approach makes use of parameterized timed techniques.
\LongVersion{
The local time constraint can guarantee the satisfaction of the global response time requirement.
Our proposed techniques consist of static and dynamic checking of global time requirement based on the \dLTC{} and \rLTC{} of component services respectively.
}

We first proposed a design-time synthesis algorithm, that utilizes the parametric constraints from the LTS, to synthesize static local time constraint (\dLTC{}) for component services.
The \dLTC{} is then used to select a set of component services that could collectively satisfy the global time requirement in design time.

Then, \LongVersion{during the runtime of a composite service, }we use the runtime information to weaken the \dLTC{}, which becomes the refined local time constraint (\rLTC{}).
In particular, two pieces of runtime information have been leveraged---the execution path that has been taken by the composite service, and the elapsed time of the composite service.
The \rLTC{} is then used to validate whether the composite service can still satisfy the global time requirement at runtime.

As a proof of concept, we have implemented our approach into a tool \ToolBPEL{}, and applied it to four service composition examples.
Our experiments show that \LongVersion{the computation time is always smaller than our previous approach~\cite{TanICSE13}, and that }the runtime refinement leads to an improvement of the global time requirement, with limited overhead.

\subsection{Future work}

We plan to further improve and develop the technique presented in this paper.

\paragraph{General and dedicated optimizations}
First, %
the goal of our work is to propose a full framework for analyzing composition of Web services using parametric timings; therefore, integrating existing state space reduction techniques is perhaps a more practical work, orthogonal to the original goal of our approach.
Nevertheless,
in order to address huge sets of services, one could use efficient state-of-the-art techniques developed for timed systems or parametric timed systems such as (parametric) data difference bound matrices~\cite{HRSV02,QSW17}, efficient L/U-zone abstractions~\cite{HSW16,NPP18}, convex state merging~\cite{AFS13}, integer-hull abstractions~\cite{JLR15,ALR15}, or abstraction-refinement algorithms~\cite{RSM19}.

\paragraph{Soft deadlines}
Second, we will investigate the usage of soft deadlines that allow to run a service with a delay, possibly with an acceptable penalty.

\paragraph{Constraints satisfiability}
Regarding our implementation, the bottleneck seems to be the satisfiability test using Z3; from our experience, switching to a polyhedra library (such as the Parma Polyhedra Library~\cite{BHZ08}) may give better results, and could help render our work scalable.

\paragraph{Uncertain response times}
Our work so far deals with exact response times.
A different approach would be to consider that the response time should be fulfilled with some probability.
In that setting, the goal would be to synthesize the values for the timing parameters such that the response time is indeed below the threshold with a given probability.
To achieve this, we could reuse recent works involving probabilities and timing parameters (\eg{} \cite{JK14,CDKP14}).
An even more challenging problem would be to combine both kinds of parameters (timing parameters and probabilistic parameters), so as to infer the probability under which the response time can be fulfilled.

\paragraph{Statistical model checking}
Finally, when concurrent systems with or without timing constraints are too huge to be analyzed in an exact manner, a recent trend is to propose non-exact techniques, and notably statistical model-checking.
	This technique could be used for compositions of Web services arguably too large to be analyzed in an exact manner.
	Recent techniques developed in the timed setting (\eg{} \cite{DLLMW11,LL16,MNBDLB18}) could be applied to our formalism.

\newcommand{\CCIS}{Communications in Computer and Information Science}
\newcommand{\ENTCS}{Electronic Notes in Theoretical Computer Science}
\newcommand{\FAC}{Formal Aspects of Computing}
\newcommand{\FI}{Fundamenta Informaticae}
\newcommand{\FMSD}{Formal Methods in System Design}
\newcommand{\IJFCS}{International Journal of Foundations of Computer Science}
\newcommand{\IJSSE}{International Journal of Secure Software Engineering}
\newcommand{\IPL}{Information Processing Letters}
\newcommand{\JLAP}{Journal of Logic and Algebraic Programming}
\newcommand{\JLAMP}{Journal of Logical and Algebraic Methods in Programming} %
\newcommand{\JLC}{Journal of Logic and Computation}
\newcommand{\LMCS}{Logical Methods in Computer Science}
\newcommand{\LNCS}{Lecture Notes in Computer Science}
\newcommand{\RESS}{Reliability Engineering \& System Safety}
\newcommand{\STTT}{International Journal on Software Tools for Technology Transfer}
\newcommand{\ToPNoC}{Transactions on Petri Nets and Other Models of Concurrency}
\newcommand{\TSE}{{IEEE} Transactions on Software Engineering}

\ifdefined\VersionAuthor
	\renewcommand*{\bibfont}{\small}
	\printbibliography[title={References}]
\else
	\bibliographystyle{spmpsci}      %
	\bibliography{icse13bpel}
\fi

\clearpage
\appendix
\section{Operational semantics}\label{appendix:semantics}

Set of rules for the transition relation $\hookrightarrow$

Let $mpick=\mpicknew{\Service_i}{P_i}{a_j}{Q_j}$

		\begin{infrule}
			\derive[\rSInv]
			(\Valuation, \lsInvoke{(\Service)}_x, \Constraint, \Delay) \stackrel{\sequence{\rSInv}}{\hookrightarrow} (\Valuation',\lstop, (x=\param_{\Service})\wedge \timelaps{\Constraint}, D+\param_{\Service})
		\end{infrule}

		\begin{infrule}
		\derive[\rRec]
			(\Valuation, \lreceive{(\Service)}_x, \Constraint, \Delay) \stackrel{\sequence{\rRec}}{\hookrightarrow} (\Valuation', \lstop, (x=\param_{\Service})\wedge \timelaps{\Constraint}, D+\param_{\Service})
		\end{infrule}

		\begin{infrule}
			\derive[\rReply]
			(\Valuation, \lreply{(\Service)}_x, \Constraint, \Delay) \stackrel{\sequence{\rReply}}{\hookrightarrow} (\Valuation', \lstop, (x=0)\wedge \timelaps{\Constraint}, D)
		\end{infrule}

		\begin{infrule}
			\derive[\rAInv]
			(\Valuation, \laInvoke{(\Service)}_x, \Constraint, \Delay) \stackrel{\sequence{\rAInv}}{\hookrightarrow} (\Valuation', \lstop,(x=0)\wedge \timelaps{\Constraint},D)
		\end{infrule}

		\begin{infrule}
			\derive[\rPickOne]
			(\Valuation, mpick_x, \Constraint, \Delay) \stackrel{\sequence{(\rPickOne,i)}}{\hookrightarrow} (\Valuation', P_i, (x=\param_i) \wedge~\funIdle(mpick_x) \wedge  \timelaps{\Constraint}, D+\param_i)
		\end{infrule}
		
		\begin{infrule}
			\derive[\rPickTwo]
			(\Valuation, mpick_x, \Constraint, \Delay) \stackrel{\sequence{(\rPickTwo,j)}}{\hookrightarrow} (\Valuation', Q_j,  (x=a_j) \wedge~\funIdle(mpick_x) \wedge  \timelaps{\Constraint}, D+a_j)
		\end{infrule}

		\begin{infrule}
            \Valuation(b)= \varUnitialized
			\derive[\rCondOne]
			(\Valuation, \mconditional{A}{b}{B}, \Constraint, \Delay) \stackrel{\sequence{\rCondOne}}{\hookrightarrow} (\Valuation', A, \Constraint, \Delay)
		\end{infrule}		
        \begin{infrule}
            \Valuation(b)= \varUnitialized
			\derive[\rCondTwo]
			(\Valuation, \mconditional{A}{b}{B}, \Constraint, \Delay) \stackrel{\sequence{\rCondTwo}}{\hookrightarrow} (\Valuation',B, \Constraint, \Delay)
		\end{infrule}
		\begin{infrule}
             \Valuation(b)= true
			\derive[\rCondThree]
			(\Valuation, \mconditional{A}{b}{B}, \Constraint, \Delay) \stackrel{\sequence{\rCondThree}}{\hookrightarrow} (\Valuation', A, \Constraint, \Delay)
		\end{infrule}

		\begin{infrule}
            \Valuation(b)= false
			\derive[\rCondFour]
			(\Valuation, \mconditional{A}{b}{B}, \Constraint, \Delay) \stackrel{\sequence{\rCondFour}}{\hookrightarrow} (\Valuation',B, \Constraint, \Delay)
		\end{infrule}

		\begin{infrule}
			(\Valuation, A, \Constraint, \Delay) \stackrel{\ruleseq}{\hookrightarrow} (\Valuation', A', \Constraint', \Delay'), A'\neq \lstop
			\derive[\rSeqOne]
			(\Valuation, \msequence{A}{B}, \Constraint, \Delay) \stackrel{\ruleseq + \sequence{\rSeqOne}}{\hookrightarrow} (\Valuation', \msequence{A'}{B}, \Constraint', \Delay')
		\end{infrule}

		\begin{infrule}
			(\Valuation, A, \Constraint, \Delay) \stackrel{\ruleseq}{\hookrightarrow} (\Valuation', \lstop, \Constraint', \Delay')
			\derive[\rSeqTwo]
			(\Valuation, \msequence{A}{B}, \Constraint, \Delay) \stackrel{\ruleseq + \sequence{\rSeqTwo}}{\hookrightarrow} (\Valuation', B, \Constraint', \Delay')
		\end{infrule}

\begin{infrule}
			(\Valuation, A, \Constraint, \Delay) \stackrel{\ruleseq}{\hookrightarrow} (\Valuation', A', \Constraint', \Delay') %
			\derive[\rFlowOne]
			(\Valuation, \minterleave{A}{B}, \Constraint, \Delay) \stackrel{\ruleseq + \sequence{\rFlowOne}}{\hookrightarrow} (\Valuation', \minterleave{A'}{B}, \Constraint' \wedge \funIdle(B), \Delay')
		\end{infrule}

		\begin{infrule}
			(\Valuation, B, \Constraint, \Delay) \stackrel{\ruleseq}{\hookrightarrow} (\Valuation', B', \Constraint', \Delay')%
			\derive[\rFlowTwo]
			(\Valuation, \minterleave{A}{B}, \Constraint, \Delay) \stackrel{\ruleseq + \sequence{\rFlowTwo}}{\hookrightarrow} (\Valuation',\minterleave{A}{B'}, \Constraint' \wedge \funIdle(A), \Delay')
		\end{infrule}

{\tiny
\hfill{}Last modification to this document: \today{}.
}

\ifdefined\WithReply
	\input{letter4.tex}
\fi

\end{document}